\theoremstyle{plain}
\newtheorem{theorem}{Theorem}[section]
\newtheorem{lemma}[theorem]{Lemma}
\newtheorem{corollary}[theorem]{Corollary}
\newtheorem{proposition}[theorem]{Proposition}
\theoremstyle{definition}
\newtheorem{definition}[theorem]{Definition}
\theoremstyle{remark}
\newtheorem{remark}{Remark}
\newcommand{\eset}{\emptyset}
\newcommand{\floor}[1]{\lfloor#1\rfloor}
\newcommand{\zag}[1]{\left(#1\right)}
\newcommand{\uglate}[1]{\left[#1\right]}
\newcommand{\aps}[1]{\left|#1\right|}
\newcommand{\interval}[1]{\left<#1\right>}
\newcommand{\skup}[1]{\left\{#1\right\}}
\newcommand{\norma}[1]{\left\|#1\right\|}
\newcommand{\degr}[1]{\deg\left(#1\right)}
\newcommand{\N}{\ensuremath{\mathbb{N}}}
\newcommand{\Z}{\ensuremath{\mathbb{Z}}}
\newcommand{\R}{\mathbb{R}}
\newcommand{\Rd}{\mathbb{R}^{n_1 \times n_2 \times \cdots \times n_d}}
\newcommand{\bm}[1]{\boldsymbol{#1}}
\newcommand{\mbf}[1]{\mathbf{#1}}
\DeclareMathOperator*{\conv}{conv}
\DeclareMathOperator{\rank}{rank}
\DeclareMathOperator{\vecc}{vec}
\DeclareMathOperator{\trace}{trace}
\DeclareMathOperator{\TB}{TH}
\DeclareMathOperator{\LT}{LT}
\DeclareMathOperator{\NLM}{NLM}
\DeclareMathOperator{\LM}{LM}
\DeclareMathOperator{\LCM}{LCM}
\DeclareMathOperator{\LC}{LC}
\DeclareMathOperator{\multideg}{multideg}
\DeclareMathOperator{\tr}{tr}
\begin{document}

\title{Tensor theta norms and low rank recovery}

\author{Holger Rauhut}
\address{RWTH Aachen University, Lehrstuhl C f{\"u}r Mathematik (Analysis), Pontdriesch 10, 52062 Aachen Germany}
\email{rauhut@mathc.rwth-aachen.de}

\author{\v{Z}eljka Stojanac}
\address{RWTH Aachen University, University of Cologne, Institute for Theoretical Physics, Z{\"u}lpicher Stra{\ss}e 77, 50937 Germany}
\email{stojanac@thp.uni-koeln.de}

\date{\today}

\maketitle

\begin{abstract}
We study extensions of compressive sensing and low rank matrix recovery to the recovery of tensors of low rank from incomplete linear
information. While the reconstruction of low rank matrices via nuclear norm minimization is rather well-understand by now, almost no theory is available
so far for the extension to higher order tensors due to various theoretical and computational difficulties arising for tensor decompositions.
In fact, nuclear norm minimization for matrix recovery is a tractable convex relaxation approach, but the extension of the nuclear norm to tensors
is in general NP-hard to compute. In this article, we introduce convex relaxations of the tensor nuclear norm which are computable in polynomial time via 
semidefinite programming. Our approach is based on theta bodies, a concept from computational algebraic geometry which is similar to the one 
of the better known Lasserre relaxations. We introduce polynomial ideals which are 
generated by the second order minors corresponding to different matricizations of the tensor (where the
tensor entries are treated as variables) such that the nuclear norm ball is the convex hull of the algebraic variety of the ideal. The theta body of order $k$ for
such an ideal generates a new norm which we call the $\theta_k$-norm. We show that in the matrix case, these norms reduce to the standard nuclear norm.
For tensors of order three or higher however, we indeed obtain new norms. The sequence of the corresponding unit-$\theta_k$-norm balls converges asymptotically to the unit tensor nuclear norm ball. 
By providing the Gr{\"o}bner basis for the ideals, 
we explicitly give semidefinite programs for the computation of the $\theta_k$-norm and for the minimization of the $\theta_k$-norm under
an affine constraint. Finally, numerical experiments for order-three tensor recovery via $\theta_1$-norm minimization suggest that our approach
successfully reconstructs tensors of low rank from incomplete linear (random) measurements.
\end{abstract}

\noindent
{\bf Keywords}: low rank tensor recovery, tensor nuclear norm, theta bodies, compressive sensing, semidefinite programming, convex relaxation, polynomial ideals, Gr{\"o}bner bases

\medskip

\noindent
{\bf MSC 2010}: 
13P10, 
15A69,
15A60,
52A41,
90C22,
90C25, 
94A20.

\section{Introduction and Motivation}

Compressive sensing predicts that sparse vectors can be recovered from underdetermined linear 
measurements via efficient methods such as $\ell_1$-minimization \cite{carota06,do06-2,fora13}. This finding
has various applications in signal and image processing and beyond. It has recently been observed that
the principles of this theory can be transferred to the problem of recovering a low rank matrix from underdetermined
linear measurements. One prominent choice of recovery method consists in minimizing the nuclear norm subject
to the given linear constraint \cite{fa02-2,fapare10}. 
This convex optimization problem can be solved efficiently and recovery results for certain
random measurement maps have been provided, which quantify the minimal number of measurements 
required for successful recovery \cite{fapare10,care09,capl11,beeiflgrli09,gr11,kurate14}. 

There is significant interest in going one step further and to extend the theory to the recovery of low rank tensors 
(higher-dimensional arrays) from incomplete linear measurements. Applications include image and video inpainting \cite{limuwoye09},
reflectance data recovery \cite{limuwoye09} (e.g.\ for use in photo-realistic raytracers),
machine learning \cite{aubiporo13}, and seismic data processing \cite{krsa12}.
Several approaches have 
already been introduced \cite{limuwoye09,gareya11,gohumuwr14,rascst13,rascst15}, but 
unfortunately, so far, for none of them a completely satisfactory theory is available. 
Either the method is not tractable \cite{yuzh14}, or
no (complete) rigorous recovery results quantifying the minimal number of measurements are available \cite{gareya11,limuwoye09,rascst13,rascst15,Karlsson.Kressner.Uschmajew:2014,da2013hierarchical,kressner2014low}, or the available
bounds are highly nonoptimal \cite{gohumuwr14,DuarteBaraniuk11,liu2014generalized}. 
For instance, the computation (and therefore, also the minimization) of the tensor nuclear norm (\cite{defant1992tensor,ryan2002introduction,wong1979schwartz}) for higher order tensors
is in general NP-hard \cite{frli14} -- 
nevertheless, some recovery results for tensor completion via nuclear norm minimization are available in \cite{yuzh14}. 
Moreover, versions of iterative hard thresholding for various tensor formats have been introduced \cite{rascst13,rascst15}.
This approach leads to a computationally tractable algorithm, which empirically works well. 
However, only a partial analysis based on the tensor restricted isometry property has been provided, which so far only
shows convergence under a condition on the iterates 
that cannot be checked a priori. Nevertheless, the tensor restricted isometry property (TRIP)
has been analyzed for certain random measurement maps \cite{rascst13,rascst15,DBLP:journals/corr/RauhutSS16}. These near optimal bounds on the number of measurements
ensuring the TRIP, however, provide only a hint on how many measurements are required because the link between
the TRIP and recovery is so far only partial \cite{rascst15,DBLP:journals/corr/RauhutSS16}.

This article introduces 
a new approach for tensor recovery based on convex relaxation. The idea is to further relax the nuclear norm
in order to arrive at a norm which can be computed (and minimized under a linear constraint) in polynomial time.
The hope is that the new norm is only a slight relaxation and possesses very similar properties as the nuclear norm.
Our approach is based on  theta bodies, a concept from computational algebraic
geometry \cite{lo79-1,gopath10,blpath13} which is similar to the better known Lasserre relaxations \cite{la10}. 
We arrive at a whole family of convex bodies (indexed by a polynomial degree), which form
convex relaxations of the unit nuclear norm ball. 
The resulting norms are called  theta norms. The corresponding unit norm balls are nested and contain the unit nuclear norm ball.
Even more, the sequence of the unit-$\theta_k$-norm balls converges asymptotically to the unit tensor nuclear norm ball.
They can be computed by semidefinite optimization, and also the minimization of the $\theta_k$-norm subject to a linear constraints is a semidefinite program (SDP) whose solution can be computed in
polynomial time -- the complexity growing with $k$.

The basic idea for the construction of these new norms is to define polynomial ideals, where each variable corresponds to an entry of the tensor,
such that its algebraic variety consists of the rank-one tensors of unit Frobenius norm. The convex hull of this set is the tensor nuclear norm ball.
The ideals that we propose are generated by the minors of order two of all matricizations of the tensor (or at least of a subset of the possible matricizations)
together with the polynomial corresponding to the squared Frobenius norm minus one. Here, a matricization denotes a matrix which
is generated from the tensor by combining several indices to a row index, and the remaining indices to a column index. 
In fact, all such minors being zero simultaneously 
means that the tensor has rank one. The $k$-theta body of the ideal 
corresponds then to a relaxation of the convex hull of its algebraic variety, i.e., to a further relaxation of the tensor nuclear norm. The index $k \in \N$
corresponds to a polynomial degree involved in the construction of the theta bodies (some polynomial is required to be $k$-sos modulo the ideal, see below), 
and $k = 1$ leads to the largest theta body in a family
of convex relaxations.

We will show that for the matrix case (tensors of order $2$), our approach does not lead to new norms. All resulting theta norms are rather equal to the
matrix nuclear norm. This fact suggests that the theta norms in the higher order tensor case
are all natural generalizations of the matrix nuclear norm. 

We derive the corresponding semidefinite programs explicitly and present numerical experiments which show that
$\theta_1$-norm minimization successfully recovers tensors of low rank from few random linear measurements.
Unfortunately, a rigorous theoretical analysis of the recovery performance of $\theta_k$-minimization is not yet available
but will be the subject of future studies.

\subsection{Low rank matrix recovery}

Before passing to tensor recovery, we recall some basics on matrix recovery. 
Let $\mbf{X} \in \mathbb{R}^{n_1 \times n_2}$ of rank at most $r \ll \min\{n_1,n_2\}$, and suppose we are given linear measurements
\[
\mbf{y} = \mathcal{A}(\mathbf{X}),
\]
where $\mathcal{A} : \R^{n_1 \times n_2} \to \R^m$ is a linear map with $m \ll n_1  n_2$.
Reconstructing $\mathbf{X}$ from $\mbf{y}$ amounts to solving an underdetermined linear system. Unfortunately,
the rank minimization problem of computing the minimizer of
\[
\min_{\mbf{Z} \in \R^{n_1 \times n_2}} \operatorname{rank}(\mathbf{Z})   \quad \mbox{ subject to } \mathcal{A}(\mathbf{Z}) = \mbf{y}
\]
is NP-hard in general. As a tractable alternative, the convex optimization problem
\begin{equation}\label{nucl:norm}
\min_{\mbf{Z} \in \R^{n_1 \times n_2}} \| \mathbf{Z} \|_*
\quad \mbox{ subject to } \mathcal{A}(\mathbf{Z}) = \mbf{y}
\end{equation}
has been suggested \cite{fa02-2,fapare10}, where the nuclear norm $\|\mbf{Z} \|_* = \sum_j \sigma_j(\mbf{Z})$  is the sum of the singular values
of $\mbf{Z}$. This problem can be solved efficiently by various methods \cite{bova04}. For instance, it can be reformulated 
as a semidefinite program \cite{fa02-2}, but splitting methods may be more efficient \cite{pabo13,toyu10}. 

While it is hard to analyze low rank matrix recovery for deterministic measurement maps, optimal bounds are available
for several random matrix constructions. If $\mathcal{A}$ is a Gaussian measurement map, i.e., 
\[
\mathcal{A}(\mbf{X})_j = \sum_{k,\ell} \mathcal{A}_{j k\ell} X_{k\ell}, \quad j \in [m] := \{1,2,\hdots,m\},
\]
where the $\mathcal{A}_{jkl}$, $j \in [m], k \in [n_1], \ell \in [n_2]$, are independent mean-zero, variance one Gaussian random
variables, then a matrix $\mbf{X}$ of rank at most $r$ can be reconstructed exactly 
from $\mbf{y} = \mathcal{A}(\mbf{X})$  via nuclear norm minimization \eqref{nucl:norm} with probability at least $1-e^{-c m}$ 
provided that 
\begin{equation}\label{bound:mtx:case}
m \geq C r n, \qquad n = \max\{n_1,n_2\},
\end{equation}
where the constants $c, C> 0$ are universal \cite{capl11,chparewi10}. 
Moreover, the reconstruction is stable under passing to only approximately low rank matrices and under adding 
noise on the measurements. Another interesting measurement map corresponds to the matrix completion problem \cite{care09,cata10-2,gr11,JMLR:v16:chen15b}, 
where the measurements are randomly chosen entries of the matrix $\mathbf{X}$.
Measurements taken as Frobenius inner products with rank-one matrices are studied in \cite{kurate14}, and arise in the phase
retrieval problem as special case \cite{castvo13}. Also here, $m \geq C rn$ (or $m \geq Cr n \log(n)$ for certain structured measurements) 
is sufficient for exact recovery. 

\subsection{Tensor recovery}

An order-$d$ tensor (or mode-$d$-tensor) is an element $\mbf{X} \in \R^{n_1 \times n_2 \times \cdots \times n_d}$ 
indexed by $[n_1]\times [n_2] \times \cdots \times [n_d]$. Of course, the case $d=2$ corresponds to matrices.
For $d \geq 3$, several notions and computational tasks become much more involved than for the matrix case.
	Already the notion of rank requires some clarification, and in fact, several different definitions are available, see for instance \cite{SAPM:SAPM192761164,hitchcock-rank-1927,landsbergtensors,grasedyck2011introduction}. 
We will mainly work with the  canonical rank or CP-rank in the following.
A $d$th-order tensor $\mbf{X} \in \R^{n_1 \times n_2 \times \cdots \times n_d}$ 
is of rank one if there exist vectors $\mbf{u}^1 \in \R^{n_1},  \mbf{u}^2 \in \R^{n_2}, \ldots, \mbf{u}^d \in \R^{n_d}$ such that 
$\mbf{X}=\mbf{u}^1 \otimes \mbf{u}^2 \otimes \cdots \otimes \mbf{u}^d$ or elementwise
$$
X_{i_1 i_2 \ldots i_d}=u_{i_1}^1 u_{i_2}^2 \cdots u_{i_d}^d. 
$$
The CP-rank (or canonical rank and in the following just rank) of a 
tensor $\mbf{X} \in \R^{n_1 \times n_2 \times \cdots \times n_d}$, similarly as in the matrix case, 
is the smallest number of rank-one tensors that sum up to $\mbf{X}$.

Given a linear measurement map $\mathcal{A} : \R^{n_1 \times \cdots \times n_d} \to \R^m$ (which can represented as
a $(d+1)$th-order tensor), our aim is to recover a tensor $\mbf{X} \in \R^{n_1 \times \cdots \times n_d}$ from
$\mbf{y} = \mathcal{A}(\mbf{X})$ when $m \ll n_1 \cdot n_2 \cdots n_d$. The matrix case $d=2$ suggests to consider
minimization of the tensor nuclear norm for this task,
\[
\min_{\mbf{Z}} \|\mbf{Z}\|_* \quad \mbox{ subject to } {\mathcal{A}}(\mbf{Z}) = \mbf{y},
\]
where the nuclear norm is defined as
\begin{align*}
\norma{\mbf{X}}_*=\min \Big\{\sum_{k=1}^r \aps{c_k}:  \mbf{X} =&\sum_{k=1}^r c_k \mbf{u}^{1,k} \otimes \mbf{u}^{2,k} \otimes \cdots \otimes \mbf{u}^{d,k}, r \in \mathbb{N},  \\
&\hspace{2cm}\norma{\mbf{u}^{i,k}}_{\ell_2}=1,  
i \in \uglate{d}, k \in \uglate{r}\Big\}.
\end{align*}
Unfortunately, in the tensor case, computing the canonical rank of a tensor, as well as computing the nuclear norm of a tensor is NP-hard in general, see \cite{ha90-2,hili13,frli14}. Let us nevertheless mention that some theoretical results for tensor
recovery via nuclear norm minimization are contained in \cite{yuzh14}.

We remark that, unlike in the matrix scenario, the tensor rank and consequently  the tensor nuclear norm are dependent on the choice of base field, see for example \cite{Brylinski,de2008tensor,frli14}. In other words, the rank (and the nuclear norm) of a given  tensor with real entries depends on whether we regard it as a real tensor or as a complex tensor. In this paper, we focus only on tensors with real-valued entries, i.e., we work over the field $\R$.

The aim of this article is to introduce relaxations of the tensor nuclear norm, based on  theta bodies, which
is both computationally tractable and whose minimization allows for exact recovery of low rank tensors from incomplete
linear measurements. 

Let us remark that one may reorganize (flatten) a low rank tensor $\mbf{X} \in \R^{n \times n \times n}$ 
into a low rank matrix $\tilde{\mbf{X}} \in \R^{n \times n^2}$ 
and simply apply concepts from matrix recovery. However, the bound \eqref{bound:mtx:case} on the required number of
measurements then reads
\begin{equation}\label{m:bound:weak}
m \geq C r n^2.
\end{equation}
Moreover, it has been suggested in \cite{gareya11,hakato10,limuwoye09} to minimize the sum of nuclear norms of 
the unfoldings (different reorganizations of the tensor 
as a matrix) subject to the linear constraint matching the measurements. Although this seems to be a reasonable approach at first sight,
it has been shown in \cite{elfajaoy12}, that it cannot work with less measurements than stated by the estimate in \eqref{m:bound:weak}. This is essentially due to the fact that the tensor structure is not represented. That is, instead of solving a tensor nuclear norm minimization problem under the assumption that the tensor is of low rank, the matrix nuclear norm minimization problem is being solved under the assumption that a particular matricization of a tensor is of low rank.

Bounds for a version of the restricted isometry property for certain tensor formats in \cite{DBLP:journals/corr/RauhutSS16} suggest that
\[
m \geq C r^2 n
\]
measurements should be sufficient when working directly with the tensor structure -- precisely, this bound
uses the  tensor train format \cite{os11}. (Possibly, the term $r^2$ may even be lowered to $r$ when using the ``right'' tensor format.) 
However, connecting the
restricted isometry property in a completely 
satisfactory way with the success of an efficient tensor recovery algorithm is still open. (Partial results are contained
in \cite{DBLP:journals/corr/RauhutSS16}.)
In any case, this suggests that one should exploit the tensor structure of the problem rather than reducing to a matrix
recovery problem in order to recover a low rank tensor using the minimal number of measurements.
Of course, similar considerations apply to tensors of order higher than three, where the difference between
the reduction to the matrix case and working directly with the tensor structure will become even stronger.

Unlike in the previously mentioned contributions, we consider the canonical tensor rank and the corresponding tensor nuclear norm, which respects the tensor structure. Even more, it is expected that the bound on the minimal number of measurements needed for low rank tensor recovery via tensor nuclear norm minimization is optimal, see also \cite{yuzh14}, where tensor completion via tensor nuclear norm minimization has been considered. Unfortunately, it is in general NP-hard to solve this optimization problem (since it is NP-hard to compute the tensor nuclear norm). To overcome this difficulty, in this paper, we provide the tensor $\theta_k$-norms -- the new tensor norms which can be computed via semidefinite programming. These norms are tightly related to the tensor nuclear norm. That is, the unit $\theta_k$-norm balls (which are defined for $k\in \mathbb{N}$) satisfy 
\begin{align*}
\skup{\mbf{X}: \norma{\mbf{X}}_{\theta_1}\leq 1} \supseteq \cdots  \supseteq \skup{\mbf{X}: \norma{\mbf{X}}_{\theta_{k}}\leq 1}& \supseteq \skup{\mbf{X}: \norma{\mbf{X}}_{\theta_{k+1}}\leq 1} \\& \supseteq \cdots \supseteq \skup{\mbf{X}: \norma{\mbf{X}}_{*}\leq 1}.
\end{align*}
In particular, we show that in the matrix scenario all $\theta_k$-norms coincide with the matrix nuclear norm. In case of order-$d$ tensors ($d \geq 3$), we prove that the sequence of the unit-$\theta_k$-norm balls converges asymptotically to the unit tensor nuclear norm ball. Next, we provide numerical experiments on low rank tensor recovery via $\theta_1$-norm minimization. 
We provide numerical experiments for $\theta_1$-minimization that indicate that this is a very promising approach for low rank tensor recovery. However, we note that standard solvers for semidefinite programs only allow us to test our method on small to moderate size problems. Nevertheless, it is likely that specialized efficient algorithms can be developed. Indeed, recall that $\theta_k$-norms all coincide with the matrix nuclear norm and the state-of-the-art algorithms allow us computing the nuclear norm of matrices of large dimensions. This suggests the possibility that new algorithms could be developed which would allow us to apply our method on larger tensors. Thus, this paper presents the first step in a new convex optimization approach to low rank tensor recovery.

\subsection{Some notation}

We write vectors with small bold letters, matrices and tensors with capital bold letters and sets with capital calligraphic letters. The cardinality of a set $\mathcal{S}$ is denoted by $\aps{\mathcal{S}}$.

For a matrix $\mbf{A} \in \R^{m \times n}$ and subsets $\mathcal{I} \subset \uglate{m}$, $\mathcal{J} \subset \uglate{n}$ 
the submatrix of $\mathbf{A}$ with columns indexed by $\mathcal{I}$ and rows indexed by $\mathcal{J}$ is denoted
by $\mbf{A}_{\mathcal{I}, \mathcal{J}}$.
A set of all order-$k$ minors of $\mbf{A}$ is of the form
$$ \skup{\det(\mbf{A}_{\mathcal{I},\mathcal{J}}): \mathcal{I} \subset \uglate{m}, \mathcal{J} \subset \uglate{n}, \aps{\mathcal{I}}=\aps{\mathcal{J}}=k }.$$
The Frobenius norm of a matrix $\mbf{X} \in \R^{m \times n}$ is given as
$$\norma{\mbf{X}}_F= \sqrt{\sum_{i=1}^m \sum_{j=1}^n X_{ij}^2}=\sqrt{\sum_{i=1}^{\min \{m,n\}} \sigma_i^2},$$
where the $\sigma_i$ list the singular values of $\mbf{X}$. The nuclear norm is given by 
$\norma{\mbf{X}}_*=\sum_{i=1}^{\min \{m,n\}} \sigma_i$. It is well-known that its unit ball is the convex
hull of all rank-one matrices of unit Frobenius norm.

The vectorization of a tensor $\mbf{X} \in \R^{n_1 \times n_2 \times \cdots \times n_d}$ is denoted by
$\vecc(\mbf{X}) \in\R^{n_1 n_2 \cdots n_d}$. The ordering of the elements in $\vecc(\mbf{X})$ 
is not important as long as it remains consistent. Fibers are a higher order analogue of matrix rows and columns. For $k \in \uglate{d}$, the mode-$k$ fiber of a $d$th-order tensor is obtained by fixing every index except for the $k$-th one. 
The Frobenius norm of a $d$th-order tensor $\mbf{X} \in \R^{n_1 \times n_2 \times \cdots \times n_d}$ is defined as
$$ \norma{\mbf{X}}_F=\sqrt{\sum_{i_1=1}^{n_1} \sum_{i_2=1}^{n_2}  \cdots \sum_{i_d=1}^{n_d} X_{i_1 i_2 \cdots i_d}^2}.$$
Matricization (also called flattening) is the operation that transforms a tensor into a matrix. More precisely, 
for a $d$th-order tensor $\mbf{X} \in \R^{n_1 \times n_2 \times \cdots \times n_d}$ and an ordered subset 
$\mathcal{S} \subseteq \uglate{d}$, an $\mathcal{S}$-matricization 
$\mbf{X}^{\mathcal{S}} \in \R^{\prod_{k \in \mathcal{S}} n_k \times \prod_{\ell \in \mathcal{S}^c}n_{\ell}}$ is defined as
$${X}^{\mathcal{S}}_{(i_k)_{k \in \mathcal{S}}, (i_\ell)_{\ell \in \mathcal{S}^c}}={X}_{i_1 i_2 \ldots i_d}, $$
i.e., the indexes in the set $\mathcal{S}$ define the rows of a matrix and the indexes in the set 
$\mathcal{S}^c=\uglate{d}\backslash \mathcal{S}$ define the columns. 
For a singelton set $\mathcal{S}=\{i\}$, for $i \in \uglate{d}$, we call the $\mathcal{S}$-matricization the $i$-th unfolding. 
Notice that every $\mathcal{S}$-matricization of a rank-one tensor is a rank-one matrix. Conversely, if every 
$\mathcal{S}$-matricization of a tensor is a rank-one matrix, then the tensor is of rank one. This is even true, if all unfoldings of a tensor
are of rank one.

We often use MATLAB notation. Specifically, for a $d$th-order tensor 
$\mbf{X} \in \R^{n_1 \times n_2 \times \cdots \times n_d}$, we write $\mbf{X}(:,:,\ldots,:,k) $ for the $(d-1)$-order subtensor 
in $\R^{n_1 \times \cdots \times n_{d-1}}$ obtained by fixing the last index $\alpha_d$ to $k$.
For simplicity, the subscripts $\alpha_1\alpha_2\cdots \alpha_d$ and $\beta_1\beta_2 \cdots \beta_d $ will often be denoted by $\bm{\alpha}$ and $\bm{\beta}$, respectively. In particular, instead of writing $x_{\alpha_1\alpha_2\ldots \alpha_d}x_{\beta_1\beta_2\ldots \beta_d}$, we often just write $x_{\bm{\alpha}}x_{\bm{\beta}}$. Below, we will use the grevlex
ordering of monomials indexed by subscripts $\bm{\alpha}$, which in particular requires to define an ordering for such subscripts.
We make the agreement that $x_{11\ldots 11} > x_{11 \ldots 12} > \cdots > x_{11\ldots 1 n_d} > x_{111 \ldots 21}> \ldots > x_{n_1 n_2 \ldots n_d}$. 

\subsection{Structure of the paper}

In Section~\ref{IntrodTheta} we will review the basic definition and properties of theta bodies.
Section~\ref{sec:matrix-case} considers the matrix case. We introduce a suitable polynomial ideal whose algebraic variety is the 
set of rank-one unit Frobenius norm matrices. We discuss the corresponding $\theta_k$-norms and show that they all coincide with
the matrix nuclear norm. The case of $2 \times 2$-matrices is described in detail. In Section~\ref{sec:tensorTH} we pass to the tensor case and discuss first
the case of order-three tensors. We introduce a suitable polynomial ideal, provide its reduced Gr{\"o}bner basis and define the corresponding
$\theta_k$-norms. We additionally show that considering matricizations corresponding to the TT-format will lead to the same polynomial ideal and thus to the same $\theta_k$-norms. 
The general $d$th-order case is discussed at the end of Section~\ref{sec:tensorTH}. Here, we define the 
polynomial ideal $J_d$ which corresponds to the set of all possible matricizations of the tensor. We show that a certain set of order-two minors forms 
the reduced Gr{\"o}bner basis for this ideal, which is key for defining the $\theta_k$-norms. We additionally show that polynomial ideals corresponding to different tensor formats (such as TT format or Tucker/HOSVD format) coincide with the ideal $J_d$ and consequently, they  lead to the same $\theta_k$-norms.
In Section~\ref{Section:Convergence} we discuss the convergence of the sequence of the unit-$\theta_k$-norm balls to the unit tensor nuclear norm ball. 
Section~\ref{sec:complexity} briefly discusses the polynomial runtime of the algorithms for computing and minimizing the $\theta_k$-norms showing
that our approach is tractable. Numerical experiments for low rank recovery of third-order tensors are presented in Section~\ref{sec:numerics}, which
show that our approach successfully recovers a low rank tensor from incomplete Gaussian random measurements.
Appendix~\ref{sec:appA} discusses some background from computer algebra (monomial orderings and Gr{\"o}bner bases) that
is required throughout the main body of the article.

\section{Theta bodies} \label{IntrodTheta}

As outlined above, we will introduce new tensor norms as relaxations of the nuclear norm in order to come up
with a new convex optimization approach for low rank tensor recovery. Our approach builds on theta bodies, a recent concept from computational algebraic geometry, which is similar to Lasserre relaxations \cite{la10}.
In order to introduce it, we first discuss the necessary basics from computational commutative algebra. For more information, 
we refer to \cite{colios05,colios07} and to the appendix.

For  a non-zero polynomial $f=\sum_{\bm{\alpha}}a_{\bm{\alpha}} {\mbf{x}}^{\bm{\alpha}}$  in $\R\uglate{\mbf{x}}=\R\uglate{x_1,x_2,\ldots,x_n}$ and  a monomial order~$>$,  
we denote
\begin{enumerate}[noitemsep]
\item[a)] the multidegree of $f$ by
$ \multideg\zag{f}=\max\zag{\bm{\alpha} \in \Z_{\geq 0}^n: a_{\bm{\alpha}} \neq 0} $,
\item[b)] the leading coefficient of $f$ by
$ \LC\zag{f}=a_{\multideg\zag{f}} \in \R$,
\item[c)] the leading monomial of $f$ by 
$ \LM\zag{f}=\mbf{x}^{\multideg\zag{f}}$,
\item[d)] the leading term of $f$ by
$ \LT\zag{f}=\LC\zag{f} \LM\zag{f}.$
\end{enumerate}
Let $J \subset  \R\uglate{\mbf{x}}$ be a polynomial ideal. 
Its real algebraic variety 
is the set of all points in $\mbf{x} \in \R^n$ where all polynomials in the ideal vanish, i.e.,
$$\nu_{\R}\zag{J}=\{\mbf{x} \in \R^n: f(\mbf{x})=0, \, \text{ for all } f \in J \}.$$
By Hilbert's basis theorem \cite{colios07} every polynomial ideal in $\R\uglate{\mbf{x}}$ has a finite generating set.
Thus, we may assume that $J$ is generated by a set $\mathcal{F}=\skup{f_1,f_2,\ldots,f_k}$ of polynomials in $\R\uglate{\mbf{x}}$ and write
$$J=\interval{f_1,f_2,\ldots,f_k}=\interval{\skup{f_i}_{i \in \uglate{k}}} \quad \text{or simply} \quad J=\interval{\mathcal{F}}.$$
Its real algebraic variety is the set
$$\nu_{\R}\zag{J}=\{\mbf{x} \in \R^n: f_i(\mbf{x})=0 \, \text{ for all } i \in \uglate{k} \}.$$
Throughout the paper, $\R\uglate{\mbf{x}}_k$ denotes the set of polynomials of degree at most $k$. A degree one polynomial
is also called linear polynomial.
A very useful certificate for positivity of polynomials is contained in the following definition \cite{gopath10}.

\begin{definition}
Let $J$ be an ideal in $\R\uglate{\mbf{x}}$. A polynomial $f \in \R\uglate{\mbf{x}}$ is \textit{$k$-sos mod $J$} if there exists 
a finite set of polynomials $h_1,h_2,\ldots,h_t \in \R\uglate{\mbf{x}}_k$ such that $f \equiv \sum_{j=1}^t h_j^2$ mod $J$, i.e., if $f-\sum_{j=1}^t h_j^2 \in J$.
\end{definition}

A special case of theta bodies was first introduced by Lov\'{a}sz in \cite{lo79-1} and in full generality they appeared in \cite{gopath10}. 
Later, they have been analyzed in \cite{gouveia2009new,grande2014theta}. The definitions and theorems in the remainder 
of the section are taken from \cite{gopath10}. 

\begin{definition}[Theta body] 
Let $J \subseteq \R\uglate{\mbf{x}}$ be an ideal. For a positive integer $k$, the \textit{$k$-th theta body of $J$} is defined as
\begin{equation*}
\TB_k\zag{J}:=\skup{\mbf{x} \in \R^n : f\zag{\mbf{x}}\geq 0 \text{ for every linear } f \text{ that is $k$-sos mod }J }.
\end{equation*}
We say that an ideal $J \subseteq \R\uglate{\mbf{x}}$ is \textit{$\TB_k$-exact} if $\TB_k\zag{J}$ equals  $\overline{\conv\zag{\nu_{\R}\zag{J}}}$, the closure of the convex hull of $\nu_{\R}\zag{J}$. 
\end{definition}
Theta bodies are closed convex sets, while $\conv\zag{\nu_{\R}\zag{J}}$ may not necessarily be closed and by definition, 
\begin{equation} \label{seqtheta}
\TB_1\zag{J} \supseteq \TB_2\zag{J} \supseteq \cdots \supseteq \conv\zag{\nu_{\R}\zag{J}}.
\end{equation}
The theta-body sequence of $J$ can converge (finitely or asymptotically), if at all, only to 
$\overline{\conv\zag{\nu_{\R}\zag{J}}}$. More on guarantees on convergence can be found in \cite{gopath10,grande2014theta}. However, to our knowledge, none of the existing guarantees apply to the cases discussed below. 

Given any polynomial, it is possible to check whether it is $k$-sos mod $J$ using a Gr{\"o}bner basis and semidefinite programming. However, using this definition in practice requires knowledge of all linear polynomials (possibly infinitely many) that are $k$-sos mod $J$. To overcome this difficulty, we need an alternative description of $\TB_k\zag{J}$ discussed next.

As in \cite{blpath13}, we assume that there are no linear polynomials in the ideal $J$. Otherwise, some variable $x_i$ would be congruent to a linear combination of other variables modulo $J$ and we could work in a smaller polynomial ring $\R\uglate{\mbf{x}^i}=\R\uglate{x_1,x_2,\ldots,x_{i-1},x_{i+1},\ldots,x_n}$. Therefore, $\R\uglate{\mbf{x}}_1/J \cong \R\uglate{\mbf{x}}_1$ and $\{1+J, x_1+J, \ldots, x_n+J\}$ can be completed to a basis $\mathcal{B}$ of $\R\uglate{\mbf{x}}/J$. Recall that the degree of an equivalence class $f+J$, denoted by $\degr{f+J}$, is the smallest degree of an element in the class. We assume that each element in the basis $\mathcal{B}=\{f_i+J\}$ of $\R\uglate{\mbf{x}}/J$ is represented by the polynomial whose degree equals the degree of its equivalence class, i.e., $\degr{f_i+J}=\degr{f_i}$. In addition, we assume that $\mathcal{B}=\skup{f_i+J}$ is ordered so that $f_{i+1} > f_{i}$, where $>$ is a fixed monomial ordering. Further, we define the set $\mathcal{B}_k$ 
\begin{equation*}
\mathcal{B}_k:=\{ f+J \in \mathcal{B}: \deg(f+J) \leq k\}.
\end{equation*}

\begin{definition}[Theta basis]\label{thetabasis}
Let $J \subseteq \R\uglate{\mbf{x}}$ be an ideal. A basis $\mathcal{B}=\{f_0+J, f_1+J, \ldots\}$ of $\R\uglate{\mbf{x}}/J$ is a \textit{$\theta$}-basis if it has the following properties
\begin{enumerate}
\item[1)] $\mathcal{B}_1=\skup{1+J, x_1 +J, \ldots, x_n+J}$,
\item[2)] if $\degr{f_i+J}, \degr{f_j+J} \leq k$ then $f_i f_j + J$ is in the $\R$-span of $\mathcal{B}_{2k}$.
\end{enumerate}
\end{definition}
As in \cite{blpath13,gopath10} we consider only monomial bases $\mathcal{B}$ of $\R\uglate{\mbf{x}}/J$, i.e., bases 
$\mathcal{B}$ such that $f_i$ is a monomial, for all $f_i+J \in \mathcal{B}$.

For determining a $\theta$-basis, we first need to compute the reduced Gr{\"o}bner basis $\mathcal{G}$ of the ideal $J$, see Definitions~\ref{groebner} and \ref{reducedGr}. The set ${\mathcal{B}}$ will satisfy the second property in the definition of the theta basis if the reduced Gr{\"o}bner basis is with respect to an ordering which first compares the total degree. Therefore, throughout the paper we use the graded reverse monomial ordering (Definition~\ref{def:grevlex}) or simply grevlex ordering, although also the graded lexicographic ordering would be appropriate.

A technique to compute a $\theta$-basis $\mathcal{B}$ of $\R\uglate{\mbf{x}}/J$ consists in taking $\mathcal{B}$ to be 
the set of equivalence classes of the standard monomials of the corresponding initial ideal 
$$J_{\text{initial}}=\interval{\skup{\LT(f)}_{f \in J}}=\interval{\skup{\LT(g_i)}_{i \in \uglate{s}}}, $$
where $\mathcal{G}=\interval{g_1,g_2,\ldots,g_s}$ is the reduced Gr{\"o}bner basis of the ideal $J$. In other words, a set $\mathcal{B}=\skup{f_0+J,f_1+J,\ldots}$ will be a $\theta$-basis of $\R\uglate{\mbf{x}}/J$ if it contains all $f_i+J$ such that \begin{enumerate}
\item[1)] $f_i$ is a monomial
\item[2)] $f_i$ is not divisible by any of the monomials in the set $\skup{\LT(g_i): i \in \uglate{s}}$.
\end{enumerate}

The next important tool we need is the  combinatorial moment matrix of $J$. To this end, we
fix a $\theta$-basis $\mathcal{B}=\skup{f_i +J}$ of $\R\uglate{\mbf{x}}/J$ and define $\uglate{\mbf{x}}_{\mathcal{B}_k}$ to be the column vector formed by all elements of $\mathcal{B}_k$ in order. Then $\uglate{\mbf{x}}_{\mathcal{B}_k}\uglate{\mbf{x}}_{\mathcal{B}_k}^T$ is a square matrix indexed by $\mathcal{B}_k$ and its $\zag{i,j}$-entry is equal to $f_i f_j + J$. By hypothesis, the entries of $\uglate{\mbf{x}}_{\mathcal{B}_k}\uglate{\mbf{x}}_{\mathcal{B}_k}^T$ lie in the $\R$-span of $\mathcal{B}_{2k}$. Let $\{\lambda_{i,j}^l\}$ be the unique set of real numbers such that $f_i f_j + J=\sum_{f_l + J \in \mathcal{B}_{2k}}\lambda_{i,j}^l \zag{f_l+J}$.

The theta bodies can be characterized via the combinatorial moment matrix as stated in the next result from
\cite{gopath10}, which will be the basis for computing and minimization the new tensor norm introduced below via
semidefinite programming.

\begin{definition}\label{def:moment}
Let $J, \mathcal{B}$ and $\{\lambda_{i,j}^l\}$ be as above. Let $\mbf{y}$ be a real vector indexed by $\mathcal{B}_{2k}$ with $y_0=1$, where $y_0$ is the first entry of $\mbf{y}$, indexed by the basis element $1+J$. The \textit{$k$-th combinatorial moment matrix} $\mbf{M}_{\mathcal{B}_k}\zag{\mbf{y}}$ of $J$ is the real matrix indexed by $\mathcal{B}_k$ whose $\zag{i,j}$-entry is $\uglate{\mbf{M}_{\mathcal{B}_k}\zag{\mbf{y}}}_{i,j}=\sum_{f_l+J \in \mathcal{B}_{2k}} \lambda_{i,j}^l y_l$. 
\end{definition}

\begin{theorem}\label{closureQ}
The $k$-th theta body of $J$, $\TB_k\zag{J}$, is the closure of
\begin{equation*}
\mbf{Q}_{\mathcal{B}_k}\zag{J}=\pi_{\R^n} \skup{\mbf{y} \in \R^{\mathcal{B}_{2k}} : \mbf{M}_{\mathcal{B}_k}\zag{\mbf{y}} \succeq 0, y_0=1},
\end{equation*}
where $\pi_{\R^n}$ denotes the projection onto the variables $y_1=y_{x_1 + J},\ldots,y_n=y_{x_n + J}$.
\end{theorem}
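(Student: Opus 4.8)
This is the combinatorial--moment--matrix description of theta bodies from \cite{gopath10}; I would reproduce its proof, which rests on one algebraic reformulation followed by a conic--duality argument. The first step is to translate ``$k$-sos mod $J$'' into a semidefinite condition. Fix the chosen monomial $\theta$-basis $\mathcal{B}=\skup{f_i+J}$ with $f_0=1$ and $\skup{f_1+J,\dots,f_n+J}=\skup{x_1+J,\dots,x_n+J}$, and the structure constants $\lambda_{i,j}^l$ determined by $f_if_j+J=\sum_{f_l+J\in\mathcal{B}_{2k}}\lambda_{i,j}^l(f_l+J)$. Since the grevlex order is graded, every $h\in\R\uglate{\mbf{x}}_k$ satisfies $h+J\in\operatorname{span}_{\R}\mathcal{B}_k$; writing each $h_j\equiv\sum_i a_{i,j}f_i\pmod J$ (sum over $\mathcal{B}_k$) and squaring shows that a linear polynomial $f$ is $k$-sos mod $J$ \emph{if and only if} there is a symmetric positive semidefinite matrix $A$ indexed by $\mathcal{B}_k$ with $\uglate{\mbf{x}}_{\mathcal{B}_k}^{T}A\,\uglate{\mbf{x}}_{\mathcal{B}_k}\equiv f\pmod J$. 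Comparing coefficients in $\mathcal{B}_{2k}$ turns this into the affine system $\sum_{i,j}A_{i,j}\lambda_{i,j}^l=\hat f_l$ for all $l$, where $\hat f\in\R^{\mathcal{B}_{2k}}$ is the coefficient vector of $f$ (so $\hat f_l=0$ whenever $f_l+J$ has degree $\geq2$). Introducing the linear map $\Lambda(A):=\big(\sum_{i,j}A_{i,j}\lambda_{i,j}^l\big)_l$ and the cone $\mathcal{S}_+$ of positive semidefinite matrices indexed by $\mathcal{B}_k$, this reads: $f$ is $k$-sos mod $J$ iff $\hat f\in\Lambda(\mathcal{S}_+)$. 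The crucial observation is that the combinatorial moment matrix is precisely the adjoint of $\Lambda$, i.e.\ $\langle\mbf{M}_{\mathcal{B}_k}(\mbf{y}),A\rangle=\langle\mbf{y},\Lambda(A)\rangle$; hence $\skup{\mbf{y}:\mbf{M}_{\mathcal{B}_k}(\mbf{y})\succeq0}=\Lambda(\mathcal{S}_+)^{*}$ is a closed convex cone, and one also records $\uglate{\mbf{M}_{\mathcal{B}_k}(\mbf{y})}_{0,0}=y_0$, since $f_0^2+J=1+J$.

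With this in hand, the inclusion $\overline{\mbf{Q}_{\mathcal{B}_k}(J)}\subseteq\TB_k(J)$ is immediate. Let $x=\pi_{\R^n}(\mbf{y})$ with $y_0=1$ and $\mbf{M}_{\mathcal{B}_k}(\mbf{y})\succeq0$, and let $f$ be any linear polynomial that is $k$-sos mod $J$, with Gram certificate $A\succeq0$. Because $\hat f$ is supported on the coordinates $0,1,\dots,n$, on which $\mbf{y}$ has entries $1,x_1,\dots,x_n$, we get $f(x)=\langle\hat f,\mbf{y}\rangle=\langle\Lambda(A),\mbf{y}\rangle=\langle A,\mbf{M}_{\mathcal{B}_k}(\mbf{y})\rangle\geq0$. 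Thus $x\in\TB_k(J)$, and since $\TB_k(J)$ is closed the inclusion follows.

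For the reverse inclusion $\TB_k(J)\subseteq\overline{\mbf{Q}_{\mathcal{B}_k}(J)}$ I would argue by contraposition: if $x_0\notin\overline{\mbf{Q}_{\mathcal{B}_k}(J)}$, I want a linear polynomial that is $k$-sos mod $J$ and strictly negative at $x_0$, contradicting $x_0\in\TB_k(J)$. The case $\mbf{Q}_{\mathcal{B}_k}(J)=\emptyset$ is handled by an SDP alternative (Farkas) theorem, which produces such a certificate for a negative constant. Otherwise, strictly separating $x_0$ from the nonempty closed convex set $\overline{\mbf{Q}_{\mathcal{B}_k}(J)}$ yields an affine $f(\mbf{x})=b-\langle a,\mbf{x}\rangle$ with $f(x_0)<0$ and $f\geq0$ on $\mbf{Q}_{\mathcal{B}_k}(J)$; equivalently, the system $\mbf{M}_{\mathcal{B}_k}(\mbf{y})\succeq0,\ y_0=1,\ \langle\hat f,\mbf{y}\rangle<0$ is infeasible. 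Using $\uglate{\mbf{M}_{\mathcal{B}_k}(\mbf{y})}_{0,0}=y_0\geq0$ to pass from the affine slice $\skup{y_0=1}$ to the whole feasible cone, and then dualizing via $\skup{\mbf{y}:\mbf{M}_{\mathcal{B}_k}(\mbf{y})\succeq0}=\Lambda(\mathcal{S}_+)^{*}$, one obtains $\hat f\in\overline{\Lambda(\mathcal{S}_+)}$; exploiting the strict inequality $f(x_0)<0$ to perturb into $\Lambda(\mathcal{S}_+)$ itself then gives a genuine $A\succeq0$ with $\Lambda(A)=\hat f$, i.e.\ $f$ is $k$-sos mod $J$. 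This contradiction completes the proof.

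The only genuinely delicate point is the dualization in the last paragraph: the image $\Lambda(\mathcal{S}_+)$ of the positive semidefinite cone under a linear map need not be closed, and the associated SDP may have a duality gap or an unattained optimum, so a bare application of conic duality only certifies that the separating $f$ is a \emph{limit} of polynomials that are $k$-sos mod $J$. This is exactly why the theorem identifies $\TB_k(J)$ with the \emph{closure} of $\mbf{Q}_{\mathcal{B}_k}(J)$ rather than with $\mbf{Q}_{\mathcal{B}_k}(J)$ itself, and the perturbation argument needed to land genuinely inside $\Lambda(\mathcal{S}_+)$ (and thus get an exact sos certificate) is the part requiring care. Everything else is routine finite-dimensional linear algebra together with bookkeeping with the $\theta$-basis.
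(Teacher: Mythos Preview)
The paper does not prove this theorem; it is quoted verbatim from \cite{gopath10} and used as a black box. Your proposal goes further and sketches the actual proof, and the outline you give---Gram--matrix reformulation of ``$k$-sos mod $J$'', identification of $\mbf{M}_{\mathcal{B}_k}$ as the adjoint of the coefficient map $\Lambda$, the easy inclusion $\overline{\mbf{Q}_{\mathcal{B}_k}(J)}\subseteq\TB_k(J)$ by pairing, and the reverse inclusion by separation plus conic duality---is the standard one and is essentially correct.

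One point to tighten: in the contrapositive step you obtain a linear $f$ with $f\geq 0$ on $\mbf{Q}_{\mathcal{B}_k}(J)$ and $f(x_0)<0$, and you want to conclude that $\hat f$ lies in (the closure of) $\Lambda(\mathcal{S}_+)$. Passing from the slice $\{y_0=1\}$ to the full cone $\{\mbf{y}:\mbf{M}_{\mathcal{B}_k}(\mbf{y})\succeq 0\}$ requires you to check that $\langle\hat f,\mbf{y}\rangle\geq 0$ also on the face $\{y_0=0\}$, not just on $\{y_0>0\}$; and your subsequent ``perturb into $\Lambda(\mathcal{S}_+)$'' must respect the constraint that the perturbed polynomial remains \emph{linear} (i.e.\ the perturbation stays in the subspace where all degree-$\geq 2$ coordinates vanish), which is not automatic from density in $\overline{\Lambda(\mathcal{S}_+)}$ alone. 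Both issues are genuinely handled in \cite{gopath10} (and you flag them as ``the part requiring care''), but as written your last paragraph glosses over them. Since the paper itself simply cites the result, your sketch already exceeds what is needed here.
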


\ifnocut
Algorithm \ref{CompTB1} shows a step-by-step procedure for computing $\TB_k(J)$.

\begin{algorithm}
\caption{\normalsize Algorithm for computing $\TB_k(J)$}
\label{CompTB1}
\begin{algorithmic}
\REQUIRE  An ideal $J \in \R\uglate{\mbf{x}}=\R\uglate{x_1,x_2,\ldots,x_n}$.
\STATE \textbf{Compute} the reduced Gr{\"o}bner basis for the ideal $J$  
\STATE \textbf{Compute} a $\theta$-basis $\mathcal{B}=\mathcal{B}_1 \cup \mathcal{B}_2 \cup \ldots=\{f_0+J,f_1+J,\ldots\}$ of $\R\uglate{\mbf{x}}/J$ (see Definition \ref{thetabasis})
\STATE \textbf{Compute} the combinatorial moment matrix $\mbf{M}_{\mathcal{B}_k}\zag{\mbf{y}}$: 
\begin{enumerate}[noitemsep,nolistsep]
\item $\uglate{\mbf{x}}_{\mathcal{B}_k}=\skup{\text{all elements of }\mathcal{B}_k \text{ in order}}$
\item $\zag{\mbf{X}_{\mathcal{B}_k}}_{i,j}=\zag{\uglate{\mbf{x}}_{\mathcal{B}_k} \uglate{\mbf{x}}_{\mathcal{B}_k}^T}_{i,j}=f_if_j+J=\sum_{f_l+J \in \mathcal{B}_{2k}}\lambda_{i,j}^l \zag{f_l+J}$
\item $\uglate{\mbf{M}_{\mathcal{B}_{k}}\zag{\mbf{y}}}_{i,j}=\sum_{f_l+J \in \mathcal{B}_{2k}} \lambda_{i,j}^l y_l$ 
\end{enumerate} 
\ENSURE $\TB_k\zag{J}$ is the closure of 
\begin{center}
$  \mbf{Q}_{\mathcal{B}_k}\zag{J}=\pi_{\R^n}\skup{\mbf{y} \in \R^{\mathcal{B}_{2k}}: \mbf{M}_{\mathcal{B}_k}\zag{\mbf{y}}\succeq 0, y_0=1}.$ 
\end{center}
\medskip
\end{algorithmic}
\end{algorithm}
\fi
\section{The matrix case}
\label{sec:matrix-case}

As a start, we consider the matrix nuclear unit norm ball 
and provide hierarchical relaxations via theta bodies. 
The $k$-th relaxation defines a matrix unit $\theta_k$-norm ball with the property 
$$
\norma{\mbf{X}}_{\theta_k} \leq \norma{\mbf{X}}_{\theta_{k+1}} \quad \text{ for all } \mbf{X} \in \R^{m \times n}  \text{ and all } k \in \N .
$$ 
However, we will show that all these $\theta_k$-norms coincide with the matrix  nuclear norm.

The first step in computing hierarchical relaxations of the unit nuclear norm ball consists in finding 
a polynomial ideal $J$ such that its algebraic variety (the set of points for which the  ideal vanishes) 
coincides with the set of all rank-one, unit Frobenius norm matrices
\begin{equation}\label{variety:rankone} 
\nu_{\R}(J) = \skup{\mbf{X}  \in \R^{m \times n}: \norma{\mbf{X}}_F=1, \, \rank\zag{\mbf{X}}=1}.
\end{equation}
Recall that the convex hull of this set is the nuclear norm ball.
The following lemma states the elementary fact that a non-zero matrix is a rank-one matrix if and only if all its minors of order two are zero. 

For notational purposes, we define the following polynomials in $\R\uglate{\mbf{x}}=\R[x_{11},x_{12},$ $\ldots,x_{mn}]$
\begin{align}
 g(\mbf{x})=\sum_{i=1}^m \sum_{j=1}^n x_{ij}^2-1 \text{ and }& f_{ijkl}(\mbf{x})=x_{il}x_{kj}-x_{ij}x_{kl} \nonumber\\ & \quad \text{ for } 1\leq i<k \leq m, \, 1\leq j <l \leq n. \label{def:f,g}
\end{align}

\begin{lemma}\label{rankone}
  Let $\mbf{X} \in \R^{m \times n} \backslash \skup{\mbf{0}}$. Then $\mbf{X}$ is a rank-one, unit Frobenius norm matrix if and only if
 \begin{equation}\label{matrixrank1}
 \mbf{X} \in \mathcal{R}:=\{\mbf{X}: g(\mbf{X})=0 \mbox{ and } f_{ijkl}(\mbf{X})=0 \text{ for all } i<k, j<l  \}.
 \end{equation}
\end{lemma}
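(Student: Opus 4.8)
The plan is to prove both directions of the equivalence, treating the rank-one condition and the unit Frobenius norm condition separately since the polynomial $g$ handles the norm and the polynomials $f_{ijkl}$ handle the rank.

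\medskip

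First I would establish the direction that a rank-one, unit Frobenius norm matrix lies in $\mathcal{R}$. If $\mbf{X}$ has rank one, write $\mbf{X} = \mbf{u}\mbf{v}^T$ for some $\mbf{u} \in \R^m$, $\mbf{v} \in \R^n$, so that $X_{ij} = u_i v_j$. Then for any $i<k$, $j<l$ we compute directly
\[
f_{ijkl}(\mbf{X}) = X_{il}X_{kj} - X_{ij}X_{kl} = u_i v_l u_k v_j - u_i v_j u_k v_l = 0 ,
\]
so all second-order minors vanish. Since $\norma{\mbf{X}}_F = 1$ we also have $g(\mbf{X}) = \sum_{i,j} X_{ij}^2 - 1 = 0$. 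Hence $\mbf{X} \in \mathcal{R}$.

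\medskip

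For the converse, suppose $\mbf{X} \in \mathcal{R} \setminus \skup{\mbf{0}}$. The condition $g(\mbf{X}) = 0$ immediately gives $\norma{\mbf{X}}_F = 1$, so it only remains to show $\rank(\mbf{X}) = 1$. Since $\mbf{X} \neq \mbf{0}$, pick an entry $X_{i_0 j_0} \neq 0$. The vanishing of all $2\times 2$ minors means every $2 \times 2$ submatrix $\mbf{X}_{\{i,k\},\{j,l\}}$ is singular; I would use this to show that each row of $\mbf{X}$ is a scalar multiple of row $i_0$. Concretely, for any row $k$ and any column $l$, the minor formed by rows $i_0, k$ and columns $j_0, l$ vanishing gives $X_{i_0 l} X_{k j_0} = X_{i_0 j_0} X_{k l}$, hence $X_{kl} = (X_{k j_0}/X_{i_0 j_0}) X_{i_0 l}$ for all $l$. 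Writing $\lambda_k = X_{k j_0}/X_{i_0 j_0}$, this says row $k$ equals $\lambda_k$ times row $i_0$, so $\mbf{X} = \bm{\lambda}\,(\text{row }i_0)$, a rank-one matrix (it is nonzero, so the rank is exactly one). A small bookkeeping remark: the definition of $f_{ijkl}$ only ranges over ordered index pairs $i<k$, $j<l$, so when $k < i_0$ or $l < j_0$ one must use the minor with the indices in the correct order; the resulting identity is the same up to sign, so this causes no real difficulty.

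\medskip

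I do not expect any genuine obstacle here — this is the standard elementary fact that a matrix has rank at most one iff all its $2\times 2$ minors vanish, combined with the trivial observation that $g$ encodes the Frobenius-norm normalization. The only point requiring a little care is the index-ordering convention in \eqref{def:f,g}, namely handling the cases where the chosen pivot row/column is not the smaller index, which is purely a matter of inserting signs and does not affect the argument.
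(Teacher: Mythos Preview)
Your proof is correct and follows essentially the same approach as the paper: the forward direction is identical, and for the converse both arguments reduce to the classical fact that vanishing of all $2\times 2$ minors forces rank at most one. The only cosmetic difference is that the paper shows the \emph{columns} are pairwise linearly dependent (writing $X_{ml}\,\mbf{X}_{\cdot j} - X_{mj}\,\mbf{X}_{\cdot l} = \mbf{0}$), while you pick a nonzero pivot entry and show the \emph{rows} are all scalar multiples of a fixed row; your pivot argument is arguably a bit cleaner since it handles the degenerate cases more explicitly.
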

\begin{proof}
 If $\mbf{X} \in \R^{m \times n}$ is a rank-one matrix with $\|\mbf{X}\|_F=1$, then by definition there exist two vectors $\mbf{u} \in \R^m$ and $\mbf{v} \in \R^n$ such that $X_{ij}=u_i v_j$ for all $i \in \uglate{m}$, $j \in \uglate{n}$ and $\norma{\mbf{u}}_2=\norma{\mbf{v}}_2=1$.
 Thus
 \begin{align*}
& X_{ij}X_{kl}-X_{il}X_{kj}= u_i v_j u_k v_l - u_i v_l u_k v_j = 0 \\  \text{and} \quad &\sum_{i=1}^m\sum_{j=1}^n X_{ij}^2=\sum_{i=1}^m u_i^2 \sum_{j=1}^n v_j^2=1.
 \end{align*}
For the converse, let $\mbf{X}_{\cdot i}$ represent the $i$-th column of a matrix $\mbf{X} \in \mathcal{R}$. 
Then, for all $j,l \in \uglate{n}$ with $j<l$, it holds
 \begin{equation*}
 X_{ml} \cdot \mbf{X}_{\cdot j} - X_{mj} \cdot \mbf{X}_{\cdot l}= 
 \begin{bmatrix}
 X_{1j}X_{ml} - X_{1l}X_{mj} \\
 X_{2j}X_{ml} -X_{2l}X_{mj} \\
 \vdots \\
 X_{mj}X_{ml} -X_{mj}X_{ml}
 \end{bmatrix}= \mbf{0},
  \end{equation*}
since $X_{ij} X_{ml}=X_{il}X_{mj}$ for all $i \in \uglate{m-1}$ by definition of $\mathcal{R}$. 
 Thus, the columns of the matrix $\mbf{X}$ span a space of dimension one, i.e., the matrix $\mbf{X}$ is a rank-one matrix. From $\sum_{i=1}^m \sum_{j=1}^n X_{ij}^2-1=0$ it follows that the matrix $\mbf{X}$ is  normalized, i.e., $\norma{\mbf{X}}_F=1$.
 \end{proof}
It follows from Lemma \ref{rankone} that the set of rank-one, unit Frobenius norm matrices coincides with
the algebraic variety $\nu_{\R}\zag{J_{M_{mn}}}$ for the ideal $J_{M_{mn}}$ generated by the polynomials $g$ and $f_{ijkl}$, i.e.,
\begin{align}
& J_{M_{m n}} = \interval{\mathcal{G}_{M_{m n}}} \quad \mbox{ with } \nonumber\\
&\mathcal{G}_{M_{m n}}=\skup{g(\mbf{x})} \cup \{f_{ijkl}(\mbf{x}): 1 \leq i < k \leq m, \, 1\leq j < l \leq n \}. \label{J_{M_{m n}}}
\end{align}
Recall that the convex hull of the set $\mathcal{R}$ in \eqref{matrixrank1} forms the unit nuclear norm ball 
and by definition of the theta bodies, 
$$ \overline{\conv\zag{\nu_{\R}\zag{J_{M_{mn}}}}} \subseteq \cdots \subseteq \TB_{k+1}\zag{J_{M_{mn}}} \subseteq \TB_k\zag{J_{M_{mn}}} \subseteq \cdots \subseteq \TB_1\zag{J_{M_{mn}}}.$$ 
Therefore, the theta bodies form closed, convex hierarchical relaxations of the matrix nuclear norm ball. 
In addition, the theta body $\TB_k\zag{J_{M_{mn}}}$ is symmetric, $\TB_k\zag{J_{M_{mn}}} = - \TB_k\zag{J_{M_{mn}}}$. Therefore, it defines 
a unit ball of a norm that we call the \textit{$\theta_k$-norm}.

The next result shows that the generating set of the ideal $J_{M_{mn}}$ introduced above is a Gr{\"o}bner basis.  
  \begin{lemma} \label{Mgroebner}
 The set $\mathcal{G}_{M_{mn}}$ forms the reduced Gr{\"o}bner basis of the ideal $J_{M_{mn}}$ with respect to the grevlex order. 
  \end{lemma}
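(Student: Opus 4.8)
The plan is to verify the two defining properties of a reduced Gröbner basis directly. First I would establish that $\mathcal{G}_{M_{mn}}$ is a Gröbner basis by applying Buchberger's criterion: it suffices to show that every $S$-polynomial of a pair of generators reduces to zero modulo $\mathcal{G}_{M_{mn}}$. With respect to grevlex (using the agreed ordering on subscripts), the leading term of $f_{ijkl} = x_{il}x_{kj} - x_{ij}x_{kl}$ is $\LT(f_{ijkl}) = x_{il}x_{kj}$ (since $i<k$, $j<l$ make $x_{il}x_{kj}$ grevlex-larger than $x_{ij}x_{kl}$ when the row/column splitting is "crossed"), and $\LT(g) = x_{11}^2$. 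The key observation is that the leading terms of the $f_{ijkl}$ are pairwise \emph{relatively prime} in most cases — two crossed products $x_{il}x_{kj}$ and $x_{i'l'}x_{k'j'}$ share a variable only in special configurations — and Buchberger's first criterion says $S$-polynomials of pairs with coprime leading terms reduce to zero automatically. So the work reduces to the finitely many overlapping cases: pairs of $f$'s sharing exactly one variable in their leading monomials, and pairs involving $g$.

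Second I would handle the overlapping cases. For two minors $f_{ijkl}$ and $f_{i'j'k'l'}$ whose leading monomials share a variable, the $S$-polynomial should reduce to an $\R$-linear combination (with monomial coefficients) of other $2\times 2$ minors in $\mathcal{G}_{M_{mn}}$; this is the polynomial-identity shadow of the classical Plücker-type relations among $2\times 2$ minors of a generic matrix, and one checks each overlap pattern (same row pair, same column pair, one shared row, one shared column) reduces correctly. For the pairs $(g, f_{ijkl})$: since $\LT(g) = x_{11}^2$ and $\LT(f_{ijkl}) = x_{il}x_{kj}$, these are coprime unless $x_{11}$ appears, i.e.\ unless $i=1$ and $l=1$ — but $l>j\ge 1$ forces $l\ge 2$, so $x_{11}$ never divides $\LT(f_{ijkl})$, hence $\LT(g)$ and $\LT(f_{ijkl})$ are always coprime and Buchberger's criterion is satisfied trivially for all such pairs. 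Thus only the minor-minor overlaps require genuine checking.

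For the "reduced" part, I would verify: (i) each generator has leading coefficient $1$ — immediate from the definitions in \eqref{def:f,g}; and (ii) for each $g_i \in \mathcal{G}_{M_{mn}}$, no monomial of $g_i$ is divisible by $\LT(g_j)$ for any $j \neq i$. For $f_{ijkl}$, its two monomials are $x_{il}x_{kj}$ and $x_{ij}x_{kl}$; neither can be divisible by $x_{11}^2 = \LT(g)$ (they are squarefree products of distinct variables when $i<k$, $j<l$), and neither can be divisible by the leading monomial $x_{i'l'}x_{k'j'}$ of another minor without forcing $\{i',k',j',l'\}$ to coincide with $\{i,k,j,l\}$ up to the index ordering, hence $f_{i'j'k'l'} = f_{ijkl}$ — so this needs a short combinatorial argument about when one crossed/uncrossed product divides another. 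For $g$: its monomials are $x_{ij}^2$ and the constant $1$; $x_{ij}^2$ is divisible by $\LT(f_{i'j'k'l'}) = x_{i'l'}x_{k'j'}$ only if that monomial is itself a square, but $x_{i'l'} \ne x_{k'j'}$ since $i'<k'$, so no monomial of $g$ is reducible either.

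The main obstacle I expect is the bookkeeping in the minor-minor $S$-polynomial reductions: getting the grevlex leading terms of the $f_{ijkl}$ right for every index configuration, and then exhibiting, for each overlap type, the explicit standard-form representation of the $S$-polynomial as a combination of the $f$'s that reduces to zero. This is essentially the statement that the $2\times 2$ minors of a generic matrix, with the squared-Frobenius-norm-minus-one polynomial adjoined, form a Gröbner basis — a fact closely related to known results on determinantal ideals (e.g.\ Sturmfels' work on Gröbner bases of determinantal ideals) — so I would either cite such a result for the minors and only check the interaction with $g$ by hand, or carry out the (finite, but somewhat tedious) case analysis in full. Everything else is routine verification.
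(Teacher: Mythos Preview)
Your proposal is correct and follows essentially the same route as the paper: apply Buchberger's criterion, observe that $\LT(g)=x_{11}^2$ is coprime to every $\LT(f_{ijkl})=x_{il}x_{kj}$ so only minor--minor overlaps need checking, then carry out the case analysis on those overlaps (the paper does one representative case explicitly and leaves the rest as ``similar arguments''), and finally verify the reduced property from $\LC=1$ and a degree-two divisibility argument. Your treatment of the reduced part is in fact more carefully spelled out than the paper's, which simply notes that all leading monomials are degree two and pairwise distinct.
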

 \begin{proof}
 The set $\mathcal{G}_{M_{m n}}$ is clearly a basis for the ideal $J_{M_{m n}}$.
 By Proposition \ref{relprime} in the appendix, we only need to check whether the $S$-polynomial, see Definition \ref{Spoly}, 
 satisfies $S\zag{p,q} \rightarrow_{\mathcal{G}_{M_{mn}}} 0$ for all $p,q \in \mathcal{G}_{M_{m n}}$ whenever the leading monomials $\LM\zag{p}$ and $\LM\zag{q}$ are not relatively prime. Here, 
 $S\zag{p,q}  \rightarrow_{\mathcal{G}_{M_{mn}}} 0$ means that $S\zag{p,q}$ reduces to $0$ modulo 
 ${\mathcal{G}_{M_{mn}}}$, see Definition~\ref{def:reducezero}.

Notice that $\LM\zag{g}=x_{11}^2$ and $\LM\zag{f_{ijkl}}=x_{il}x_{kj}$ are relatively prime, for all $1\leq i < k \leq m$ and $1 \leq j < l \leq n$.  
Therefore, we only need to show that $S(f_{ijkl},f_{\hat{i}\hat{j}\hat{k}\hat{l}}) \rightarrow_{\mathcal{G}_{M_{mn}}} 0$ whenever the leading monomials $\LM(f_{ijkl})$ and $\LM(f_{\hat{i}\hat{j}\hat{k}\hat{l}})$ are not relatively prime.
First we consider 
 \begin{equation*}
 f_{ijkl}(\mbf{x})= x_{il} x_{kj} - x_{ij}x_{kl} \quad \text { and } \quad  f_{i\hat{j}\hat{k}l}(\mbf{x})= x_{il} x_{\hat{k}\hat{j}} - x_{i\hat{j}}x_{\hat{k}l}
 \end{equation*}
 for $1 \leq i < k < \hat{k} \leq m, \, 1 \leq j < \hat{j} <l  \leq n$.
The $S$-polynomial 
is then of the form
\begin{align*}
S(f_{ijkl},f_{i\hat{j}\hat{k}l}) =  x_{\hat{k}\hat{j}}f_{ijkl}(\mbf{x}) - x_{kj}f_{i\hat{j}\hat{k}l}(\mbf{x})&= - x_{ij}x_{kl}x_{\hat{k}\hat{j}} + x_{i\hat{j}}x_{\hat{k}l}x_{kj}\\&=x_{\hat{k}l} f_{ijk\hat{j}}(\mbf{x}) - x_{ij} f_{k\hat{j}\hat{k}l}(\mbf{x}) \in J_{M_{mn}}
\end{align*}
so that $S(f_{ijkl},f_{i\hat{j}\hat{k}l})  \rightarrow_{\mathcal{G}_{M_{mn}}} 0$.
The remaining cases are treated with similar arguments. 

In order to show that $\mathcal{G}_{M_{mn}}$ is a reduced Gr{\"o}bner basis (see Definition \ref{reducedGr}), 
we first notice that $\LC(f)=1$ for all $f \in \mathcal{G}_{M_{m n}}$. In addition, the leading monomial of $f \in \mathcal{G}_{M_{mn}}$ is always of degree two and there are no two different polynomials $f_i,f_j \in \mathcal{G}_{M_{m n}}$ such that $\LM(f_i)=\LM(f_j)$. Therefore, $\mathcal{G}_{M_{mn}}$ is the reduced Gr{\"o}bner basis of the ideal $J_{M_{mn}}$ with respect to the grevlex order.
\end{proof}

The Gr{\"o}bner basis  $\mathcal{G}_{M_{m n}}$ of $J_{M_{mn}}=\interval{\mathcal{G}_{M_{mn}}}$ yields the $\theta$-basis of $\R[\mbf{x}]/J_{M_{mn}}$.
For the sake of simplicity, we only provide its elements up to degree two,
\begin{align*}
\mathcal{B}_1&=\skup{1+J_{M_{mn}}, x_{11}+J_{M_{mn}}, x_{12}+J_{M_{mn}},  \ldots, x_{mn}+J_{M_{mn}} } \\
\mathcal{B}_2&=\mathcal{B}_1 \cup \skup{x_{ij}x_{kl}+J_{M_{mn}}: \zag{i,j,k,l} \in \mathcal{S}_{\mathcal{B}_2}},
\end{align*}
where $\mathcal{S}_{\mathcal{B}_2}=\skup{\zag{i,j,k,l}: 1 \leq i \leq k \leq m, 1 \leq j \leq l \leq n} \backslash \zag{1,1,1,1}$. 
Given the $\theta$-basis, the theta body $\TB_k(J_{M_{mn}})$ is well-defined. We formally introduce an associated norm next.
\begin{definition} \label{def:theta}
The matrix \textit{$\theta_k$-norm}, denoted by $\norma{\cdot}_{\theta_k}$, is the norm induced by the $k$-theta body $\TB_k\zag{J_{M_{mn}}}$, i.e.,
\begin{equation*}
\norma{\mbf{X}}_{\theta_k}=\inf \skup{r: \mbf{X} \in r\TB_k\zag{J_{M_{mn}}}}.
\end{equation*}
\end{definition}
The $\theta_k$-norm can be computed with the help of Theorem~\ref{closureQ}, i.e., as 
\[
\norma{\mbf{X}}_{\theta_k} = \min t \quad \mbox{ subject to } \mbf{X} \in t \mbf{Q}_{\mathcal{B}_k}(J_{M_{mn}}).
\]
Given the moment matrix $\mbf{M}_{\mathcal{B}_k}[\mbf{y}]$ associated with $J_{M_{mn}}$, this minimization program is equivalent to the semidefinite program
\begin{equation}\label{matrix:sdp}
\min_{t \in \R , \mbf{y} \in \R^{\mathcal{B}_k}} t \quad \mbox{ subject to } \quad  \mbf{M}_{\mathcal{B}_k}[\mbf{y}] \succcurlyeq 0, y_0 = t, \mbf{y}_{\mathcal{B}_1} = \mbf{X}.
\end{equation}
The last constraint might require some explanation. The vector $\mbf{y}_{\mathcal{B}_1}$ denotes the restriction
of $\mbf{y}$ to the indices in $\mathcal{B}_1$, where the latter can be identified with the set $[m] \times [n]$ indexing the matrix entries.
Therefore, $\mbf{y}_{\mathcal{B}_1} = \mbf{X}$ means componentwise $y_{x_{11}+J} = X_{11}, y_{x_{12}+J} = X_{12}, \hdots, y_{x_{mn} + J} = X_{mn}$.
For the purpose of illustration, 
we focus on the $\theta_1$-norm in $\R^{2 \times 2}$ in 
Section~\ref{matrix2x2} below, and provide a step-by-step procedure for building the corresponding semidefinite program in \eqref{matrix:sdp}.

\ifnocut
Notice that the number of elements in $\mathcal{B}_1$ is $mn+1$, and in $\mathcal{B}_2 \backslash \mathcal{B}_1$ is $\frac{m\cdot(m+1)}{2} \cdot \frac{n \cdot (n+1)}{2} -1\sim \frac{\zag{mn}^2}{2}$, i.e., the number of elements of the $\theta$-basis restricted to the degree $2$ scales polynomially in the total number of matrix entries $mn$. 
Therefore, the computational complexity of the SDP in \eqref{matrix:sdp} is polynomial in $mn$.
\fi

We will show next that the theta body $\TB_1(J)$ and hence, all $\TB_k(J)$ for $k \in \N$, coincide with the nuclear norm ball.
To this end, the following lemma provides expressions for the boundary of the matrix nuclear unit norm ball.

\begin{lemma}\label{trnucn<p}
Let $\mathcal{O}_{c}$ ($\mathcal{O}_r$) denote the set of all matrices $\mbf{M} \in \R^{n \times m}$ with orthonormal columns (rows), i.e., $\mathcal{O}_c=\skup{\mbf{M} \in \R^{n \times m}: \mbf{M}^T\mbf{M}=\mbf{I}_m}$ and $\mathcal{O}_r=\skup{\mbf{M} \in \R^{n \times m}: \mbf{M}\mbf{M}^T=\mbf{I}_n}$. Then
\begin{equation}\label{nuclear}
\skup{\mbf{X} \in \R^{m \times n}: \norma{\mbf{X}}_*\leq 1}=\skup{\mbf{X} \in \R^{m \times n}:\tr\zag{\mbf{M}\mbf{X}} \leq 1, \text{ for all } \mbf{M} \in \mathcal{O}_{c} \cup \mathcal{O}_{r}}.
\end{equation}
\end{lemma}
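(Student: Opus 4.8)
The plan is to establish the identity in \eqref{nuclear} by a double inclusion, using the duality between the nuclear norm and the operator norm together with an extreme-point argument.

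First I would recall the well-known duality $\norma{\mbf{X}}_* = \sup\skup{\tr(\mbf{M}^T\mbf{X}) : \norma{\mbf{M}}_{2\to 2}\leq 1}$, where $\norma{\cdot}_{2\to 2}$ is the spectral (operator) norm; equivalently, $\norma{\mbf{X}}_*\leq 1$ if and only if $\tr(\mbf{M}^T\mbf{X})\leq 1$ for all $\mbf{M}$ with $\norma{\mbf{M}}_{2\to 2}\leq 1$. (One has to be slightly careful about transposes: since we range over $\mbf{M}\in\R^{n\times m}$ and write $\tr(\mbf{M}\mbf{X})$ with $\mbf{X}\in\R^{m\times n}$, the relevant pairing is indeed the Frobenius inner product of $\mbf{X}$ with $\mbf{M}^T$, and $\norma{\mbf{M}}_{2\to2}=\norma{\mbf{M}^T}_{2\to2}$, so this causes no trouble.) Hence the right-hand side of \eqref{nuclear} is contained in the left-hand side as soon as $\mathcal{O}_c\cup\mathcal{O}_r$ consists of matrices of operator norm $\leq 1$: for $\mbf{M}\in\mathcal{O}_c$ we have $\mbf{M}^T\mbf{M}=\mbf{I}_m$, so all singular values of $\mbf{M}$ equal $1$ and $\norma{\mbf{M}}_{2\to2}=1$; likewise for $\mbf{M}\in\mathcal{O}_r$. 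This gives the inclusion ``$\supseteq$''.

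For the reverse inclusion ``$\subseteq$'', suppose $\norma{\mbf{X}}_*\leq 1$ and let $\mbf{M}\in\mathcal{O}_c\cup\mathcal{O}_r$; I must show $\tr(\mbf{M}\mbf{X})\leq 1$. This is immediate from the duality bound above once we know $\norma{\mbf{M}}_{2\to2}\leq 1$, which we just verified. So in fact both inclusions follow from the single fact that $\mathcal{O}_c\cup\mathcal{O}_r\subseteq\skup{\mbf{M}:\norma{\mbf{M}}_{2\to2}\leq1}$ together with the observation that this subset is ``large enough'' to recover the nuclear norm — and the point of the lemma is precisely that one does not need the full operator-norm ball but only its partial-isometry pieces $\mathcal{O}_c$ and $\mathcal{O}_r$. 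To see sufficiency, take $\mbf{X}$ with singular value decomposition $\mbf{X}=\sum_i\sigma_i\mbf{u}_i\mbf{v}_i^T$ and $\norma{\mbf{X}}_*=\sum_i\sigma_i\leq1$; setting $\mbf{M}_0=\sum_i\mbf{v}_i\mbf{u}_i^T$ one computes $\tr(\mbf{M}_0\mbf{X})=\sum_i\sigma_i=\norma{\mbf{X}}_*$, and $\mbf{M}_0$ is a partial isometry which, when $m\leq n$ (resp.\ $m\geq n$), can be completed to an element of $\mathcal{O}_c$ (resp.\ $\mathcal{O}_r$) without decreasing the trace, because one can extend the partial-isometry so that the added singular directions pair with the kernel of $\mbf{X}$. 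Thus whenever $\norma{\mbf{X}}_*>1$ there is $\mbf{M}\in\mathcal{O}_c\cup\mathcal{O}_r$ with $\tr(\mbf{M}\mbf{X})>1$, which is the contrapositive of the ``$\subseteq$'' direction stated for the open complement, and combined with the elementary ``$\supseteq$'' direction yields \eqref{nuclear}.

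The main obstacle is the completion step: extending the rank-$r$ partial isometry $\mbf{M}_0$ to a genuine co-isometry or isometry of the right shape while keeping $\tr(\mbf{M}\mbf{X})$ unchanged. The cleanest way to handle it is to split into the cases $m\leq n$ and $m\geq n$ and, in each, pad the SVD of $\mbf{X}$ with zero singular values and pad the orthonormal systems $\skup{\mbf{u}_i}$, $\skup{\mbf{v}_i}$ to full orthonormal bases, then define $\mbf{M}$ on the whole space by the same formula summed over the completed bases; the extra terms contribute $0$ to $\tr(\mbf{M}\mbf{X})$ since they pair with vanishing singular values, and the resulting $\mbf{M}$ satisfies $\mbf{M}^T\mbf{M}=\mbf{I}_m$ or $\mbf{M}\mbf{M}^T=\mbf{I}_n$ by orthonormality. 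With that in place the lemma follows.
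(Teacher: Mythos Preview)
Your proof is correct and follows essentially the same route as the paper: spectral--nuclear duality for the easy inclusion, and an SVD-based witness $\mbf{M}$ with $\tr(\mbf{M}\mbf{X})=\norma{\mbf{X}}_*$ for the other. Two minor remarks: your first paragraph mislabels which inclusion follows from $\norma{\mbf{M}}_{2\to2}\le 1$ (that gives $\{\norma{\mbf{X}}_*\le1\}\subseteq\{\tr(\mbf{M}\mbf{X})\le1\ \forall\mbf{M}\}$, not the reverse, though you straighten this out later), and the paper sidesteps your extension step entirely by taking the \emph{reduced} SVD $\mbf{X}=\mbf{U}\overline{\mbf\Sigma}\,\overline{\mbf V}^T$ with $\mbf{U}\in\R^{m\times m}$ already square, so that $\mbf{M}=\overline{\mbf V}\mbf{U}^T$ lies in $\mathcal{O}_c$ directly when $m\le n$.
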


\begin{remark}
Notice that $\mathcal{O}_c=\emptyset$ for $m>n$ and $\mathcal{O}_r=\emptyset$ for $m<n$.
\end{remark}

\begin{proof}
If suffices to treat the case $m \leq n$ because $\norma{\mbf{X}}_*=\norma{\mbf{X}^T}_*$ for all matrices $\mbf{X}$, 
and $\mbf{M} \in \mathcal{O}_r$ if and only if $\mbf{M}^T \in \mathcal{O}_c$. 
Let $\mbf{X} \in \R^{m \times n}$ such that $\norma{\mbf{X}}_*\leq 1$ and let $\mbf{X}=\mbf{U}\mbf{\Sigma}\mbf{V}^T$ be its singular value decomposition. 
For $\mbf{M} \in \mathcal{O}_{c}$, the spectral norm satisfies $\|\mbf{M}\| \leq 1$ and therefore, using that the nuclear norm is the dual
of the spectral norm, see e.g.~\cite[p.~96]{bh96}, 
\begin{align*}
\tr\zag{\mbf{M}\mbf{X}} \leq \|\mbf{M}\| \cdot \|\mbf{X}\|_* \leq \norma{\mbf{X}}_* \leq 1.
\end{align*}
For the converse, let $\mbf{X} \in \R^{m \times n}$ be such that $\tr\zag{\mbf{M}\mbf{X}} \leq 1$, for all $\mbf{M} \in \mathcal{O}_c$. Let  $\mbf{X}=\mbf{U}\overline{\mbf{\Sigma}}\,\overline{\mbf{V}}^T$ denote its reduced singular value decomposition, i.e., $\mbf{U},\overline{\mbf{\Sigma}} \in \R^{m \times m}$ and $\overline{\mbf{V}} \in \R^{n \times m}$ with $\mbf{U}^T\mbf{U}=\mbf{U}\mbf{U}^T=\overline{\mbf{V}}^T\overline{\mbf{V}}=\mbf{I}_m$. 
Since $\mbf{M}:=\overline{\mbf{V}}\mbf{U}^T \in \mathcal{O}_c$, it follows that 
$$ 1\geq \tr(\mbf{MX})=\tr(\overline{\mbf{V}}\mbf{U}^T\mbf{U}\overline{\mbf{\Sigma}}\,\overline{\mbf{V}}^T)=\tr(\overline{\mbf{\Sigma}})=\norma{\mbf{X}}_*.$$
This completes the proof.
\end{proof}
Next, using Lemma \ref{trnucn<p}, we show that the theta body $\TB_1(J)$ equals the nuclear norm ball.
This result is related to Theorem 4.4 in \cite{grande2014theta}.
\begin{theorem}\label{TH1n<p}
The polynomial ideal $J_{M_{m n}}$ defined in \eqref{J_{M_{m n}}} is $\TB_1$-exact, i.e.,
 $$\TB_1\zag{J_{M_{m  n}}}=\conv\zag{\mbf{x} : g(\mbf{x})=0, f_{ijkl}(\mbf{x})=0 \text{ for all }i<k, j<l}.$$
 In other words,
 $$
 \skup{\mbf{X} \in \R^{m \times n}: \mbf{X} \in \TB_1\zag{J_{M_{mn}}}}=\skup{\mbf{X} \in \R^{m \times n}: \norma{\mbf{X}}_* \leq 1}.
 $$
\end{theorem}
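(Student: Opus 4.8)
The plan is to establish the two inclusions $\TB_1(J_{M_{mn}}) \supseteq \{\mbf{X} : \norma{\mbf{X}}_* \leq 1\}$ and $\TB_1(J_{M_{mn}}) \subseteq \{\mbf{X} : \norma{\mbf{X}}_* \leq 1\}$ separately. The first inclusion is essentially automatic from the general theory: by the sequence \eqref{seqtheta} and the fact (recorded after Lemma~\ref{rankone}) that $\overline{\conv(\nu_{\R}(J_{M_{mn}}))}$ is exactly the unit nuclear norm ball, we have $\TB_1(J_{M_{mn}}) \supseteq \overline{\conv(\nu_{\R}(J_{M_{mn}}))} = \{\mbf{X} : \norma{\mbf{X}}_* \leq 1\}$. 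So the real content is the reverse inclusion, i.e.\ that $\TB_1$ is not strictly larger than the nuclear norm ball.

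**The reverse inclusion via linear functionals.**
The natural route is to use Lemma~\ref{trnucn<p}: it suffices to show that for every $\mbf{M} \in \mathcal{O}_c \cup \mathcal{O}_r$, the linear polynomial $1 - \tr(\mbf{M}\mbf{X}) = 1 - \sum_{i,j} M_{ji} x_{ij}$ is nonnegative on $\TB_1(J_{M_{mn}})$. By the definition of the theta body, it is enough to prove that $1 - \tr(\mbf{M}\mbf{X})$ is $1$-sos mod $J_{M_{mn}}$, i.e.\ that there exist affine (degree-$\leq 1$) polynomials $h_1,\ldots,h_t$ with $1 - \tr(\mbf{M}\mbf{X}) - \sum_j h_j^2 \in J_{M_{mn}}$. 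Concretely, I would take $\mbf{M} = \overline{\mbf{V}}\mbf{U}^T$ with $\mbf{U} \in \R^{m\times m}$, $\overline{\mbf{V}} \in \R^{n\times m}$ partial isometries as in the proof of Lemma~\ref{trnucn<p}, write $\mbf{M}\mbf{X}$'s trace as a quadratic-looking expression, and look for an sos decomposition of the form $1 - \tr(\mbf{M}\mbf{X}) \equiv \tfrac12 \sum_{p} \big(\text{linear}_p\big)^2 \pmod{J_{M_{mn}}}$, exploiting the relation $g(\mbf{x}) = \sum_{ij} x_{ij}^2 - 1 \in J_{M_{mn}}$ to absorb the constant $1$, and the minor relations $f_{ijkl} \in J_{M_{mn}}$ to convert cross terms $x_{ij}x_{kl}$ into $x_{il}x_{kj}$ as needed. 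The candidate identity to aim for is something like $1 - \tr(\mbf{M}\mbf{X}) \equiv \tfrac12 \sum_{i,j}\big(x_{ij} - (\mbf{M}^T)_{ij}\big)^2 \pmod{J_{M_{mn}}}$ after using $g$; one then checks that the leftover terms (the degree-two part $\sum_{ij} x_{ij}^2$ minus $1$, and the constant from $\|\mbf{M}\|_F^2 = \tr(\mbf{M}^T\mbf{M})$) lie in $J_{M_{mn}}$ — here $\tr(\mbf{M}^T \mbf{M}) = \tr(\mbf{U}\overline{\mbf{V}}^T\overline{\mbf{V}}\mbf{U}^T) = \tr(\mbf{I}_m) = m$, which does \emph{not} match, so the decomposition has to be adjusted, likely by grouping variables according to the structure of $\mbf{M}$ and only summing the squares that the moment matrix sees at level $k=1$.

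**Where the difficulty lies.**
The main obstacle is producing the explicit $1$-sos-mod-$J$ certificate for $1 - \tr(\mbf{M}\mbf{X})$ for \emph{all} partial isometries $\mbf{M}$ simultaneously, since $\mathcal{O}_c \cup \mathcal{O}_r$ is an infinite (continuous) family; one needs a decomposition whose form is uniform in $\mbf{M}$. A cleaner alternative, which I would pursue if the direct certificate is messy, is to argue on the dual / moment side: take $\mbf{y} \in \mbf{Q}_{\mathcal{B}_1}(J_{M_{mn}})$, so the combinatorial moment matrix $\mbf{M}_{\mathcal{B}_1}[\mbf{y}] \succeq 0$ with $y_0 = 1$, and show that its "matrix part" — the block indexed by the $x_{ij}+J$ — when read off against a test vector built from $\mbf{M} \in \mathcal{O}_c$, yields $\tr(\mbf{M}\mbf{X}) \leq 1$ directly from positive semidefiniteness (a $2\times 2$ principal minor argument across a row/column of the matricization, combined with the trace-one normalization coming from $g \in J$, which forces $\sum_{ij}(\mbf{M}_{\mathcal{B}_1}[\mbf{y}])_{x_{ij}x_{ij}} = 1$). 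This mirrors the structure of the proof of Lemma~\ref{rankone} and of Theorem~4.4 in \cite{grande2014theta} to which the statement is explicitly compared, and it avoids having to guess the sos polynomials; the technical heart is then checking that the $2k=2$ entries of the moment matrix indexed by $\mathcal{B}_2$ satisfy exactly the relations needed for the Cauchy–Schwarz step to close.
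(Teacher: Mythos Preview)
Your overall reduction is exactly the paper's: show that for every $\mbf{M}\in\mathcal{O}_c\cup\mathcal{O}_r$ the linear polynomial $1-\tr(\mbf{M}\mbf{X})$ is $1$-sos mod $J_{M_{mn}}$, and conclude via Lemma~\ref{trnucn<p}. But you have not found the certificate, and your two suggested routes both stall at the same point.

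Your candidate identity $1-\tr(\mbf{M}\mbf{X})\equiv \tfrac12\sum_{i,j}(x_{ij}-(\mbf{M}^T)_{ij})^2$ is not salvageable by ``grouping variables'': the defect is structural, since the constant on the right is $\tfrac12\norma{\mbf{M}}_F^2=\tfrac{\ell}{2}$ with $\ell=\min\{m,n\}$, and no element of $J_{M_{mn}}$ has a nonzero constant term except through $g$, which contributes exactly $-1$. The paper's sos decomposition is of a different shape. It first treats the canonical choice $\mbf{M}=\bigl(\begin{smallmatrix}\mbf{I}_\ell\\ \mbf{0}\end{smallmatrix}\bigr)$ (resp.\ its transpose) and writes
\[
1-\sum_{i=1}^{\ell}x_{ii}\;=\;\tfrac12\Bigl[\bigl(1-\textstyle\sum_i x_{ii}\bigr)^2+\sum_{i<j\le\ell}(x_{ij}-x_{ji})^2+\sum_{\text{extra}}x_{ij}^2\Bigr]\;\bmod\;J_{M_{mn}},
\]
where the ideal absorbs $-g$ and the minor relations $x_{ii}x_{jj}-x_{ij}x_{ji}$; note the crucial square $\bigl(1-\sum_i x_{ii}\bigr)^2$ which your ansatz lacks. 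The general $\mbf{M}$ is then handled not by a new decomposition but by the linear change of variables $x'_{ij}=\sum_k M_{ik}x_{kj}$ (or the column analogue), together with the observation that $J_{M_{mn}}$ is invariant under this substitution because $g$ is preserved by partial isometries and each $x'_{ii}x'_{jj}-x'_{ij}x'_{ji}$ expands as a linear combination of the original $2\times 2$ minors. This invariance step is the missing idea in your plan; without it you are forced to produce a uniform-in-$\mbf{M}$ certificate, which is exactly where you got stuck.

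Your moment-side alternative is not an escape. By SDP duality, exhibiting a test vector in $\R^{\mathcal{B}_1}$ that certifies $\tr(\mbf{M}\mbf{X})\le 1$ from $\mbf{M}_{\mathcal{B}_1}[\mbf{y}]\succeq 0$ is equivalent to exhibiting a $1$-sos-mod-$J$ decomposition of $1-\tr(\mbf{M}\mbf{X})$; a single ``$2\times 2$ principal minor'' gives only a Cauchy--Schwarz bound of the form $|x_{ij}|\le \sqrt{Y_{(ij),(ij)}}$, which combined with $\tr\mbf{Y}=1$ yields $\norma{\mbf{X}}_F\le 1$, not $\norma{\mbf{X}}_*\le 1$. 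You would still need the precise combination of test vectors corresponding to the squares $\bigl(1-\sum_i x'_{ii}\bigr)^2$ and $(x'_{ij}-x'_{ji})^2$ above, together with the minor symmetries of $\mbf{Y}$ coming from the $f_{ijkl}$---in other words, the same two ingredients.
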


\begin{proof}
By definition of $\TB_1(J_{M_{mn}})$, it is enough to show that the boundary of the unit nuclear norm can be written as $1$-sos mod $J_{M_{mn}}$, 
which by Lemma \ref{trnucn<p} means that the polynomial $1-\sum_{i=1}^m\sum_{j=1}^n{x_{ij}M_{ji}}$ is $1$-sos mod $J_{M_{mn}}$ for all $\mbf{M} \in \mathcal{O}_c \cup \mathcal{O}_r$. We start by fixing $\mbf{M}=\begin{pmatrix} \mbf{I}_m \\ \mbf{0} \end{pmatrix}$ in case $m\leq n$ and $\mbf{M}=\begin{pmatrix} \mbf{I}_n & \mbf{0} \end{pmatrix}$ in case $m > n$, where $\mbf{I}_k \in \R^{k \times k}$ is the identity matrix. For this choice of $\mbf{M}$,
we need to show that $1-\sum_{i=1}^{\ell} x_{ii}$ is $1$-sos mod $J_{M_{mn}}$, where $\ell=\min \skup{m,n}$.
Note that
\begin{align*}
1-\sum_{i=1}^{\ell}x_{ii}=&\frac{1}{2}\left[\zag{1-\sum_{i=1}^{\ell} x_{ii}}^2 + \zag{1-\sum_{i=1}^m\sum_{j=1}^n x_{ij}^2}+\sum_{i<j\leq {\ell}}\zag{x_{ij}-x_{ji}}^2 \right.\\ & \left. -2\sum_{i<j\leq {\ell}}\zag{x_{ii}x_{jj}-x_{ij}x_{ji}} +\sum_{i=1}^m\sum_{j=m+1}^n x_{ij}^2 +  \sum_{i=n+1}^m\sum_{j=1}^n x_{ij}^2\right],
\end{align*}
\ifnocut
since
\begin{align*}
\zag{1-\sum_{i=1}^{\ell} x_{ii}}^2=&1-2\sum_{i=1}^{\ell} x_{ii}+\sum_{i=1}^{\ell}\sum_{j=1}^{\ell} x_{ii}x_{jj} \\ &= 1-2\sum_{i=1}^{\ell} x_{ii}+2\sum_{i<j\leq {\ell}}x_{ii}x_{jj} +\sum_{i=1}^{\ell} x_{ii}^2,
\end{align*}
\begin{align*}
1-\sum_{i=1}^m\sum_{j=1}^n x_{ij}^2+\sum_{i=1}^m\sum_{j=m+1}^n x_{ij}^2 \,+ &\sum_{i=n+1}^m \sum_{j=1}^n x_{ij}^2  =1-\sum_{i=1}^{\ell} \sum_{j=1}^{\ell} x_{ij}^2 \\ &=1-\sum_{i<j\leq {\ell}}\zag{x_{ij}^2+x_{ji}^2}-\sum_{i=1}^{\ell} x_{ii}^2,
\end{align*}
and
\begin{align*}
\sum_{i<j\leq {\ell}}\zag{x_{ij}-x_{ji}}^2-\,& 2\sum_{i<j\leq {\ell}}\zag{x_{ii}x_{jj}-x_{ij}x_{ji}}\\=&\sum_{i<j\leq {\ell}} \zag{x_{ij}^2+x_{ji}^2-2x_{ij}x_{ji}-2 x_{ii}x_{jj}+2 x_{ij}x_{ji}}
\\
=&\sum_{i<j\leq {\ell}} \zag{x_{ij}^2+x_{ji}^2}-2\sum_{i<j \leq {\ell}}x_{ii}x_{jj}.
\end{align*}
\fi
Therefore, $1-\sum_{i=1}^{\ell} x_{ii}$ is $1$-sos mod $J_{M_{mn}}$, since the polynomials $1-\sum_{i=1}^{\ell} x_{ii}$, $x_{ij}-x_{ji}$, $x_{ij}$, and $x_{ji}$ are linear  
and the polynomials $1-\sum_{i=1}^m\sum_{j=1}^n x_{ij}^2$ and $2\zag{x_{ii}x_{jj}-x_{ij}x_{ji}}$ are contained in the ideal, for all $i<j\leq {\ell}$.

Next, we define transformed variables
$$x'_{ij}=
\begin{cases} 
\sum_{k=1}^m M_{ik} x_{kj} & \mbox{if } m \leq n, \\
\sum_{k=1}^n x_{ik} M_{kj} & \mbox{if } m> n. 
\end{cases}
$$
Since  $x'_{ij}$ is a linear combination of $\{ x_{kj}\}_{k=1}^{m} \cup \{ x_{ik}\}_{k=1}^{n}$, for every $i \in \uglate{m}$ and $j \in \uglate{n}$,
linearity of the polynomials $1-\sum_{i=1}^{\ell} x'_{ii}$, $x'_{ij}-x'_{ji}$,  $x'_{ij}$, and $x'_{ji}$ is preserved, for all $i<j$. It remains to show that the ideal is invariant under this transformation. For the polynomial $1-\sum_{i=1}^m\sum_{j=1}^n {x'_{ij}}^2$ this is clear since $\mbf{M} \in \R^{n \times m}$ has unitary columns in case when $m\leq n$ and unitary rows in case $m\geq n$. In the case of $m\leq n$ the polynomial $x'_{ii}x'_{jj}-x'_{ij}x'_{ji}$ is contained in the ideal $J$ since 
$$
x'_{ii}x'_{jj}-x'_{ij}x'_{ji}=\sum_{k=1}^m\sum_{l=1}^m M_{ik}M_{jl}\zag{x_{ki} x_{lj}-x_{kj}x_{li}} 
$$
and the polynomials $x_{ki} x_{lj}-x_{kj}x_{li}$ are contained in $J$ for all $i<j\leq m$.
Similarly, in case $m\geq n$ the polynomial $x'_{ii}x'_{jj}-x'_{ij}x'_{ji}$ is in the ideal since 
$$x'_{ii}x'_{jj}-x'_{ij}x'_{ji}=\sum_{k=1}^n\sum_{l=1}^n M_{ki}M_{lj}\zag{x_{ik} x_{jl}-x_{il}x_{jk}} $$
and  polynomials $x_{ik} x_{jl}-x_{il}x_{jk}$ are in the ideal, for all $i<j\leq n$.
\end{proof}

The following corollary is a direct consequence of Theorem \ref{TH1n<p} and the nestedness property \eqref{seqtheta} of theta bodies.
\begin{corollary}
The matrix $\theta_1$-norm coincides with the matrix nuclear norm, i.e.,
\begin{equation*}
\norma{\mbf{X}}_*=\norma{\mbf{X}}_{\theta_1}, \quad \text{ for all } \mbf{X} \in \R^{m \times n}.
\end{equation*}
Moreover,
$$ \TB_1\zag{J_{M_{m n}}} = \TB_2\zag{J_{M_{m n}}} = \cdots = \conv\zag{\nu_{\R}\zag{J_{M_{m n}}}}.$$
\end{corollary}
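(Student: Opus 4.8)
The plan is to derive everything from Theorem~\ref{TH1n<p} together with the nestedness chain \eqref{seqtheta}, so almost no new work is needed. First I would recall the general inclusion $\overline{\conv(\nu_\R(J_{M_{mn}}))} \subseteq \TB_{k+1}(J_{M_{mn}}) \subseteq \TB_k(J_{M_{mn}}) \subseteq \TB_1(J_{M_{mn}})$, valid for every $k \in \N$ by \eqref{seqtheta} applied to the ideal $J_{M_{mn}}$. Theorem~\ref{TH1n<p} identifies the two outermost objects in this chain: it states $\TB_1(J_{M_{mn}}) = \conv(\nu_\R(J_{M_{mn}})) = \{\mbf{X} : \norma{\mbf{X}}_* \le 1\}$. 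In particular $\conv(\nu_\R(J_{M_{mn}}))$ is already closed, so the closure in the left-hand side is harmless, and the chain collapses: every $\TB_k(J_{M_{mn}})$ is squeezed between $\conv(\nu_\R(J_{M_{mn}}))$ and $\TB_1(J_{M_{mn}})$, which are equal. Hence $\TB_1(J_{M_{mn}}) = \TB_2(J_{M_{mn}}) = \cdots = \conv(\nu_\R(J_{M_{mn}}))$, which is the displayed equality.

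For the norm statement, I would argue that the $\theta_1$-norm is by Definition~\ref{def:theta} the gauge (Minkowski functional) of the convex body $\TB_1(J_{M_{mn}})$, and the matrix nuclear norm $\norma{\cdot}_*$ is the gauge of its unit ball $\{\mbf{X} : \norma{\mbf{X}}_* \le 1\}$ (which, as noted in the ``Some notation'' subsection, is exactly the convex hull of the rank-one unit-Frobenius-norm matrices). Since Theorem~\ref{TH1n<p} shows these two symmetric convex bodies coincide, their gauges coincide, giving $\norma{\mbf{X}}_{\theta_1} = \norma{\mbf{X}}_*$ for all $\mbf{X} \in \R^{m \times n}$. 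One small point worth a sentence: the gauge of a set genuinely is a norm here because $\TB_1(J_{M_{mn}})$ is closed, convex, bounded (it lies inside the nuclear norm ball), symmetric, and has nonempty interior (it contains the Frobenius ball of a suitable radius since $\nu_\R(J_{M_{mn}})$ spans $\R^{m\times n}$), so all $\theta_k$-norms are well-defined norms and the monotonicity $\norma{\mbf{X}}_{\theta_k} \le \norma{\mbf{X}}_{\theta_{k+1}}$ from the inclusion of bodies forces equality throughout.

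There is essentially no obstacle: the corollary is a formal consequence of the preceding theorem and the abstract nestedness of theta bodies. The only thing to be slightly careful about is not to overclaim — one should note explicitly that $\conv(\nu_\R(J_{M_{mn}}))$ is closed (because it equals the nuclear norm ball, which is compact), so that the statement ``$\TB_k = \conv(\nu_\R)$'' holds with $\conv$ and not merely $\overline{\conv}$; this is where Theorem~\ref{TH1n<p} does the real work by actually producing the sum-of-squares certificates. Everything else is bookkeeping with inclusions.
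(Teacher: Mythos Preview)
Your proposal is correct and matches the paper's own reasoning: the paper states (without writing out a proof) that the corollary is a direct consequence of Theorem~\ref{TH1n<p} together with the nestedness property \eqref{seqtheta} of theta bodies, which is exactly the squeezing argument you describe. Your additional remarks on closure and on the gauge being a genuine norm are more careful than what the paper spells out, but entirely in line with its intent.
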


\ifnocut
\begin{remark} The ideal \eqref{J_{M_{m n}}} is not the only choice that satisfies \eqref{variety:rankone}.
For example, in \cite{chparewi10} the following polynomial ideal was suggested 
\begin{equation}
J=\interval{\skup{x_{ij}-u_iv_j}_{i \in \uglate{m},j \in \uglate{n}}, \sum_{i=1}^m u_i^2-1, \sum_{j=1}^n v_j^2-1}
\end{equation}
in $\R\uglate{\mbf{x}, \mbf{u},\mbf{v}}=\R\uglate{x_{11},\ldots,x_{mn},u_1,\ldots,u_m,v_1,\ldots,u_n}$. 
Some tedious computations reveal the reduced Gr{\"o}bner basis $\mathcal{G}$ of the ideal $J$  with respect to the grevlex (and grlex) ordering, 
\begin{align}
\mathcal{G}=&\left\{ g_1^{ij}= x_{ij}-u_iv_j :  
i\in\uglate{m},\, j\in\uglate{n}\right\}  
\bigcup  \left\{ g_2 =\sum_{i=1}^m u_i^2 -1, g_3 = \sum_{j=1}^n v_j^2 -1\right\} \nonumber\\
\bigcup & \left\{ g_4^{i,j,k} =  x_{ij}u_k-x_{kj}u_i : 1\leq i <k \leq m, \, j \in \uglate{n}\right\}  \nonumber\\
\bigcup &  \left\{ g_5^{i,j,k}  = x_{ij}v_k-x_{ik}v_j : i \in \uglate{m} ,\, 1\leq j <k \leq n \right\} \nonumber\\
\bigcup & \left\{ g_6^i =  \sum_{j=1}^n x_{ij}v_j-u_i  : i \in \uglate{m}\right\}
\bigcup  \left\{ g_7^j = \sum_{i=1}^m x_{ij}u_i-v_j: j \in \uglate{n} \right\}  \nonumber\\
\bigcup & \left\{ g_8^{i,j} =  \sum_{k=1}^n x_{ik}x_{jk}-u_i u_j : 1 \leq i < j \leq m\right\} \nonumber\\
\bigcup & \left\{ g_9^{i,j} = \sum_{k=1}^m x_{ki} x_{kj}-v_i v_j :   1 \leq i < j \leq n \right\} \nonumber\\
\bigcup & \left\{ g_{10}^i =  \sum_{j=1}^n x_{ij}^2-u_i^2  : 2 \leq i \leq m \right\} 
\bigcup  \left\{ g_{11}^j =  \sum_{i=1}^m x_{ij}^2-v_j^2  : 2 \leq j \leq n \right\} \nonumber\\
\bigcup & \left\{ g_{12}^{i,j,k,l} =  x_{ij}x_{kl}- x_{il}x_{kj} :  1\leq i <k \leq m, \, 1\leq j <l \leq n \right\} \nonumber\\
\bigcup &\left\{ g_{13} =  x_{11}^2-\sum_{i=2}^{m}\sum_{j=2}^{n}x_{ij}^2+ \sum_{i=2}^{m} u_i^2 + \sum_{j=2}^{n} v_j^2-1 \right\}.
\label{groebner:matrix}
\end{align}
Obviously, this Gr{\"o}bner basis is much more complicated than the one of the ideal $J_{M_{mn}}$ introduced above. Therefore, computations (both theoretical and numerical) with this alternative ideal seem to be more demanding.
In any case, the variables $\skup{u_i}_{i=1}^m$ and $\skup{v_j}_{j=1}^n$ are only auxiliary ones, so one would like to eliminate these from the above Gr{\"o}bner basis. 
By doing so, one obtains the Gr{\"o}bner basis $\mathcal{G}_{M_{mn}}$ defined in \eqref{J_{M_{m n}}}. Notice that $\sum_{i=1}^m\sum_{j=1}^n x_{ij}^2-1=g_{13}+\sum_{i=2}^m g_{10}^i + \sum_{j=2}^n g_{11}^j$ together with $\{g_{12}^{i,j,k,l}\}$ form the basis $\mathcal{G}_{M_{m n}}$.
\end{remark}
\fi

\subsection{The $\theta_1$-norm in $\R^{2 \times 2}$} \label{matrix2x2}

For the sake of illustration, we consider the specific example of $2 \times 2$ matrices and provide the corresponding
semidefinite program for the computation of the $\theta_1$-norm explicitly. 
Let us denote the corresponding polynomial ideal in $\R\uglate{\mbf{x}}=\R\uglate{x_{11},x_{12},x_{21},x_{22}}$ simply by 
\begin{equation}\label{ideal}
J = J_{M_{22}} = \interval{x_{12}x_{21}-x_{11}x_{22}, x_{11}^2+x_{12}^2+x_{21}^2+x_{22}^2-1}
\end{equation}
The associated algebraic variety is of the form
\begin{equation*}
\nu_{\R}\zag{J}=\skup{\mbf{x}: x_{12}x_{21}=x_{11}x_{22},x_{11}^2+x_{12}^2+x_{21}^2+x_{22}^2=1}
\end{equation*}
and corresponds to the set of rank-one matrices with $\|\mbf{X}\|_F = 1$. 
Its convex hull  
consists of matrices 
$\mbf{X} \in \R^{2 \times 2}$ with $\|\mbf{X}\|_* \leq 1$. 
According to Lemma \ref{Mgroebner}, the Gr{\"o}bner basis $\mathcal{G}$ of $J$ with respect to the grevlex 
order is 
\begin{equation*}
\mathcal{G}=\skup{g_1 = x_{12}x_{21}-x_{11}x_{22} , \, g_2 = x_{11}^2+x_{12}^2+x_{21}^2+x_{22}^2-1}
\end{equation*}
with the corresponding $\theta$-basis $\mathcal{B}$ of $\R\uglate{\mbf{x}}/J$ restricted to the degree two given as
\begin{align*}
\mathcal{B}_1&=\skup{1+J, x_{11}+J, x_{12}+J,  x_{21}+J, x_{22}+J } \\
\mathcal{B}_2&=\mathcal{B}_1 \cup \{x_{11}x_{12}+J, x_{11}x_{21}+J, x_{11}x_{22}+J,  x_{12}^2+J,  x_{12}x_{22}+J, \\ & \quad \quad   x_{21}^2+J,  x_{21}x_{22}+J, x_{22}^2+J \}.
\end{align*}
The set $\mathcal{B}_2$ consists of all monomials of degree at most two which are not divisible by a leading term of any of the polynomials inside the Gr{\"o}bner basis $\mathcal{G}$. For example, $x_{11}x_{12}+J$ is an element of the theta basis $\mathcal{B}$, but $x_{11}^2+J$ is not since $x_{11}^2$ is divisible by $\LT(g_2)$.

Linearizing the elements of $\mathcal{B}_2$ results in Table \ref{linearM}, where the monomials $f$ in the first row stand for an element $f+J \in \mathcal{B}_2$.
\begin{table}
\caption{Linearization of the elements of $\mathcal{B}_2=\{f+J\}$ for matrix $2 \times 2$ case.}
\label{linearM}
\begin{tabular}{c c c c c c c }
\hline\noalign{\smallskip}
$1+J$ & $x_{11}+J$ & $x_{12}+J$ & $x_{21}+J$ & $x_{22}+J$ & $ x_{11}x_{12}+J$ & $x_{11}x_{21}+J$ \\
$y_0$& $x_{11}$ & $x_{12}$ & $x_{21}$ & $x_{22}$ & $y_1$ & $y_2$ \\
\noalign{\smallskip}\hline\vspace*{-0.3cm}
\end{tabular}
\begin{tabular}{c c c c c c }
\hline\noalign{\smallskip}
 $x_{11}x_{22}+J $ & $x_{12}^2+J$ & $x_{12}x_{22}+J$  & $x_{21}^2+J$ & $x_{21}x_{22}+J$ & $x_{22}^2+J$  \\
$y_3$ & $y_4$ & $y_5$ & $y_6$ & $y_7$ & $y_8$   \\
\noalign{\smallskip}\hline
\end{tabular}
\end{table}
Therefore, $\uglate{\mbf{x}}_{\mathcal{B}_1}=\zag{1, x_{11}, x_{12}, x_{21}, x_{22}}^T$ and 
the following combinatorial moment matrix $\mbf{M}_{\mathcal{B}_1}\zag{\mbf{x},\mbf{y}}$, see Definition~\ref{def:moment}, is given as
$$
\mbf{M}_{\mathcal{B}_1}\zag{\mbf{x},\mbf{y}}=
\begin{bmatrix}
y_0        & x_{11}                          & x_{12} & x_{21} & x_{22} \\
x_{11}  & -y_{4}-y_{6}-y_{8}+y_0 & y_{1}    & y_{2}  & y_{3}  \\
x_{12}  &  y_1                               & y_{4}    & y_{3}  & y_{5} \\
x_{21}  &  y_2                               & y_3       & y_{6}  & y_{7}  \\
x_{22}  &  y_3                               &  y_5      &  y_7    & y_{8} 
\end{bmatrix}.
$$
For instance, the entry $(2,2)$ of  $\uglate{\mbf{x}}_{\mathcal{B}_1}\uglate{\mbf{x}}_{\mathcal{B}_1}^T$ is of the form $x_{11}^2+J = -x_{12}^2-x_{21}^2-x_{22}^2+1+J$, where 
we exploit the second property in Definition~\ref{thetabasis} and the fact that $g_2 \in J$.
Replacing $x_{12}^2+J$ by $y_4$, etc.\  as in Table \ref{linearM},  yields the stated expression for $\mbf{M}_{\mathcal{B}_1}\zag{\mbf{x},\mbf{y}}_{2,2}$.

By Theorem \ref{closureQ}, the first theta body $\TB_1\zag{J}$ is the closure of
\begin{equation*}
\mbf{Q}_{\mathcal{B}_1}\zag{J}=\pi_{\mbf{x}}\skup{\zag{\mbf{x},\mbf{y}} \in \R^{\mathcal{B}_{2}}: \mbf{M}_{\mathcal{B}_1}\zag{\mbf{x},\mbf{y}} \succeq 0, \, y_0=1},
\end{equation*}
where $\pi_{\mbf{x}}$ represents the projection onto the variables $\mbf{x}$, i.e., the projection onto $x_{11}$, $x_{12}$, $x_{21}$, $x_{22}$.
Furthermore, $\theta_1$-norm of a matrix $\mbf{X} \in \R^{2 \times 2}$ induced by the $\TB_1\zag{J}$ and denoted as $\norma{\cdot}_{\theta_1}$ can be computed as
\begin{equation}
\norma{\mbf{X}}_{\theta_1}=\inf t \text{ s.t. } \mbf{X} \in t\mbf{Q}_{\mathcal{B}_1}\zag{J}
\end{equation}
which is equivalent to

\begin{equation}\label{MatrixM1}
 \inf_{t \in \R,\mbf{y} \in \R^8} t   \quad \text{ s.t. } \quad
\mbf{M}= \begin{bmatrix}
t & X_{11} & X_{12} & X_{21} & X_{22} \\
X_{11} & -y_{4}-y_{6}-y_{8}+t & y_{1} & y_{2} & y_{3}  \\
X_{12} & y_1  & y_{4} & y_{3} & y_{5} \\
X_{21} & y_2  & y_3   & y_{6} & y_7 \\
X_{22} & y_3 &  y_5  &  y_7   & y_{8} \\
 \end{bmatrix}
 \succeq 0.
\end{equation}

Notice that $\trace(\mbf{M})=2t$.
By Theorem \ref{TH1n<p}, the above  program is equivalent to the standard semidefinite program for computing the nuclear norm of a given matrix $\mbf{X} \in \R^{m \times n}$ 
$$
 \min_{\mbf{W},\mbf{Z}} \frac{1}{2}\zag{\trace\zag{\mbf{W}}+\trace\zag{\mbf{Z}}}  \quad \text{ s.t. } \quad
 \begin{bmatrix}
W_{11} & W_{12} & X_{11} & X_{12}  \\
W_{12} & W_{22} & X_{21} & X_{22}  \\
X_{11} & X_{21}  & Z_{11}  & Z_{12} \\
X_{22} & X_{22}  & Z_{12}  & Z_{22}  \\
 \end{bmatrix}
 \succeq 0.
$$

\begin{remark}
In compressive sensing, reconstruction of sparse signals via $\ell_1$-norm minimization  is well-understood, see for example \cite{carota06,do06-2,fora13}. 
It is possible to provide hierarchical relaxations via theta bodies of the unit $\ell_1$-norm ball. However, as in the matrix scenario discussed above, all these relaxations coincide with the unit $\ell_1$-norm ball, \cite{ThesisStojanac}.
\end{remark}
\ifnocut

\fi

\section{The tensor $\theta_k$-norm}
\label{sec:tensorTH}

Let us now turn to the tensor case and study the hierarchical closed convex relaxations of the unit tensor nuclear norm ball defined via theta bodies.
Since in the matrix case all $\theta_k$-norms are equal to the matrix nuclear norm, their generalization to the tensor case may all be viewed as natural 
generalizations of the nuclear norm. We focus mostly on the $\theta_1$-norm whose unit norm ball is the largest in a hierarchical sequence of relaxations.  
Unlike in the matrix case, the $\theta_1$-norm defines a new tensor norm, that up to the best of our knowledge has not been studied before.

The polynomial ideal will be generated by the minors of order two of the unfoldings -- and matricizations in the case $d \geq 4$ -- of the tensors, where each
variable corresponds to one entry in the tensor. As we will see, a tensor is of rank one if and only if all order-two minors of 
the unfoldings (matricizations) vanish. While the order-three case requires to consider all three unfoldings, there are several possibilities
for the order-$d$ case when $d \geq 4$. In fact, a $d$th-order tensor is of rank one if all minors of all unfoldings vanish so that it may be enough
to consider only the unfoldings. However, one may as well consider the ideal generated by all minors of {\it all matricizations} or one
may consider a subset of matricizations including all unfoldings. Indeed, any {\it tensor format} -- and thereby any notion of tensor rank -- 
corresponds to a set of matricizations and in this 
way, one may associate a $\theta_k$-norm to a certain tensor format. We refer to e.g.~\cite{ha12-4,rascst15} for some background on various tensor formats. However, as we will show later, the corresponding reduced Gr{\"o}bner basis with respect to the grevlex order does not depend on the choice of the tensor format.
We will mainly concentrate on the case that {\it all} matricizations are taken into account for defining the ideal. Only for the case $d=4$, we will briefly
discuss the case, that the ideal is generated only by the minors corresponding to the four unfoldings.

Below, we consider first the special case of third-order tensors and continue then with fourth-order tensors. 
In Subsection~\ref{dtensor} we will treat the general $d$th-order case. 

\subsection{Third-order tensors}

As described above, we will consider the order-two minors of all the unfoldings of a third-order tensor.
Our notation requires the following sets of subscripts
\begin{align*}
\mathcal{S}_1 & = \skup{\zag{\bm{\alpha},\bm{\beta}} : 1 \leq \alpha_1 < \beta_1 \leq n_1,\, 1 \leq \beta_2 < \alpha_2 \leq n_2,\, 1 \leq \beta_3 \leq \alpha_3 \leq n_3}, \\
\mathcal{S}_2 & = \skup{\zag{\bm{\alpha},\bm{\beta}} : 1 \leq \alpha_1 \leq \beta_1 \leq n_1,\, 1 \leq \beta_2 < \alpha_2 \leq n_2,\, 1 \leq \alpha_3 < \beta_3 \leq n_3}, \\
\mathcal{S}_3 & = \skup{ \zag{\bm{\alpha},\bm{\beta}} : 1 \leq \alpha_1 < \beta_1 \leq n_1, \,1 \leq \alpha_2 \leq \beta_2 \leq n_2,\, 1 \leq \beta_3 < \alpha_3 \leq n_3}, \\
\overline{\mathcal{S}}_i &=\skup{\zag{\bm{\alpha},\bm{\beta}}: \zag{\bm{\alpha},\bm{\beta}} \in \mathcal{S}_i \text{ and } \alpha_j\neq\beta_j,  \text{ for all } j \in \uglate{3}}, \quad \text{for all } i \in \uglate{3}.
\end{align*}
The following polynomials $f^{\zag{\bm{\alpha},\bm{\beta}}}$ in $\R\uglate{\mbf{x}}=\R\uglate{x_{111},x_{112},\ldots,x_{n_1n_2n_3}}$ correspond to a subset of all order-two minors of all tensor unfoldings,
\begin{align*}
f^{\zag{\bm{\alpha}, \bm{\beta}}}(\mbf{x})&=x_{\bm{\alpha}} x_{\bm{\beta}}-x_{\bm{\alpha} \vee \bm{\beta}} x_{\bm{\alpha} \wedge \bm{\beta}}, \quad \zag{\bm{\alpha},\bm{\beta}} \in \mathcal{S}:=\mathcal{S}_1 \cup \mathcal{S}_2 \cup \mathcal{S}_3 \\g_3(\mbf{x})&= \sum_{i=1}^{n_1} \sum_{j=1}^{n_2} \sum_{k=1}^{n_3}  x_{ijk}^2-1, 
\end{align*}
where $\uglate{\bm{\alpha} \vee\bm{\beta}}_i=\max\skup{\alpha_i,\beta_i}$ and $\uglate{\bm{\alpha} \wedge \bm{\beta}}_i=\min\skup{\alpha_i,\beta_i}$. In particular, the following order-two minor of $\mbf{X}^{\{1\}}$ is not contained in $\skup{f^{\zag{\bm{\alpha},\bm{\beta}}}: \zag{\bm{\alpha},\bm{\beta}} \in \mathcal{S}}$
$$f=x_{\bm{\alpha}}x_{\bm{\beta}}-x_{\hat{\bm{\alpha}}}x_{\hat{\bm{\beta}}}, \quad \text{where } \hat{\bm{\alpha}}=\zag{\alpha_1,\beta_2,\beta_3}, \hat{\bm{\beta}}=\zag{\beta_1,\alpha_2,\alpha_3} \text{ and } \zag{\bm{\alpha},\bm{\beta}} \in \overline{\mathcal{S}}_3. $$ 

 We remark that in real algebraic geometry and commutative algebra, polynomials $f^{\zag{\bm{\alpha},\bm{\beta}}}$ are known as Hibi relations, see \cite{hibi065413015}. 
\begin{lemma}\label{TH:3rdrankone}
A tensor $\mbf{X} \in \R^{n_1 \times n_2 \times n_3}$ is a rank-one, unit Frobenius norm tensor if and only if
\begin{equation}\label{TH:cond3rd}
g_3(\mbf{X})=0 \, \text{ and }\,f^{\zag{\bm{\alpha},\bm{\beta}}}(\mbf{X})=0 \quad \text{for all} \quad \zag{\bm{\alpha},\bm{\beta}} \in \mathcal{S}.
\end{equation}
\end{lemma}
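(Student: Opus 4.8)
\textbf{Proof strategy for Lemma~\ref{TH:3rdrankone}.}
The plan is to mirror the structure of the proof of Lemma~\ref{rankone}, treating the two implications separately. The forward direction is the easy one: if $\mbf{X}=\mbf{u}^1\otimes\mbf{u}^2\otimes\mbf{u}^3$ with each $\norma{\mbf{u}^i}_{\ell_2}=1$, then $X_{\bm{\alpha}}=u^1_{\alpha_1}u^2_{\alpha_2}u^3_{\alpha_3}$, and for any $\zag{\bm{\alpha},\bm{\beta}}\in\mathcal{S}$ one computes directly that
$$
X_{\bm{\alpha}}X_{\bm{\beta}} = \prod_{i=1}^3 u^i_{\alpha_i}u^i_{\beta_i} = \prod_{i=1}^3 u^i_{\max\{\alpha_i,\beta_i\}}u^i_{\min\{\alpha_i,\beta_i\}} = X_{\bm{\alpha}\vee\bm{\beta}}X_{\bm{\alpha}\wedge\bm{\beta}},
$$
so $f^{\zag{\bm{\alpha},\bm{\beta}}}(\mbf{X})=0$; and $g_3(\mbf{X})=\prod_{i=1}^3\norma{\mbf{u}^i}_{\ell_2}^2-1=0$. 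Note this step does not even use the specific sign constraints defining $\mathcal{S}_1,\mathcal{S}_2,\mathcal{S}_3$ — it holds for \emph{every} pair of multi-indices — which is the right sanity check that the $f^{\zag{\bm{\alpha},\bm{\beta}}}$ are genuine order-two minors of unfoldings.

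For the converse, suppose $\mbf{X}\neq\mbf{0}$ satisfies \eqref{TH:cond3rd}. The key point is that the set $\{f^{\zag{\bm{\alpha},\bm{\beta}}}:\zag{\bm{\alpha},\bm{\beta}}\in\mathcal{S}\}$, although only a \emph{subset} of all order-two minors of the three unfoldings, still generates the same ideal; equivalently, the vanishing of these particular minors forces the vanishing of \emph{all} order-two minors of $\mbf{X}^{\{1\}},\mbf{X}^{\{2\}},\mbf{X}^{\{3\}}$. Granting that, each unfolding has rank at most one, and a tensor all of whose unfoldings have rank one is itself rank one (as recalled in the notation section: ``if all unfoldings of a tensor are of rank one, then the tensor is of rank one''). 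The normalization $\norma{\mbf{X}}_F=1$ then follows immediately from $g_3(\mbf{X})=0$. So the real content is the combinatorial claim that the Hibi relations indexed by $\mathcal{S}$ span all the order-two minors of the unfoldings. Concretely, I would argue as in Lemma~\ref{rankone}: for a fixed unfolding, say $\mbf{X}^{\{1\}}$ (whose rows are indexed by $\alpha_1$ and columns by the pair $(\alpha_2,\alpha_3)$), show that an arbitrary $2\times 2$ minor $X_{\bm{\alpha}}X_{\bm{\beta}}-X_{\hat{\bm{\alpha}}}X_{\hat{\bm{\beta}}}$ can be written as a polynomial combination of the $f^{\zag{\bm{\alpha},\bm{\beta}}}$ together with the already-derived vanishing relations — chaining through an intermediate monomial as in the column-span argument of Lemma~\ref{rankone}, where one first deduces that suitable columns are proportional, then propagates. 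The asymmetry in the defining inequalities of $\mathcal{S}_1,\mathcal{S}_2,\mathcal{S}_3$ (strict versus non-strict on each coordinate) is exactly what is needed so that every minor's monomials fall into the span: one picks, for each minor, which of the three unfolding-types to route it through.

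\textbf{Main obstacle.} The genuinely delicate part is the bookkeeping in the converse: verifying that the restricted family $\{f^{\zag{\bm{\alpha},\bm{\beta}}}\}_{\mathcal{S}}$ suffices to kill \emph{all} order-two minors, in particular minors like $f=x_{\bm{\alpha}}x_{\bm{\beta}}-x_{\hat{\bm{\alpha}}}x_{\hat{\bm{\beta}}}$ with $\hat{\bm{\alpha}}=(\alpha_1,\beta_2,\beta_3)$, $\hat{\bm{\beta}}=(\beta_1,\alpha_2,\alpha_3)$ which the text explicitly flags as \emph{not} being among the generators. One must exhibit such a minor as an explicit syzygy-type combination, e.g.\ $x_{\bm{\gamma}}\,f^{\zag{\cdot,\cdot}} - x_{\bm{\delta}}\,f^{\zag{\cdot,\cdot}}\in J$ after dividing out a common variable (using that $\mbf{X}\neq\mbf{0}$ to cancel, or more cleanly arguing at the level of the variety rather than the ideal). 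Working on the variety is probably the cleanest route: take a nonzero entry $X_{\bm{\gamma}}$, use the Hibi relations to show every column/slice of each unfolding is a scalar multiple of the slice through $\bm{\gamma}$, and conclude rank one directly, avoiding explicit ideal membership computations altogether. I expect the write-up to follow that geometric route and relegate the full combinatorial identities among the $f^{\zag{\bm{\alpha},\bm{\beta}}}$ to the subsequent Gröbner basis lemma, where they will be needed anyway.
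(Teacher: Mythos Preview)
Your proposal is correct and you correctly anticipate the paper's approach: the write-up does take the geometric/fiber route you describe at the end, directly mirroring Lemma~\ref{rankone} by showing that any two mode-$1$ fibers $\mbf{X}_{\cdot\alpha_2\alpha_3}$ and $\mbf{X}_{\cdot\beta_2\beta_3}$ (with $\beta_2\leq\alpha_2$, $\beta_3\leq\alpha_3$) satisfy an explicit vanishing linear combination, then waving the mode-$2$ and mode-$3$ cases as ``similar arguments,'' and reading off $\norma{\mbf{X}}_F=1$ from $g_3(\mbf{X})=0$. Your longer discussion of the ideal-membership route (reducing the flagged minor $f$ to a syzygy of generators) is not pursued here; as you guessed, that combinatorics is deferred to the Gr{\"o}bner basis proof of Theorem~\ref{th3rd}.
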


\begin{proof}
Sufficiency of \eqref{TH:cond3rd} follows directly from the definition of the rank-one unit Frobenius norm tensors.
For necessity, the first step is to show that mode-$1$ fibers (columns)  span one-dimensional space in $\R^{n_1}$.
To this end, we note that for $\beta_2 \leq \alpha_2$ and $\beta_3 \leq \alpha_3$, the fibers $\mbf{X}_{\cdot \alpha_2\alpha_3}$ and $\mbf{X}_{\cdot {\beta}_2 {\beta}_3}$ satisfy
\begin{align*}
-X_{n_1\alpha_2\alpha_3} 
\begin{bmatrix}
 X_{1\beta_2\beta_3} \\
 X_{2\beta_2\beta_3} \\
 \vdots \\
  X_{n_1\beta_2\beta_3}
\end{bmatrix}&+
X_{n_1\beta_2\beta_3} 
\begin{bmatrix}
 X_{1\alpha_2\alpha_3} \\
 X_{2\alpha_2\alpha_3} \\
 \vdots \\
 X_{n_1\alpha_2\alpha_3}
\end{bmatrix}\\
&=
\begin{bmatrix}
-X_{1\beta_2\beta_3}X_{n_1\alpha_2\alpha_3} + X_{1\beta_2\beta_3}X_{n_1\alpha_2\alpha_3} \\
-X_{2\beta_2\beta_3}X_{n_1\alpha_2\alpha_3} + X_{2\beta_2\beta_3}X_{n_1\alpha_2\alpha_3} \\
\vdots \\
-X_{n_1\beta_2\beta_3}X_{n_1\alpha_2\alpha_3} + X_{n_1\beta_2 \beta_3}X_{n_1\alpha_{2}\alpha_3}
\end{bmatrix}=\mbf{0},
\end{align*}
where we used that $f^{\zag{\bm{\alpha},\bm{\beta}}}(\mbf{X})=0$ for all $\zag{\bm{\alpha},\bm{\beta}} \in \mathcal{S}$.
From $g_3\zag{\mbf{X}}=0$ it follows that the tensor $\mbf{X}$ is normalized.

Using similar arguments, one argues that mode-$2$ fibers (rows) and mode-$3$ fibers span one dimensional spaces in $\R^{n_2}$ and $\R^{n_3}$, respectively.
This completes the proof.
\end{proof}

A third-order tensor $\mbf{X} \in \R^{n_1 \times n_2 \times n_3}$ is rank one if and only if all three unfoldings $\mbf{X}^{\{1\}} \in \R^{n_1 \times n_2 n_3}$, $\mbf{X}^{\{2\}} \in \R^{n_2 \times n_1 n_3}$, and $\mbf{X}^{\{3\}} \in \R^{n_3 \times n_1 n_2}$ are rank-one matrices. Notice that $f^{\zag{\bm{\alpha},\bm{\beta}}}(\mbf{X})=0$ for all $ \zag{\bm{\alpha},\bm{\beta}} \in \mathcal{S}_{\ell}$ 
is equivalent to the statement that the $\ell$-th unfolding $\mbf{X}^{\{\ell\}}$ is a rank-one matrix, i.e., that all its order-two minors vanish, for all $\ell \in \uglate{3}$. 

In order to define relaxations of the unit tensor nuclear norm ball we introduce the polynomial ideal ${J}_3 \subset\R\uglate{\mbf{x}}=\R\uglate{x_{111},x_{112,}\ldots, x_{n_1 n_2 n_3}}$ as the one generated by
\begin{equation}\label{TH:groebner3}
\mathcal{G}_{3}= \skup{f^{\zag{\bm{\alpha},\bm{\beta}}}\zag{\mbf{x}} : \zag{\bm{\alpha},\bm{\beta}}  \in \mathcal{S}} \cup \skup{g_3\zag{\mbf
{x}}},
\end{equation}
i.e., ${J}_3=\interval{\mathcal{G}_{3}}$. Its real algebraic variety equals the set of rank-one third-order tensors with unit Frobenius norm and its convex hull coincides
with the unit tensor nuclear norm ball. The next result provides the Gr{\"o}bner
basis of ${J}_3$.

\begin{theorem}\label{th3rd}
The basis $\mathcal{G}_{3}$ defined in \eqref{TH:groebner3} forms the reduced Gr{\"o}bner basis of the ideal ${J}_3=\interval{\mathcal{G}_{3}}$ with respect to the grevlex order.
\end{theorem}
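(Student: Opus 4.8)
The plan is to mimic the strategy of Lemma~\ref{Mgroebner}: since $\mathcal{G}_3$ is obviously a generating set of $J_3$, by Buchberger's criterion (in the refined form of Proposition~\ref{relprime} in the appendix) it suffices to verify that $S(p,q) \rightarrow_{\mathcal{G}_3} 0$ for every pair $p,q \in \mathcal{G}_3$ whose leading monomials share a common variable. First I would record the leading monomials under the grevlex order: $\LM(g_3) = x_{111}^2$ and, for $(\bm{\alpha},\bm{\beta}) \in \mathcal{S}$, $\LM(f^{(\bm{\alpha},\bm{\beta})}) = x_{\bm{\alpha}} x_{\bm{\beta}}$ (this is where the particular choice of $\mathcal{S}_1,\mathcal{S}_2,\mathcal{S}_3$, i.e.\ the Hibi-relation sign pattern, matters — one must check that the ``mixed'' product $x_{\bm{\alpha}}x_{\bm{\beta}}$ indeed exceeds $x_{\bm{\alpha}\vee\bm{\beta}} x_{\bm{\alpha}\wedge\bm{\beta}}$ in grevlex, using the agreed index ordering). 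Since $\LM(g_3) = x_{111}^2$ involves only the single variable $x_{111}$ while each $\LM(f^{(\bm{\alpha},\bm{\beta})})$ is squarefree in two distinct variables, the only way $\LM(g_3)$ and $\LM(f^{(\bm{\alpha},\bm{\beta})})$ can fail to be relatively prime is never — a minor $f^{(\bm{\alpha},\bm{\beta})}$ with $x_{111}^2 \mid x_{\bm\alpha}x_{\bm\beta}$ would need $\bm\alpha = \bm\beta = (1,1,1)$, impossible. Hence only $S$-pairs among the $f$'s need to be reduced.

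Next I would organize the $S$-polynomial computations $S(f^{(\bm\alpha,\bm\beta)}, f^{(\bm\gamma,\bm\delta)})$ by how the index-pairs overlap. The key algebraic identity to exploit is that these are Hibi relations on a distributive lattice: whenever $\LM(f^{(\bm\alpha,\bm\beta)})$ and $\LM(f^{(\bm\gamma,\bm\delta)})$ share a variable, the $S$-polynomial can be rewritten — after cancelling the obvious terms — as a polynomial combination $x_{\bm\mu} f^{(\cdot)} - x_{\bm\nu} f^{(\cdot)}$ of two further generators in $\mathcal{G}_3$, exactly as in the displayed computation $S(f_{ijkl},f_{i\hat j\hat k l}) = x_{\hat k l} f_{ijk\hat j} - x_{ij} f_{k\hat j\hat k l}$ in the proof of Lemma~\ref{Mgroebner}. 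I would present one or two representative cases in detail (say, two minors from the same unfolding $\mathcal{S}_\ell$ sharing a row index, and two minors from different unfoldings sharing one tensor entry), and argue the remaining cases follow by the same telescoping, possibly invoking the known fact that the Hibi relations of a distributive lattice form a (reduced) Gr\"obner basis — though to keep the paper self-contained I expect the authors carry out the case analysis directly. One must be slightly careful that the intermediate $x_{\bm\mu}\cdot(\text{monomial})$ used in the rewriting actually reduces to normal form, i.e.\ that the resulting generators are genuinely in $\mathcal{S}_1\cup\mathcal{S}_2\cup\mathcal{S}_3$ with the correct orientation; this is the bookkeeping heart of the argument.

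Finally, to upgrade ``Gr\"obner basis'' to ``reduced Gr\"obner basis'' I would check the two defining conditions of Definition~\ref{reducedGr}: each generator has leading coefficient $1$, and no leading monomial of a generator divides any other term appearing in any generator. For $g_3$ this is clear ($\LM = x_{111}^2$ does not divide any $x_{ijk}^2$ with $(i,j,k)\neq(1,1,1)$, nor any quadratic monomial in a distinct pair of variables). For the $f^{(\bm\alpha,\bm\beta)}$: their leading monomials $x_{\bm\alpha}x_{\bm\beta}$ are pairwise distinct (the map $(\bm\alpha,\bm\beta)\mapsto\{\bm\alpha,\bm\beta\}$ is injective on $\mathcal{S}$), and one must see that no $x_{\bm\gamma}x_{\bm\delta}$ divides the trailing term $x_{\bm\alpha\vee\bm\beta}x_{\bm\alpha\wedge\bm\beta}$ of another relation — again a consequence of the lattice structure, since $(\bm\alpha\vee\bm\beta, \bm\alpha\wedge\bm\beta)$ is a comparable pair and thus not itself the index of any generator. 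I expect the \textbf{main obstacle} to be purely combinatorial: enumerating the overlap patterns of two index-pairs $(\bm\alpha,\bm\beta),(\bm\gamma,\bm\delta)\in\mathcal{S}$ (same unfolding vs.\ different unfoldings; sharing zero, one, or two coordinate values; the resulting $\vee/\wedge$ structure) and, in each case, producing the explicit telescoping identity that witnesses $S\rightarrow_{\mathcal{G}_3}0$ while respecting the orientation conventions baked into $\mathcal{S}_1,\mathcal{S}_2,\mathcal{S}_3$ — there is no conceptual difficulty, but it is easy to mis-handle a sign or an index range.
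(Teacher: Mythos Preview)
Your proposal is correct and matches the paper's approach essentially line for line: the paper also invokes Buchberger's criterion via Proposition~\ref{relprime}, records $\LM(g_3)=x_{111}^2$ and $\LM(f^{(\bm\alpha,\bm\beta)})=x_{\bm\alpha}x_{\bm\beta}$, dismisses all $S$-pairs involving $g_3$ as relatively prime, works out two representative $S$-pairs among the $f$'s (one within $\mathcal{S}_3$, one mixing $\mathcal{S}_1$ and $\mathcal{S}_2$) via exactly the telescoping rewriting you describe, and defers the remaining cases and the ``reduced'' verification to the same argument as in Lemma~\ref{Mgroebner}. Your additional remarks on the Hibi/distributive-lattice interpretation and the explicit check that trailing terms $x_{\bm\alpha\vee\bm\beta}x_{\bm\alpha\wedge\bm\beta}$ are comparable pairs (hence never leading monomials) go slightly beyond what the paper spells out, but are consistent with it.
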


\begin{proof} Similarly to the proof of Theorem~\ref{Mgroebner} we need to show that $S\zag{p,q} \rightarrow_{\mathcal{G}_{3}} 0$ for all
polynomials $p,q \in \mathcal{G}_{3}$ whose leading terms are not relatively prime.
The leading monomials with respect to the grevlex ordering are given by 
\begin{align*}
& \LM(g_3) =  x_{111}^2 \\
\text{and } & \LM(f^{\zag{\bm{\alpha},\bm{\beta}}})=x_{\bm{\alpha}}x_{\bm{\beta}}, \quad \zag{\bm{\alpha},\bm{\beta}} \in \mathcal{S}.
\end{align*}
The leading terms of $g_3$ and $f^{\zag{\bm{\alpha},\bm{\beta}}}$ are always relatively prime. 
 First  we consider two distinct polynomials $f,g \in \{f^{\zag{\bm{\alpha},\bm{\beta}}}: \zag{\bm{\alpha},\bm{\beta}}  \in \mathcal{S}_3\}$. 
Let $f=f^{\zag{\bm{\alpha},\bm{\beta}}}$ and $g=f^{\zag{\bm{\alpha},\overline{\bm{\beta}}}}$ for $\zag{\bm{\alpha},\bm{\beta}}  \in \overline{\mathcal{S}}_3$, where $\overline{\bm{\beta}}=\zag{\beta_1,\alpha_2,\beta_3}$. That is,
\begin{align*}
f(\mbf{x})=x_{\bm{\alpha}}x_{\bm{\beta}} - x_{\bm{\alpha} \vee \bm{\beta}}x_{\bm{\alpha} \wedge \bm{\beta}}, \quad\quad
g(\mbf{x})=x_{\bm{\alpha}}x_{\overline{\bm{\beta}}} - x_{\bm{\alpha} \vee \overline{\bm{\beta}}}x_{\bm{\alpha} \wedge \overline{\bm{\beta}}}. 
\end{align*}
Since $\bm{\alpha} \wedge \bm{\beta}=\bm{\alpha} \wedge \overline{\bm{\beta}}$ and $f^{\zag{\bm{\beta},\bm{\alpha} \vee \overline{\bm{\beta}}}} \in \{f^{\zag{\bm{\alpha},\bm{\beta}}}: \zag{\bm{\alpha},\bm{\beta}} \in \mathcal{S}_2\}$, then 
$$S\zag{f,g}=x_{\bm{\alpha} \wedge \bm{\beta}}\zag{-x_{\overline{\bm{\beta}}}x_{\bm{\alpha} \vee \bm{\beta}} + x_{\bm{\beta}}x_{\bm{\alpha} \vee \overline{\bm{\beta}}}} = x_{\bm{\alpha} \wedge \bm{\beta}} f^{\zag{\bm{\beta},\bm{\alpha} \vee \overline{\bm{\beta}}}} \rightarrow_{\mathcal{G}_{3}} 0.$$

Next we show that $S\zag{f,g}\in {J}_3$, for $f \in \skup{f^{\zag{\bm{\alpha},\bm{\beta}}}: \zag{\bm{\alpha},\bm{\beta}} \in \mathcal{S}_2}$ and $g \in \skup{f^{\zag{\bm{\alpha},\bm{\beta}}}: \zag{\bm{\alpha},\bm{\beta}} \in \mathcal{S}_1}$. Let $f=f^{\zag{\bm{\alpha},\hat{\bm{\beta}}}}$ with $\hat{\bm{\beta}}=\zag{\alpha_1,\beta_2,\beta_3}$ and $g = f^{\zag{\bm{\alpha},\tilde{\bm{\beta}}}}$ with $\tilde{\bm{\beta}}=\zag{\beta_1,\beta_2,\alpha_3}$, where $\zag{\bm{\alpha},\bm{\beta}} \in \overline{\mathcal{S}}_2$. 
Since $x_{\bm{\alpha} \wedge \hat{\bm{\beta}}}=x_{\bm{\alpha} \wedge \tilde{\bm{\beta}}}$, $f^{\zag{\hat{\bm{\beta}}, \bm{\alpha} \vee \tilde{\bm{\beta}}}} \in \skup{f^{\zag{\bm{\alpha},\bm{\beta}}}:\zag{\bm{\alpha},\bm{\beta}} \in \mathcal{S}_3}$, and $f^{\zag{\bm{\alpha} \vee \hat{\bm{\beta}}, \tilde{\bm{\beta}} }} \in \skup{f^{\zag{\bm{\alpha},\bm{\beta}}}:\zag{\bm{\alpha},\bm{\beta}} \in \mathcal{S}_1}$
$$S\zag{f,g}=x_{\bm{\alpha} \wedge \hat{\bm{\beta}}}\zag{-x_{\tilde{\bm{\beta}}}x_{\bm{\alpha} \vee \hat{\bm{\beta}}}+x_{\hat{\bm{\beta}}}x_{\bm{\alpha} \vee \tilde{\bm{\beta}}}}=x_{\bm{\alpha} \wedge \hat{\bm{\beta}}}\zag{ f^{\zag{\hat{\bm{\beta}}, \bm{\alpha} \vee \tilde{\bm{\beta}} }} - f^{\zag{\bm{\alpha} \vee \hat{\bm{\beta}}, \tilde{\bm{\beta}} }}} \rightarrow_{\mathcal{G}_{3}} 0.$$
For the remaining cases one proceeds similarly.
In order to show that $\mathcal{G}_{3}$ is the reduced Gr{\"o}bner basis, one uses the same arguments as in the proof of Theorem~\ref{Mgroebner}.
\end{proof}

\begin{remark}
The above Gr{\"o}bner basis $\mathcal{G}_{3}$ is obtained by taking a particular subset of all order-two minors of all three unfoldings of the tensor $\mbf{X} \in \R^{n_1 \times n_2 \times n_3}$ (not considering the same minor twice). One might think that the $\theta_1$-norm obtained in this way corresponds 
to a (weighted) sum of the nuclear norms of the unfoldings, which has been used in \cite{gareya11,gohumuwr14} for tensor recovery. 
The examples of cubic tensors $\mbf{X} \in \R^{2 \times 2 \times 2}$ presented in Table \ref{TableEx} show that this is not the case.
\begin{table}
\caption{Matrix nuclear norms of unfoldings and $\theta_1$-norm of tensors $\mbf{X} \in \R^{2 \times 2 \times 2}$, which are represented in the second column  as $\mbf{X}=\left[\mbf{X}\zag{:,:,1} \, \middle\vert \, \mbf{X}\zag{:,:,2}\right]$. The third, fourth and fifth column represent the nuclear norms of the first, second and the third unfolding of a tensor $\mbf{X}$, respectively. The last column contains the numerically computed $\theta_1$-norm.}
\label{TableEx}
\begin{tabular}{ l l c c c c }
\hline\noalign{\smallskip}
 & $\mbf{X} \in \R^{2 \times 2 \times 2}$ & $\|\mbf{X}^{\{1\}}\|_*$ & $\|\mbf{X}^{\{2\}}\|_*$ &  $\|\mbf{X}^{\{3\}}\|_*$ &  $\|\mbf{X}\|_{\theta_1}$ \\
\noalign{\smallskip}\hline\noalign{\smallskip}
\rule{0pt}{4ex} $1$ & $\left[\begin{matrix} 1& 0 \\ 0 & 0 \end{matrix} \, \middle\vert \, \begin{matrix} 0 & 0 \\ 0 & 1\end{matrix}\right]$ &   $2$ & $2$ & $2$ & $2$ \\ 
\rule{0pt}{4ex}  $2$&  $\left[\begin{matrix} 1& 0 \\ 0 & 1 \end{matrix} \, \middle\vert \, \begin{matrix} 0 & 0 \\ 0 & 0\end{matrix}\right]$  & $2$ & $2$ & $\sqrt{2}$ & $2$\\
\rule{0pt}{4ex}    $3$& $\left[\begin{matrix} 1& 0 \\ 0 & 0 \end{matrix} \, \middle\vert \, \begin{matrix} 0 & 0 \\ 1 & 0\end{matrix}\right]$ &  $2$ & $\sqrt{2}$ & $2$ & $2$\\
\rule{0pt}{4ex}    $4$& $\left[\begin{matrix} 1& 0 \\ 0 & 0 \end{matrix} \, \middle\vert \, \begin{matrix} 0 & 1 \\ 0 & 0\end{matrix}\right]$ &
 $\sqrt{2}$ & $2$ & $2$ & $2$\\
\rule{0pt}{4ex}    $5$& $\left[\begin{matrix} 1& 0 \\ 0 & 1 \end{matrix} \, \middle\vert \, \begin{matrix} 0 & 1 \\ 0 & 0\end{matrix}\right]$ &
 $\sqrt{2}+1$ & $\sqrt{2}+1$ & $\sqrt{2}+1$ & $3$\\
 \noalign{\smallskip}\hline
\end{tabular}
\end{table}
Assuming that $\theta_1$-norm is a linear combination of the nuclear norm of the unfoldings, there exist $\alpha$, $\beta$, $\gamma \in \R$ such that
$ \alpha \|\mbf{X}^{\{1\}}\|_*+ \beta \|\mbf{X}^{\{2\}}\|_* +\gamma \|\mbf{X}^{\{3\}}\|_*=\|\mbf{X}\|_{\theta_1}.$ From the first and the second tensor in Table \ref{TableEx} we obtain $\gamma=0$. Similarly, the first and the third tensor, and the first and the fourth tensor give $\beta=0$  and $\alpha=0$, respectively. Thus, the $\theta_1$-norm does not coincide with 
a weighted sum of the nuclear norms of the unfoldings. In addition, the last tensor shows that the $\theta_1$-norm does not equal maximum of the norms of the unfoldings.
\end{remark}

Theorem~\ref{th3rd} states that $\mathcal{G}_3$ is the reduced Gr{\"o}bner basis of the ideal $J_3$ generated by all order-two minors of all matricizations of an order-three tensor. That is, $J_3$ is generated by the following polynomials
\begin{align*}
f_{\zag{\bm{\alpha},\bm{\beta}}}^{\{1\}}(\mbf{x})&=-x_{\alpha_1 \alpha_2 \alpha_3} x_{\beta_1 \beta_2 \beta_3} + x_{\alpha_1 \beta_2 \beta_3} x_{\beta_1 \alpha_2 \alpha_3}, \quad \text{for } (\bm{\alpha},\bm{\beta}) \in \bm{\mathcal{T}}^{\{1\}} \\
f_{\zag{\bm{\alpha},\bm{\beta}}}^{\{2\}}(\mbf{x})&=-x_{\alpha_1 \alpha_2 \alpha_3} x_{\beta_1 \beta_2 \beta_3} + x_{\beta_1 \alpha_2 \beta_3} x_{\alpha_1 \beta_2 \alpha_3}, \quad \text{for } (\bm{\alpha},\bm{\beta}) \in \bm{\mathcal{T}}^{\{2\}}\\
f_{\zag{\bm{\alpha},\bm{\beta}}}^{\{3\}}(\mbf{x})&=-x_{\alpha_1 \alpha_2 \alpha_3} x_{\beta_1 \beta_2 \beta_3} + x_{\beta_1 \beta_2 \alpha_3} x_{\alpha_1 \alpha_2 \beta_3}, \quad \text{for } (\bm{\alpha},\bm{\beta}) \in \bm{\mathcal{T}}^{\{3\}},
\end{align*}
where $\skup{f_{\zag{\bm{\alpha},\bm{\beta}}}^{\{k\}}(\mbf{x}):\zag{\bm{\alpha},\bm{\beta}} \in \bm{\mathcal{T}}^{\{k\}}}$ is the set of all order-two minors of the $k$th unfolding and 
\begin{align*}
\bm{\mathcal{T}}^{\{k\}}&=\skup{(\bm{\alpha},\bm{\beta}): \alpha_k\neq\beta_k, \, \overline{\bm{\alpha}} \neq\overline{\bm{\beta}}, \text{where } \overline{\alpha}_k=\overline{\beta}_k=0, \overline{\alpha}_{\ell}=\alpha_{\ell}, \overline{\beta}_{\ell}=\beta_{\ell}}.
\end{align*}
 For  $\zag{\bm{\alpha},\bm{\beta}}$,  $x_{\bm{\alpha}^{\{k\}}} x_{\bm{\beta}^{\{k\}}}$ denotes a monomial where $\alpha_k^{\{k\}}=\alpha_k$, $\beta_k^{\{k\}}=\beta_k$, and $\alpha_{\ell}^{\{k\}}=\beta_{\ell}$, $\beta_{\ell}^{\{k\}}=\alpha_{\ell}$, for all $\ell \in \uglate{d}\backslash \{k\}$.  
Notice that $f_{\zag{\bm{\alpha},\bm{\beta}}}^{\{k\}}(\mbf{x})=f_{\zag{\bm{\beta},\bm{\alpha}}}^{\{k\}}(\mbf{x})=-f_{\zag{\bm{\alpha}^{\{k\}},\bm{\beta}^{\{k\}}}}^{\{k\}}(\mbf{x})=-f_{\zag{\bm{\beta}^{\{k\}},\bm{\alpha}^{\{k\}}}}^{\{k\}}(\mbf{x})$, for all $\zag{\bm{\alpha},\bm{\beta}} \in \bm{\mathcal{T}}^{\{k\}}$, and all $k \in \uglate{3}$.  
Let us now consider a TT-format and a corresponding notion of tensor rank. Recall that a TT-rank of an order three tensor is a vector $\mbf{r}=(r_1,r_2)$ where $r_1=\rank(\mbf{X}^{\{1\}})$ and $r_2=\rank(\mbf{X}^{\{1,2\}})$. Consequently, we consider an ideal $J_{3,\text{TT}}$ generated by all order-two minors of matricizations $\mbf{X}^{\{1\}}$ and $\mbf{X}^{\{1,2\}}$ of the order-$3$ tensor. That is, the ideal $J_{3,\text{TT}}$ is generated by the polynomials
\begin{align*}
f_{\zag{\bm{\alpha},\bm{\beta}}}^{\{1\}}(\mbf{x})&=-x_{\alpha_1 \alpha_2 \alpha_3} x_{\beta_1 \beta_2 \beta_3} + x_{\alpha_1 \beta_2 \beta_3} x_{\beta_1 \alpha_2 \alpha_3}, \quad \text{for } (\bm{\alpha},\bm{\beta}) \in \bm{\mathcal{T}}^{\{1\}}, \\
f_{\zag{\bm{\alpha},\bm{\beta}}}^{\{1,2\}}(\mbf{x})&=-x_{\alpha_1 \alpha_2 \alpha_3} x_{\beta_1 \beta_2 \beta_3} + x_{\alpha_1 \alpha_2 \beta_3} x_{\beta_1 \beta_2 \alpha_3} , \quad \text{for } (\bm{\alpha},\bm{\beta}) \in \bm{\mathcal{T}}^{\{1,2\}},
\end{align*}
where $\bm{\mathcal{T}}^{\{1,2\}}=\skup{(\bm{\alpha},\bm{\beta}): \zag{\alpha_1,\alpha_2,0} \neq \zag{\beta_1,\beta_2,0}, \, \alpha_3 \neq\beta_3}$. 
\begin{theorem}
The polynomial ideals $J_3$ and $J_{3,\text{TT}}$ are equal. 
\end{theorem}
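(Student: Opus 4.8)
The plan is to prove $J_{3,\mathrm{TT}}=J_3$ by establishing the two inclusions at the level of generators. Both ideals are generated by a family of order-two minors together with the sphere polynomial $g_3$, and since $g_3$ is common to the two generating sets it may be set aside: it suffices to compare the ideals generated by the respective collections of order-two minors. The one structural observation needed is that the TT-matricization $\mbf{X}^{\{1,2\}}$ is, up to transposition, the third unfolding: because $\{1,2\}^{c}=\{3\}$ one has $\mbf{X}^{\{1,2\}}=\bigl(\mbf{X}^{\{3\}}\bigr)^{T}$, and therefore the order-two minors of $\mbf{X}^{\{1,2\}}$ are, polynomial for polynomial, the order-two minors of $\mbf{X}^{\{3\}}$; indeed, with the conventions above, $f^{\{1,2\}}_{\zag{\bm{\alpha},\bm{\beta}}}=f^{\{3\}}_{\zag{\bm{\alpha},\bm{\beta}}}$ for every admissible pair $\zag{\bm{\alpha},\bm{\beta}}$. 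Since $J_3$ is generated by all order-two minors of $\mbf{X}^{\{1\}},\mbf{X}^{\{2\}},\mbf{X}^{\{3\}}$ (and $g_3$), every generator of $J_{3,\mathrm{TT}}$ — a minor of $\mbf{X}^{\{1\}}$ or of $\mbf{X}^{\{1,2\}}=\bigl(\mbf{X}^{\{3\}}\bigr)^{T}$ — already lies in $J_3$, so $J_{3,\mathrm{TT}}\subseteq J_3$.

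For the reverse inclusion it suffices to show that every order-two minor $f^{\{2\}}_{\zag{\bm{\alpha},\bm{\beta}}}$ of the middle unfolding lies in $J_{3,\mathrm{TT}}$, because the remaining generators of $J_3$ — the minors of $\mbf{X}^{\{1\}}$, the minors of $\mbf{X}^{\{3\}}=\mbf{X}^{\{1,2\}}$, and $g_3$ — are already generators of $J_{3,\mathrm{TT}}$. Such a minor is determined by two distinct mode-$2$ indices $\alpha_2\neq\beta_2$ and two distinct column indices $\zag{\alpha_1,\alpha_3}\neq\zag{\beta_1,\beta_3}$, and equals $f^{\{2\}}_{\zag{\bm{\alpha},\bm{\beta}}}=-x_{\alpha_1\alpha_2\alpha_3}x_{\beta_1\beta_2\beta_3}+x_{\beta_1\alpha_2\beta_3}x_{\alpha_1\beta_2\alpha_3}$. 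I would split into three cases according to whether the first or the third index coincides. If $\alpha_1=\beta_1$ (so $\alpha_3\neq\beta_3$), then $f^{\{2\}}_{\zag{\bm{\alpha},\bm{\beta}}}$ is, up to sign, the order-two minor of $\mbf{X}^{\{1,2\}}$ with rows $\zag{\alpha_1,\alpha_2}$, $\zag{\alpha_1,\beta_2}$ and columns $\alpha_3$, $\beta_3$, hence it lies in $J_{3,\mathrm{TT}}$; symmetrically, if $\alpha_3=\beta_3$ (so $\alpha_1\neq\beta_1$), then $f^{\{2\}}_{\zag{\bm{\alpha},\bm{\beta}}}$ is, up to sign, the order-two minor of $\mbf{X}^{\{1\}}$ with rows $\alpha_1$, $\beta_1$ and columns $\zag{\alpha_2,\alpha_3}$, $\zag{\beta_2,\alpha_3}$, and again lies in $J_{3,\mathrm{TT}}$.

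The remaining generic case $\alpha_1\neq\beta_1$, $\alpha_3\neq\beta_3$ is the only one with real content: here I would exhibit $f^{\{2\}}_{\zag{\bm{\alpha},\bm{\beta}}}$ as an explicit two-term combination of one minor of $\mbf{X}^{\{1\}}$ and one minor of $\mbf{X}^{\{1,2\}}$. Setting $\bm{\mu}=\zag{\alpha_1,\beta_2,\alpha_3}$, $\bm{\nu}=\zag{\beta_1,\alpha_2,\beta_3}$, $\bm{\mu}'=\zag{\alpha_1,\alpha_2,\beta_3}$, $\bm{\nu}'=\zag{\beta_1,\beta_2,\alpha_3}$, a direct expansion of the four degree-two monomials gives
\begin{equation*}
f^{\{2\}}_{\zag{\bm{\alpha},\bm{\beta}}}=-\,f^{\{1\}}_{\zag{\bm{\mu},\bm{\nu}}}-f^{\{1,2\}}_{\zag{\bm{\mu}',\bm{\nu}'}},
\end{equation*}
where $f^{\{1\}}_{\zag{\bm{\mu},\bm{\nu}}}$ is an admissible minor because $\alpha_1\neq\beta_1$ and $\alpha_2\neq\beta_2$, and $f^{\{1,2\}}_{\zag{\bm{\mu}',\bm{\nu}'}}$ is admissible because $\alpha_1\neq\beta_1$ and $\alpha_3\neq\beta_3$. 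Hence $f^{\{2\}}_{\zag{\bm{\alpha},\bm{\beta}}}\in J_{3,\mathrm{TT}}$, which completes the case analysis, yields $J_3\subseteq J_{3,\mathrm{TT}}$, and therefore $J_3=J_{3,\mathrm{TT}}$. The only obstacle is bookkeeping: keeping the index conventions of the three unfoldings straight, checking that the auxiliary minors produced in each case are genuine non-degenerate $2\times2$ minors, and spotting the two-term identity in the generic case — which is really just a small identity among the $2\times2$ flattening minors of a $2\times2\times2$ sub-block of $\mbf{X}$.
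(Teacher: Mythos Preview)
Your proof is correct and follows essentially the same route as the paper's: both arguments use the observation $\mbf{X}^{\{1,2\}}=\bigl(\mbf{X}^{\{3\}}\bigr)^{T}$ to reduce the problem to showing that each mode-$2$ minor lies in $J_{3,\mathrm{TT}}$, dispatch the degenerate cases $\alpha_1=\beta_1$ or $\alpha_3=\beta_3$ by identifying the minor with a generator of $J_{3,\mathrm{TT}}$, and in the generic case write the mode-$2$ minor as the sum of one $\{1\}$-minor and one $\{1,2\}$-minor. Your two-term identity is in fact the very same identity the paper uses (up to the sign/labeling conventions of $f^{\bm{\tau}}_{(\cdot,\cdot)}$), so the proofs coincide.
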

\begin{remark}
As a consequence, $\mathcal{G}_3$ is also the reduced Gr{\"o}bner basis for  the ideal $J_{3,\text{TT}}$ with respect to the grevlex ordering.
\end{remark}
\begin{proof}
Notice that $\zag{\mbf{X}^{\{3\}}}^T=\mbf{X}^{\{1,2\}}$ and therefore $$\skup{f_{\zag{\bm{\alpha},\bm{\beta}}}^{\{3\}}(\mbf{x}):\zag{\bm{\alpha},\bm{\beta}} \in \bm{\mathcal{T}}^{\{3\}}} = \skup{f_{\zag{\bm{\alpha},\bm{\beta}}}^{\{1,2\}}(\mbf{x}):\zag{\bm{\alpha},\bm{\beta}} \in \bm{\mathcal{T}}^{\{1,2\}}}.$$
Hence, it is enough to show that $f_{\zag{\bm{\alpha},\bm{\beta}}}^{\{2\}} \in J_{3,\text{TT}}$, for all $\zag{\bm{\alpha},\bm{\beta}} \in \bm{\mathcal{T}}^{\{2\}}$. By definition of $\bm{\mathcal{T}}^{\{2\}}$, we have that
$\alpha_2\neq\beta_2$ and $(\alpha_1,0,\alpha_3) \neq (\beta_1,0,\beta_3)$. We can assume that $\alpha_3\neq \beta_3$, since otherwise $f_{\zag{\bm{\alpha},\bm{\beta}}}^{\{2\}} = f_{\zag{\bm{\alpha},\bm{\beta}}}^{\{1\}}$. Analogously,  $\alpha_1 \neq \beta_1$ since otherwise $f_{\zag{\bm{\alpha},\bm{\beta}}}^{\{2\}} = f_{\zag{\bm{\alpha},\bm{\beta}}}^{\{1,2\}}$.
  Consider the following polynomials 
\begin{align*}
f(\mbf{x})&=-x_{\alpha_1 \alpha_2 \alpha_3} x_{\beta_1 \beta_2 \beta_3} + x_{\beta_1 \alpha_2 \beta_3} x_{\alpha_1 \beta_2 \alpha_3}, \quad  (\bm{\alpha},\bm{\beta}) \in\bm{\mathcal{T}}^{\{2\}}\\
g(\mbf{x})&=-x_{\beta_1 \beta_2 \alpha_3} x_{\alpha_1 \alpha_2 \beta_3} + x_{\beta_1 \alpha_2 \beta_3} x_{\alpha_1 \beta_2 \alpha_3} ,\quad  (\beta_1,\beta_2,\alpha_3,\alpha_1,\alpha_2,\beta_3) \in\bm{\mathcal{T}}^{\{1\}}\\
h(\mbf{x})&=-x_{\alpha_1 \alpha_2 \alpha_3} x_{\beta_1 \beta_2 \beta_3} + x_{\alpha_1 \alpha_2 \beta_3} x_{\beta_1 \beta_2 \alpha_3} ,\quad  (\bm{\alpha},\bm{\beta}) \in\bm{\mathcal{T}}^{\{1,2\}}.
\end{align*}
Thus, we have that $f(\mbf{x})=g(\mbf{x})+h(\mbf{x}) \in J_{3,\text{TT}}$.
\end{proof}

\subsection{The theta norm for general $d$th-order tensors}\label{dtensor}

Let us now consider $d$th-order tensors in $\R^{n_1\times n_2 \times \cdots \times n_d}$ for general $d \geq 4$.
Our approach relies again on the fact that a tensor $\mbf{X} \in \Rd$ is of rank-one if and only if all its matricizations are rank-one matrices, or equivalently, 
if all minors of order two of each matricization vanish. 

The description of the polynomial ideal generated by the second order minors of all matricizations of a tensor $\mbf{X} \in \Rd$ 
unfortunately requires some technical notation. Again, we do not need all such minors in the generating set that we introduce next.
In fact, this generating set will turn out to be the reduced Gr{\"o}bner basis of the ideal.

Similarly to before, the entry $\zag{\alpha_1,\alpha_2,\ldots,\alpha_d}$ of a tensor $\mbf{X} \in \Rd$ corresponds to the variable $x_{\alpha_1\alpha_2 \cdots \alpha_d}$ or simply $x_{\bm{\alpha}}$. 
We aim at introducing a set of polynomials of the form
\begin{equation}\label{TH:defpol}
 f_{d}^{\zag{\bm{\alpha},\bm{\beta}}}(\mbf{x}):= -x_{\bm{\alpha} \wedge \bm{\beta}}x_{\bm{\alpha} \vee \bm{\beta}}+x_{\bm{\alpha}}x_{\bm{\beta}}
\end{equation}
which will generate the desired polynomial ideal. These polynomials 
correspond to a subset of all order-two minors of all the possible $d$th-order tensor matricizations.  
The set $\mathcal{S}$ denotes the indices where $\bm{\alpha}$ and $\bm{\beta}$ differ. Since for an order-two minor of a matricization $\mbf{X}^{\mathcal{M}}$ the sets $\bm{\alpha}$ and $\bm{\beta}$ need to differ in at least two indices, $\mathcal{S}$ is contained in  
$$ \mathcal{S}_{\uglate{d}}:=\{\mathcal{S} \subset \uglate{d}: 2\leq | \mathcal{S} | \leq d\}.$$ 
Given the set $\mathcal{S}$ of different indices, we require all non-empty subsets $\mathcal{M} \subset \mathcal{S}$ of possible indices which are ``switched'' between
$\bm{\alpha}$ and $\bm{\beta}$ for forming the minors in \eqref{TH:defpol}. This implies that, without loss of generality, 
\begin{align*}
&\alpha_j > \beta_j, \quad \text{ for all } j \in \mathcal{M}\\
&\alpha_k < \beta_k, \quad \text{ for all } k \in \mathcal{S} \backslash \mathcal{M}.
\end{align*}
That is, the same minor is obtained if we require that $\alpha_j<\beta_j$ for all $j \in \mathcal{M}$ and $\alpha_k>\beta_k$ for all $k \in \mathcal{S}\backslash \mathcal{M}$ since the set of all two-minors of $\mbf{X}^{\mathcal{M}}$ coincides with the set of all two-minors of $\mbf{X}^{\mathcal{S}\backslash \mathcal{M}}$.

For $\mathcal{S} \in \mathcal{S}_{\uglate{d}}$, we define $e_{\mathcal{S}}:=\min\{p: p \in \mathcal{S}\}$.  The set $\mathcal{M}$ corresponds to an associated
matricization $\mbf{X}^{\mathcal{M}}$. 
The set of possible subsets $\mathcal{M}$ is given as 
\begin{equation*}
\mathcal{P}_{\mathcal{S}}=
\begin{cases}
\skup{\mathcal{M} \subset \mathcal{S}: |\mathcal{M}| \leq \floor{\frac{|\mathcal{S}|}{2}}}\backslash\{\eset\}, & \hspace{-0.7cm}\text{if } |\mathcal{S}| \text{ is odd}, \\
\skup{\mathcal{M} \subset \mathcal{S}: |\mathcal{M}| \leq \floor{\frac{|\mathcal{S}|-1}{2}}} \cup \skup{ \mathcal{M} \subset \mathcal{S}: | \mathcal{M}|=\frac{|\mathcal{S}|}{2}, e_{\mathcal{S}} \in \mathcal{M}}\backslash\{\eset\}, &\hspace{-0.3cm} \text{otherwise}.
\end{cases}
\end{equation*} 
Notice that $\mathcal{P}_{\mathcal{S}} \cup \mathcal{P}_{\mathcal{S}^c} \cup \{\eset\} \cup \mathcal{S}$ 
with $\mathcal{P}_{\mathcal{S}^c}:=\{\mathcal{M}: \mathcal{S} \backslash \mathcal{M} \in \mathcal{P}_{\mathcal{S}}\}$ forms the power set 
of $\mathcal{S}$. The constraint on the size of $\mathcal{M}$ in the definition of $\mathcal{P}_{\mathcal{S}}$ is motivated by
the fact that the role of $\bm{\alpha}$ and $\bm{\beta}$ can be switched and lead to the same
polynomial $f_{d}^{(\bm{\alpha},\bm{\beta})}$.

Thus, for $\mathcal{S} \in \mathcal{S}_{\uglate{d}}$ and $\mathcal{M} \in \mathcal{P}_{\mathcal{S}}$, we define a set
\begin{align*}
 \mathcal{T}_{d}^{\mathcal{S},\mathcal{M}}:=\{\zag{\bm{\alpha},\bm{\beta}}: \, & \alpha_i=\beta_i, \text{ for all } i \notin \mathcal{S}   \\
& \alpha_j > \beta_j , \text{ for all } j \in \mathcal{M}   \\
& \alpha_k < \beta_k , \text{ for all } k \in \mathcal{S}\backslash\mathcal{M}  \}    .
\end{align*}
For  notational purposes, we define 
$$\{f_d^{\mathcal{S}}\}=\cup_{\mathcal{M} \in \mathcal{P}_{\mathcal{S}}}\{f_{d}^{\zag{\bm{\alpha},\bm{\beta}}}: \zag{\bm{\alpha},\bm{\beta}} \in \mathcal{T}_d^{\mathcal{S},\mathcal{M}}\} \quad \text{ for } \mathcal{S} \in \mathcal{S}_{\uglate{d}}.
$$
Since we are interested in unit Frobenius norm tensors, we also introduce the polynomial
$$ 
g_d\zag{\mbf{x}}=\sum_{i_1=1}^{n_1}\sum_{i_2=1}^{n_2}\ldots \sum_{i_d=1}^{n_d}x_{i_1 i_2 \ldots i_d}^2-1.
$$
Our polynomial ideal is then the one generated by the polynomials in
\[
\mathcal{G}_d = \bigcup_{\mathcal{S}\in \mathcal{S}_{\uglate{d}}} \{f_d^{\mathcal{S}}\} \cup \{g_d\} \subset \R\uglate{\mbf{x}}=\R\uglate{x_{11\ldots1}, x_{11 \ldots 2}, \ldots, x_{n_1 n_2 \ldots n_d}},
\] i.e.,
$J_{d} = \langle \mathcal{G}_d \rangle$.
As in the special case of the third-order tensors, not all second order minors corresponding to 
all matricizations are contained in the generating set $\mathcal{G}_d$ due to the condition $i_k < \hat{i}_k$ for all $k \in \mathcal{S}$
in the definition of $\mathcal{T}_d^{\mathcal{S}}$.
Nevertheless all second order minors 
are contained in the ideal $J_{d}$ as will also be revealed by the proof of Theorem~\ref{thmmadth} below. 
For instance, $h(\mbf{x})=-x_{1234}x_{2343}+x_{1243}x_{2334}$ -- corresponding to a minor of the matricization $\mathbf{X}^{\mathcal{M}}$ for
$\mathcal{M} = \{1,2\}$ -- does not belong to $\mathcal{G}_4$, but it does belong to the ideal $J_{4}$.
Moreover, it is straightforward to verify that all polynomials in $\mathcal{G}_d$ differ from each other.

The algebraic variety of $J_{d}$ consists of all rank-one unit Frobenius norm order-$d$ tensors as desired,  and its convex
hull yields the tensor nuclear norm ball.

\begin{theorem}\label{thmmadth}
The set $\mathcal{G}_d$ forms 
the reduced Gr{\"o}bner basis of the ideal $J_{d}$ with respect to the grevlex order.
\end{theorem}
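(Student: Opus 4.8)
The plan is to follow the template of the proofs of Theorem~\ref{Mgroebner} and Theorem~\ref{th3rd}. Since $\mathcal{G}_d$ is by construction a generating set for $J_d$, Buchberger's criterion in the sharpened form of Proposition~\ref{relprime} reduces the task to checking that $S(p,q)\rightarrow_{\mathcal{G}_d}0$ for every pair $p,q\in\mathcal{G}_d$ whose leading monomials are \emph{not} relatively prime. First I would record the leading terms under grevlex: $\LT(g_d)=x_{11\ldots1}^2$, and $\LT(f_d^{(\bm{\alpha},\bm{\beta})})=x_{\bm{\alpha}}x_{\bm{\beta}}$ for $(\bm{\alpha},\bm{\beta})\in\mathcal{T}_d^{\mathcal{S},\mathcal{M}}$. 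The latter holds because, with the agreed lexicographic ordering of subscripts, $\bm{\alpha}\vee\bm{\beta}$ is coordinatewise — hence also lexicographically — the largest of the four pairwise distinct indices $\bm{\alpha},\bm{\beta},\bm{\alpha}\wedge\bm{\beta},\bm{\alpha}\vee\bm{\beta}$ (pairwise distinctness uses that $\bm{\alpha},\bm{\beta}$ are incomparable, which is forced since $\mathcal{M}\neq\emptyset\neq\mathcal{S}\setminus\mathcal{M}$), so in $f_d^{(\bm{\alpha},\bm{\beta})}=x_{\bm{\alpha}}x_{\bm{\beta}}-x_{\bm{\alpha}\wedge\bm{\beta}}x_{\bm{\alpha}\vee\bm{\beta}}$ the exponent difference has its last nonzero entry, namely at the low-ranked variable $x_{\bm{\alpha}\vee\bm{\beta}}$, equal to $-1$, which makes $x_{\bm{\alpha}}x_{\bm{\beta}}$ the leading term. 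Since every admissible pair has $\bm{\alpha}\neq(1,\ldots,1)$ and $\bm{\beta}\neq(1,\ldots,1)$, the monomial $x_{11\ldots1}^2$ is relatively prime to every $\LM(f_d^{(\bm{\alpha},\bm{\beta})})$, so $g_d$ never contributes to the $S$-polynomial check, exactly as in the lower-order cases. Conceptually, $J_d$ with $g_d$ removed is the Hibi ideal of the distributive lattice $\uglate{n_1}\times\cdots\times\uglate{n_d}$ ordered coordinatewise, and $\mathcal{G}_d\setminus\{g_d\}$ is a non-redundant subset of its straightening relations; the Gröbner basis property of such relations is classical, and the argument below re-derives it in this concrete setting.

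It remains to treat $p=f_d^{(\bm{\alpha},\bm{\beta})}$ and $q=f_d^{(\bm{\gamma},\bm{\delta})}$ whose leading monomials share a variable. Using $f_d^{(\bm{\alpha},\bm{\beta})}=f_d^{(\bm{\beta},\bm{\alpha})}$, I normalize so that the shared index sits in the first slot of each, i.e.\ $\bm{\gamma}=\bm{\alpha}$ with $\bm{\beta}\neq\bm{\delta}$, whence
\[
S(p,q)=x_{\bm{\delta}}\,p-x_{\bm{\beta}}\,q=-x_{\bm{\delta}}\,x_{\bm{\alpha}\wedge\bm{\beta}}\,x_{\bm{\alpha}\vee\bm{\beta}}+x_{\bm{\beta}}\,x_{\bm{\alpha}\wedge\bm{\delta}}\,x_{\bm{\alpha}\vee\bm{\delta}}.
\]
The goal is to show that the standard reduction of this element modulo $\mathcal{G}_d$ terminates at $0$, by the mechanism already visible in Theorem~\ref{th3rd}: one cancels the largest common monomial factor of the two cubic terms (present precisely when $\bm{\alpha}\wedge\bm{\beta}=\bm{\alpha}\wedge\bm{\delta}$, and otherwise after one straightening step) and is left with a binomial of the form $\pm(x_{\bm{\mu}}x_{\bm{\nu}}-x_{\bm{\mu}\wedge\bm{\nu}}x_{\bm{\mu}\vee\bm{\nu}})$, i.e.\ again a polynomial of type~\eqref{TH:defpol}; when no single such binomial suffices one peels off two of them, as in the $\mathcal{S}_2$/$\mathcal{S}_1$ split there. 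Whenever a binomial produced in this way is a minor of some matricization that is not literally one of the chosen generators in $\mathcal{G}_d$ — the phenomenon exemplified by $h(\mbf{x})=-x_{1234}x_{2343}+x_{1243}x_{2334}$ — one rewrites it as a sum of two generators coming from different matricization families, exactly as in the proof that $J_3=J_{3,\text{TT}}$; this straightening step always applies and guarantees termination. Carrying out this verification uniformly over all pairs of index data $(\mathcal{S},\mathcal{M})$, $(\mathcal{S}',\mathcal{M}')$ — organised by how the shared index and the ``switched'' sets interact, which governs both whether $\bm{\alpha}\wedge\bm{\beta}=\bm{\alpha}\wedge\bm{\delta}$ and how $\bm{\beta}$ and $\bm{\delta}$ compare coordinatewise — is the main obstacle and the only laborious part of the argument; it is a routine but lengthy generalization of the third-order computation.

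Reducedness then follows as in Theorem~\ref{Mgroebner}. Every generator is monic in its leading term, and the leading monomials are pairwise distinct: those of the $f_d^{(\bm{\alpha},\bm{\beta})}$ are squarefree of degree two, and the index set $\mathcal{P}_{\mathcal{S}}$ is designed so that for each unordered incomparable pair $\skup{\bm{\alpha},\bm{\beta}}$ exactly one ordered representative occurs among the $\mathcal{T}_d^{\mathcal{S},\mathcal{M}}$, while $\LM(g_d)=x_{11\ldots1}^2$ is not squarefree and is coprime to all of them. No leading monomial divides another, since distinct squarefree degree-two monomials are incomparable under divisibility and $x_{11\ldots1}^2$ neither divides nor is divisible by any $x_{\bm{\gamma}}x_{\bm{\delta}}$ (this would force $\bm{\gamma}=\bm{\delta}$). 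Finally, no non-leading term of a generator lies in the ideal generated by the other leading monomials: the tail $x_{\bm{\alpha}\wedge\bm{\beta}}x_{\bm{\alpha}\vee\bm{\beta}}$ of $f_d^{(\bm{\alpha},\bm{\beta})}$ is a product of two \emph{comparable} indices, with $\bm{\alpha}\wedge\bm{\beta}<\bm{\alpha}\vee\bm{\beta}$ strictly as $\bm{\alpha}\neq\bm{\beta}$, hence can equal neither a leading monomial $x_{\bm{\gamma}}x_{\bm{\delta}}$ (whose factors are incomparable) nor $x_{11\ldots1}^2$; and the terms $x_{i_1\ldots i_d}^2$ and $-1$ of $g_d$ are not divisible by any $x_{\bm{\gamma}}x_{\bm{\delta}}$ because $\bm{\gamma}\neq\bm{\delta}$ for every generator. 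Hence $\mathcal{G}_d$ is the reduced Gröbner basis of $J_d$ with respect to the grevlex order.
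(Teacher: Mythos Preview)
Your framework matches the paper's: Buchberger's criterion together with the observation that $\LM(g_d)=x_{11\ldots1}^2$ is coprime to every $\LM(f_d^{(\bm{\alpha},\bm{\beta})})$, so only $S$-polynomials between two binomial generators need to be reduced. Your identification of the leading term of $f_d^{(\bm{\alpha},\bm{\beta})}$ is correct, and your reducedness argument is in fact more detailed than the paper's (which simply invokes that all generators are distinct with degree-two leading terms).

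The difference lies in how the central $S$-polynomial reduction is handled. You propose to organise the verification by case analysis on the pairs $(\mathcal{S},\mathcal{M})$, $(\mathcal{S}',\mathcal{M}')$, in direct analogy with the $d=3$ computation, and you flag this as ``routine but lengthy'' without carrying it out. The paper avoids the case split entirely: it observes that any such $S$-polynomial has the shape $x_{\bm{\alpha}^1}x_{\bm{\alpha}^2}x_{\bm{\alpha}^3}-x_{\bar{\bm{\alpha}}^1}x_{\bar{\bm{\alpha}}^2}x_{\bar{\bm{\alpha}}^3}$ with $\{\alpha^1_k,\alpha^2_k,\alpha^3_k\}=\{\bar\alpha^1_k,\bar\alpha^2_k,\bar\alpha^3_k\}$ for every $k$, and then runs a uniform division procedure (their Algorithm~\ref{Division}). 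The key step is that the leading monomial of the current remainder must involve two incomparable indices unless the three indices are already a chain, in which case that monomial is the unique grevlex-minimal cubic with those coordinate multisets and hence cannot be the leading term. This yields termination in at most six iterations and gives the reduction $S(f_1,f_2)\rightarrow_{\mathcal{G}_d}0$ without ever distinguishing which $(\mathcal{S},\mathcal{M})$ the generators came from. Your approach would also succeed (and your Hibi-ideal remark is the conceptual reason why), but the paper's argument is both shorter and uniform in $d$; in particular it sidesteps the issue you raise of binomials that are minors but not literally generators, since the algorithm only ever divides by elements of $\mathcal{G}_d$.
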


\begin{proof}
Again, we use Buchberger's criterion stated in Theorem~\ref{GrS}.
First notice that the polynomials $g_d$ and $f_{d}^{\zag{\bm{\alpha},\bm{\beta}}}$ are always relatively prime, 
since $\LM(g_d)=x_{11\ldots 1}^2$ and $\LM(f_{d}^{\zag{\bm{\alpha},\bm{\beta}}})=x_{\bm{\alpha}} x_{\bm{\beta}}$ for  $(\bm{\alpha},\bm{\beta}) \in \mathcal{T}_d^{\mathcal{M},\mathcal{S}}$, where $\mathcal{S} \in \mathcal{S}_{\uglate{d}}$ and $\mathcal{M} \in \mathcal{P}_{\mathcal{S}}$. Therefore, we need to show that $S(f_1,f_2) \rightarrow_{\mathcal{G}_d} 0$, for all $f_1,f_2 \in \mathcal{G}_d\backslash \{g_d\}$ with 
$f_1\neq f_2$. To this end, we analyze the division algorithm on $\left\langle\mathcal{G}_d\right\rangle$. 

Let $f_1,f_2 \in \mathcal{G}_d$ with $f_1 \neq f_2$. Then it holds $\LM(f_1) \neq \LM(f_2)$. 
If these leading monomials are not relatively prime, the $S$-polynomial is of the form $$S(f_1,f_2)=x_{\bm{\alpha}^1} x_{\bm{\alpha}^2} x_{\bm{\alpha}^3} - x_{\bar{\bm{\alpha}}^1} x_{\bar{\bm{\alpha}}^2}x_{\bar{\bm{\alpha}}^3} $$ with $\skup{\alpha_k^1, \alpha_k^2, \alpha_k^3}= \skup{\bar{\alpha}_k^1,\bar{\alpha}_k^2, \bar{\alpha}_k^3}$ for all $k \in \uglate{d}$.

The step-by-step procedure of the division algorithm for our scenario is presented in Algorithm~\ref{Division}. We will show that the algorithm eventually stops
and that step 2) is feasible, i.e., that there always exist $k$ and $\ell$ such that line 7 of Algorithm~\ref{Division}  holds -- provided that $S^i \neq 0$.
(In fact, the purpose of the algorithm is to achieve the condition that in the $i$th iteration of the algorithm $\hat{\alpha}_k^{1,i} \leq \hat{\alpha}_k^{2,i} \leq \hat{\alpha}_k^{3,i}$,
 for all $k \in \uglate{d}$.)  
 This will show then that 
$S(f_1,f_2) \rightarrow_{\mathcal{G}_d} 0$.

\begin{algorithm}[h!]
\caption{The division algorithm on the ideal $\interval{\mathcal{G}_d}$.} 
\label{Division}
\begin{algorithmic}
\REQUIRE polynomials $f_1,f_2 \in \mathcal{G}_d$
\STATE $S^0 = S(f_1,f_2)=x_{\bm{\alpha}^1}x_{\bm{\alpha}^2}x_{\bm{\alpha}^3}-x_{\bar{\bm{\alpha}}^1} x_{\bar{\bm{\alpha}}^2} x_{\bar{\bm{\alpha}}^3}$, $i = 0$
\WHILE{$S^i \neq 0$}
\STATE $1)$ \bf{Let} $\LM(S^i)=x_{\hat{\bm{\alpha}}^{1,i}} x_{\hat{\bm{\alpha}}^{2,i}} x_{\hat{\bm{\alpha}}^{3,i}}$ \textbf{and} $\NLM(S^i)=\aps{S^i-\LT(S^i)}$  
\STATE $2)$ \bf{Find indices} $\bm{\alpha}^{1,i}, \bm{\alpha}^{2,i} \in \{\hat{\bm{\alpha}}^{1,i},\hat{\bm{\alpha}}^{2,i},\hat{\bm{\alpha}}^{3,i}\}$ \textbf{such that there exist}  at least one $k$ and at least one $\ell$ for which 
$${\alpha}_k^{1,i} < {\alpha}_k^{2,i} \quad \text{ and } \quad {\alpha}_{\ell}^{1,i} > {\alpha}_{\ell}^{2,i} \quad \text{s.t.\ } \quad \mathcal{M}_i:=\skup{\ell \in \uglate{d}: \alpha_{\ell}^{1,i} > \alpha_{\ell}^{2,i}} \in \mathcal{P}_{\mathcal{S}},$$
 where  $\mathcal{S}:=\skup{k \in \uglate{d}: \alpha_k^{1,i} \neq \alpha_k^{2,i}}$ 
 and let $\bm{\alpha}^{3,i}$ be the remaining index in $\{\hat{\bm{\alpha}}^{1,i},\hat{\bm{\alpha}}^{2,i},\hat{\bm{\alpha}}^{3,i}\} \backslash \{\bm{\alpha}^{1,i},\bm{\alpha}^{2,i}\}$.  
\STATE$3)$ Divide $S^i$ by $f_{d}^{\zag{\bm{\alpha}^{1,i}, \bm{\alpha}^{2,i}}}=x_{\bm{\alpha}^{1,i}}x_{\bm{\alpha}^{2,i}}-x_{\bm{\alpha}^{1,i} \wedge \bm{\alpha}^{2,i}}x_{\bm{\alpha}^{1,i} \vee \bm{\alpha}^{2,i}}$ to obtain 
 \begin{align*}
S^{i}&=\LC(S^i)\big[x_{\bm{\alpha}^{3,i}}(-x_{\bm{\alpha}^{1,i} \wedge \bm{\alpha}^{2,i}}x_{\bm{\alpha}^{1,i} \vee \bm{\alpha}^{2,i}}+x_{\bm{\alpha}^{1,i}}x_{\bm{\alpha}^{2,i}}) \\
 & \hspace{4cm}+x_{\bm{\alpha}^{1,i} \wedge \bm{\alpha}^{2,i}}x_{\bm{\alpha}^{1,i} \vee \bm{\alpha}^{2,i}}x_{\bm{\alpha}^{3,i}}-\NLM(S^i)\big].
 \end{align*}
\STATE $4)$ Define 
$$S^{i+1}:=x_{\bm{\alpha}^{1,i} \wedge \bm{\alpha}^{2,i}}x_{\bm{\alpha}^{1,i} \vee \bm{\alpha}^{2,i}}x_{\bm{\alpha}^{3,i}}-\NLM(S^i).$$ 
\STATE$5)$ $i=i+1$ 
\ENDWHILE
\end{algorithmic}
\end{algorithm}

Before passing to the general proof, we illustrate the division algorithm on an example for $d=4$. The experienced reader
may skip this example.

\medskip

Let  
$f_1(\mbf{x}):=f_{4}^{(1212,2123)}(\mbf{x})=-x_{1112}x_{2223}+x_{1212}x_{2123} \in \mathcal{G}_4$ (with the corresponding sets $\mathcal{S}=\{1,2,3,4\}$, $\mathcal{M}=\{2\}$) and  
$f_2(\mbf{x}):=f_{4}^{(3311,2123)}(\mbf{x})=-x_{2111}x_{3323}+x_{3311}x_{2123} \in \mathcal{G}_4$ (with the corresponding sets  $\mathcal{S}=\{1,2,3,4\}$, $\mathcal{M}=\{1,2\}$). 
We will show that $S(f_1,f_2)=-x_{1112}x_{2223}x_{3311}+x_{1212}x_{2111}x_{3323} \rightarrow_{\mathcal{G}_4} 0$ by going through the division algorithm.

In iteration $i=0$ we set $S^0=S(f_1,f_2)=-x_{1112}x_{2223}x_{3311}+x_{1212}x_{2111}x_{3323}$. The leading monomial is 
$\LM(S^0)=x_{1112}x_{2223}x_{3311}$, the leading coefficient is $\LC(S^0)=-1$, and the non-leading monomial is $\NLM(S^0)=x_{1212}x_{2111}x_{3323}$. 
Among the two options for choosing a pair of indexes $(\bm{\alpha}^{1,0},\bm{\alpha}^{2,0})$ in step 2), we decide to take
$\bm{\alpha}^{1,0}=1112$ and $\bm{\alpha}^{2,0}=3311$ which leads to the set 
$\mathcal{M}_0 =\{4\}$. 
The polynomial $x_{\bm{\alpha}^{1,0}}x_{\bm{\alpha}^{2,0}}-x_{\bm{\alpha}^{1,0} \wedge \bm{\alpha}^{2,0}}x_{\bm{\alpha}^{1,0} \vee \bm{\alpha}^{2,0}}$ then equals the polynomial $ f_{4}^{(1112,3311)}(\mbf{x})=-x_{1111}x_{3312}+x_{1112}x_{3311} \in \mathcal{G}_4$ 
and we can write
$$S^0=-1\cdot \Big(x_{2223}\zag{-x_{1111}x_{3312}+x_{1112}x_{3311}}+\underbrace{x_{1111}x_{2223}x_{3312}-x_{1212}x_{2111}x_{3323}}_{\displaystyle =S^1}\Big) .$$
The leading and non-leading monomials of $S^1$ are $\LM(S^1)=x_{1111}x_{2223}x_{3312}$ and $\NLM(S^1)=x_{1212}x_{2111}x_{3323}$, respectively,
while $\LC(S^1)=1$. The only option for a pair of indices as in line 7 of Algorithm~\ref{Division}   
is $\bm{\alpha}^{1,1}=3312,\bm{\alpha}^{2,1}=2223$, so that the set 
$\mathcal{M}_1=\{1,2\}$. 
The divisor $x_{\bm{\alpha}^{1,1}}x_{\bm{\alpha}^{2,1}}-x_{\bm{\alpha}^{1,1} \wedge \bm{\alpha}^{2,1}}x_{\bm{\alpha}^{1,1} \vee \bm{\alpha}^{2,1}}$ in the step $4)$ equals $f_{4}^{(3312,2223)}(\mbf{x})= -x_{2212}x_{3323}+x_{3312}x_{2223}\in \mathcal{G}_4$  and we obtain 
$$
S^1=1\cdot \Big(x_{1111}\zag{-x_{2212}x_{3323}+x_{2223}x_{3312}}+ \underbrace{x_{1111}x_{2212}x_{3323}-x_{1212}x_{2111}x_{3323}}_{\displaystyle = S^2}\Big) . 
$$
The index sets of the monomial $x_{\bm{\alpha}^1}x_{\bm{\alpha}^2}x_{\bm{\alpha}^3}=x_{1111}x_{2212}x_{3323}$ in $S^2$ satisfy
$$\alpha_k^1 \leq \alpha_k^2 \leq \alpha_k^3 \quad \text{ for all } k \in \uglate{4} $$
and therefore it is the non-leading monomial of $S^2$, i.e., $\NLM(S^2)=x_{1111}x_{2212}x_{3323}$. Thus, $\LM(S^2)=x_{1212}x_{2111}x_{3323}$ 
and $\LC(S^2(f_1,f_2))=-1$.
Now the only option for a pair of indices as in step $2)$ is $\bm{\alpha}^{1,2} =2111 $, $ \bm{\alpha}^{2,2} = 1212$ with $\mathcal{M}_2=\{1\}$.
This yields
$$ S^2=-1 \cdot \Big( x_{3323}\zag{-x_{1111}x_{2212}+x_{2111}x_{1212}}+
\underbrace{x_{1111}x_{2212}x_{3323} -x_{1111}x_{2212}x_{3323}}_{\displaystyle = S^3 = 0}\Big). $$
Thus, the division algorithm stops and we obtained after three steps 
\begin{align*}
{S}(f_1,f_2)=S^0& =\LC(S^0)x_{2223}f_{4}^{(1112,3311)}(\mbf{x})+\LC(S^0) \LC(S^1)x_{1111} f_{4}^{(3312,2223)}(\mbf{x}) \\
&  +\LC(S^0)\LC(S^1)\LC(S^2) x_{3323}f_{4}^{(2111,1212)}(\mbf{x}).
\end{align*}
Thus, $S(f_1,f_2) \rightarrow_{\mathcal{G}_4} 0$.

\medskip

Let us now return to the general proof. We first show that there always exist indices $\bm{\alpha}^{1,i}, \bm{\alpha}^{2,i}$ satisfying line 7 of  Algorithm~\ref{Division}  
unless $S^i =0$.
We start by setting $\mbf{x}^{\bm{\alpha}_{i}}=x_{\hat{\bm{\alpha}}^{1,i}}x_{\hat{\bm{\alpha}}^{2,i}}x_{\hat{\bm{\alpha}}^{3,i}}$ 
with $x_{\hat{\bm{\alpha}}^{1,i}} \geq x_{\hat{\bm{\alpha}}^{2,i}} \geq x_{\hat{\bm{\alpha}}^{3,i}}$ to be the leading monomial 
and $\mbf{x}^{\bm{\beta}_{{i}}}$ to be the non-leading  monomial of $S^i$. 
The existence of a polynomial $h \in \mathcal{G}_d$ such that $\LM(h)$ divides 
$\LM(S^i)=x_{\hat{\bm{\alpha}}^{1,i}}x_{\hat{\bm{\alpha}}^{2,i}}x_{\hat{\bm{\alpha}}^{3,i}}=\mbf{x}^{\bm{\alpha}_i}$ is equivalent to the existence of 
$\bm{\alpha}^{1,i},\bm{\alpha}^{2,i} \in \skup{\hat{\bm{\alpha}}^{1,i},\hat{\bm{\alpha}}^{2,i},\hat{\bm{\alpha}}^{3,i}}$ such that there exists at least one $k$ and at least one $\ell$ 
for which ${\alpha}_k^{1,i} < {\alpha}_k^{2,i}$  
and ${\alpha}_{\ell}^{1,i}>{\alpha}_{\ell}^{2,i}$.
If such pair does not exist in iteration $i$, we have 
\begin{equation} \label{TH:index}
\hat{\alpha}_k^{1,i} \leq \hat{\alpha}_k^{2,i} \leq \hat{\alpha}_k^{3,i} \quad \text{ for all } k \in \uglate{d}. 
\end{equation}
We claim that this cannot happen if $S^{i} \neq 0$. In fact, \eqref{TH:index} would imply that the monomial $\mbf{x}^{\bm{\alpha}_{i}}=x_{\hat{\bm{\alpha}}^{1,i}}x_{\hat{\bm{\alpha}}^{2,i}}x_{\hat{\bm{\alpha}}^{3,i}}$ is the smallest monomial $x_{\bm{\beta}} x_{\bm{\gamma}} x_{\bm{\eta}}$ (with respect to the grevlex order) which satisfies 
$$ \skup{\beta_k, \gamma_k, \eta_k}=\{ \hat{\alpha}_k^{1,i},\hat{\alpha}_k^{2,i},\hat{\alpha}_k^{3,i}\} \quad \text{for all } k \in \uglate{d}.$$
However, then $\mbf{x}^{\bm{\alpha}_{i}}$ would not be the leading monomial by definition of the grevlex order, which leads to a contradiction.
Hence, we can always find indices $\bm{\alpha}^{1,i}, \bm{\alpha}^{2,i}$ satisfying line 7 
 in step 2) of  Algorithm~\ref{Division} unless $S^i =0$.

Next we show that the division algorithm always stops in a finite number of steps.
We start with iteration $i=0$ and assume that $S^0 \neq 0$. We choose
$\bm{\alpha}^{1,0},\bm{\alpha}^{2,0},\bm{\alpha}^{3,0}$  as in step 2) of Algorithm~\ref{Division}. Then we divide the polynomial $S^0$  by a 
polynomial $h \in \mathcal{G}_d$ such that $\LM(h)=x_{\bm{\alpha}^{1,0}}x_{\bm{\alpha}^{2,0}}$. The polynomial $h \in \mathcal{G}_d$ is defined 
as in step 3) of the algorithm, i.e.,
$$h(\mbf{x}) =f_{d}^{\zag{\bm{\alpha}^{1,0}, \bm{\alpha}^{2,0}}}=x_{\bm{\alpha}^{1,0}}x_{\bm{\alpha}^{2,0}}- x_{\bm{\alpha}^{1,0} \wedge \bm{\alpha}^{2,0}}x_{\bm{\alpha}^{1,0} \vee \bm{\alpha}^{2,0}}\in \mathcal{G}_d.$$
The division of $S^0$ by $h$ results in
$$ S^0 =\LC(S^0) \Big(x_{{\bm{\alpha}}^{3,0}} \cdot f_{d}^{\zag{\bm{\alpha}^{1,0}, \bm{\alpha}^{2,0}}}
+\underbrace{x_{\bm{\alpha}^{1,0} \wedge \bm{\alpha}^{2,0}}x_{\bm{\alpha}^{1,0} \vee \bm{\alpha}^{2,0}}x_{\bm{\alpha}^{3,0}}- \NLM(S^0)}_{\displaystyle = S^1}\Big).$$
Note that by construction
\begin{equation}\label{TH:cond:I0}
\uglate{\bm{\alpha}^{1,0} \wedge \bm{\alpha}^{2,0}}_k \leq \uglate{\bm{\alpha}^{1,0} \vee \bm{\alpha}^{2,0}}_k \quad \mbox{ for all } k \in \uglate{d}.
\end{equation}

If $S^1 \neq 0$, then in the following iteration $i=1$ we can assume $\LM(S^1)=x_{\bm{\alpha}^{1,0} \wedge \bm{\alpha}^{2,0}} x_{\bm{\alpha}^{1,0} \wedge \bm{\alpha}^{2,0}} x_{\bm{\alpha}^{3,0}}$. 
Due to \eqref{TH:cond:I0}, a pair $\bm{\alpha}^{1,1}, \bm{\alpha}^{2,1}$ as in line 7 of Algorithm~\ref{Division} can be either $\bm{\alpha}^{1,0} \wedge \bm{\alpha}^{2,0}, \bm{\alpha}^{3,0}$ or $\bm{\alpha}^{1,0} \vee \bm{\alpha}^{2,0}, \bm{\alpha}^{3,0}$. 
Let us assume the former.  
Then this iteration results in
$$
S^1=\LC(S^1)\Big(x_{\bm{\alpha}^{3,1}}\cdot f_{d}^{\zag{\bm{\alpha}^{1,1},\bm{\alpha}^{2,1}}} +\underbrace{x_{\bm{\alpha}^{1,1} \wedge \bm{\alpha}^{2,1}} x_{\bm{\alpha}^{1,1} \vee \bm{\alpha}^{2,1}} x_{\bm{\alpha}^{3,1}} -\NLM(S^0)}_{\displaystyle=S^2}\Big)
$$ 
with 
$$ 
\uglate{\bm{\alpha}^{1,1} \wedge \bm{\alpha}^{2,1}}_k \leq \uglate{\bm{\alpha}^{3,1}}_k , \uglate{\bm{\alpha}^{1,1} \vee \bm{\alpha}^{2,1}}_k \quad \text{ for all } k \in \uglate{d}, \text{ and } x_{\bm{\alpha}^{3,1}}=x_{\bm{\alpha}^{1,0} \vee \bm{\alpha}^{2,0}}.
$$ 
Next, if $S^2 \neq 0$ and  
$\LM(S^2)=x_{\bm{\alpha}^{1,1} \wedge \bm{\alpha}^{2,1}} x_{\bm{\alpha}^{1,1} \vee \bm{\alpha}^{2,1}} x_{\bm{\alpha}^{3,1}}$ then a pair of indices satisfying line 7 of Algorithm~\ref{Division}  
must be $\bm{\alpha}^{1,1} \vee \bm{\alpha}^{2,1},\bm{\alpha}^{3,1}$ 
 so that the iteration ends up with 
$$
S^2=\LC(S^2)\Big(x_{\bm{\alpha}^{3,2}} \cdot f_{d}^{\zag{\bm{\alpha}^{1,2},\bm{\alpha}^{2,2}}}+\underbrace{x_{\bm{\alpha}^{1,2} \wedge \bm{\alpha}^{2,2}} x_{\bm{\alpha}^{1,2} \vee \bm{\alpha}^{2,2}}x_{\bm{\alpha}^{3,2}}-\NLM(S^0)}_{\displaystyle=S^3}\Big)
$$ 
such that 
$$ 
\uglate{\bm{\alpha}^{3,2}}_k \leq \uglate{\bm{\alpha}^{1,2} \wedge \bm{\alpha}^{2,2}}_k \leq \uglate{\bm{\alpha}^{1,2} \vee \bm{\alpha}^{2,2}}_k
\quad \text{ for all } k \in \uglate{d}, \text{ and } x_{\bm{\alpha}^{3,2}}=x_{\bm{\alpha}^{1,1}\wedge \bm{\alpha}^{2,1}}. 
$$ 
Thus, in iteration $i=3$ the leading monomial $\LM(S^3)$ must be $\NLM(S^0)$ (unless $S^3=0$). 

A similar analysis can be performed on the monomial 
$\NLM(S^0)$ and therefore the algorithm stops after at most $6$ iterations. 
The division algorithm results in 
$$
S(f_1,f_2)=\sum_{i=0}^{p} \zag{\prod_{j=0}^i \LC(S^j)} x_{\bm{\alpha}^{3,i}}\cdot f_{d}^{\zag{\bm{\alpha}^{1,i}, \bm{\alpha}^{2,i}}},
$$ 
where $f_{d}^{\zag{\bm{\alpha}^{1,i}, \bm{\alpha}^{2,i}}}=-x_{\bm{\alpha}^{1,i} \wedge \bm{\alpha}^{2,i}}x_{\bm{\alpha}^{1,i} \vee \bm{\alpha}^{2,i}}+x_{\bm{\alpha}^{1,i}}x_{\bm{\alpha}^{2,i}} \in \mathcal{G}_d$ and $p \leq 5$.
All the cases that we left out above are treated in a similar way.
This shows that $\mathcal{G}_d$ is a Gr{\"o}bner basis of ${J}_d$. 

In order to show that $\mathcal{G}_d$ is the {\it reduced} Gr{\"o}bner basis of ${J}_d$, first notice that $\LC(g)=1$ for all $g \in \mathcal{G}_d$. 
Furthermore, the leading term of any polynomial in  $\mathcal{G}_d$ is of degree two. Thus, it is enough to show that for every pair of different polynomials $f_{d}^{(\bm{\alpha}^1,\bm{\beta}^1)},f_{d}^{(\bm{\alpha}^2,\bm{\beta}^2)} \in \mathcal{G}_d$ (related to $\mathcal{S}_1, \mathcal{M}_1$ and $\mathcal{S}_2,\mathcal{M}_2$, respectively) it holds that $\LM(f_{d}^{(\bm{\alpha}^1,\bm{\beta}^1)})\neq \LM(f_{d}^{(\bm{\alpha}^2,\bm{\beta}^2)})$ with $(\bm{\alpha}^k,\bm{\beta}^k) \in \mathcal{T}_d^{\mathcal{S}_k,\mathcal{M}_k}$ for $k = 1,2$. But this follows from the fact that all elements of $\mathcal{G}_d$ are different as remarked before the statement of the theorem. 
\end{proof}

We define the tensor $\theta_k$-norm analogously to the matrix scenario. 
\begin{definition} \label{def:theta}
The tensor \textit{$\theta_k$-norm}, denoted by $\norma{\cdot}_{\theta_k}$, is the norm induced by the $k$-theta body $\TB_k\zag{J_d}$, i.e.,
\begin{equation*}
\norma{\mbf{X}}_{\theta_k}=\inf \skup{r: \mbf{X} \in r\TB_k\zag{J_d}}.
\end{equation*}
\end{definition}
The $\theta_k$-norm can be computed with the help of Theorem~\ref{closureQ}, i.e., as 
\[
\norma{\mbf{X}}_{\theta_k} = \min t \quad \mbox{ subject to } \mbf{X} \in t \mbf{Q}_{\mathcal{B}_k}(J_d).
\]
Given the moment matrix $\mbf{M}_{\mathcal{B}_k}[\mbf{y}]$ associated with $J_d$, this minimization program is equivalent to the semidefinite program
\begin{equation}\label{matrix:sdp}
\min_{t \in \R , \mbf{y} \in \R^{\mathcal{B}_k}} t \quad \mbox{ subject to } \quad  \mbf{M}_{\mathcal{B}_k}[\mbf{y}] \succcurlyeq 0, y_0 = t, \mbf{y}_{\mathcal{B}_1} = \mbf{X}.
\end{equation}

We have focused on the polynomial ideal generated by all second order minors of all matricizations of the tensor.
One may also consider a subset of all possible matricizations corresponding to various tensor decompositions and notions of tensor rank.
For example, the Tucker(HOSVD)-rank (corresponding to the Tucker or HOSVD decomposition) of a $d$th-order tensor $\mbf{X}$ is a $d$-dimensional vector $\mbf{r}_{HOSVD}=(r_1,r_2,\ldots,r_d)$ such that $r_i=\rank\zag{\mbf{X}^{\{i\}}}$ for all $i \in \uglate{d}$, see \cite{grasedyck2010hierarchical}.  Thus, we can define an ideal $J_{d,\text{HOSVD}}$ generated by all second order minors of unfoldings $\mbf{X}^{\{k\}}$, for $k \in \uglate{d}$.

The tensor train (TT) decomposition is another popular approach for tensor computations. 
The corresponding TT-rank of a $d$th-order tensor $\mbf{X}$ is a $(d-1)$-dimensional 
vector $\mbf{r}_{TT}=(r_1,r_2,\ldots,r_{d-1})$ such that $r_i=\rank\zag{\mbf{X}^{\{1,\ldots,i\}}}$, $i \in \uglate{d-1}$, see  \cite{os11} for details. 
By taking into account only minors of order two of the matricizations $\bm{\tau} \in \skup{\{1\},\{1,2\},\ldots,\{1,2,\ldots,d-1\}}$, one may
introduce a corresponding polynomial ideal $J_{d,\text{TT}}$.

\begin{theorem}
The polynomial ideals $J_d$, $J_{d,\text{HOSVD}}$, and $J_{d,\text{TT}}$ are equal, for all $d \geq 3$.
\end{theorem}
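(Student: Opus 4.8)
The plan is to prove the cyclic chain of inclusions
\[
J_d \subseteq J_{d,\text{HOSVD}} \subseteq J_{d,\text{TT}} \subseteq J_d,
\]
which forces all three ideals to coincide. Since $g_d$ is a common generator of the three ideals, it suffices to argue about the generators coming from order-two minors of matricizations. The inclusion $J_{d,\text{TT}}\subseteq J_d$ (and likewise $J_{d,\text{HOSVD}}\subseteq J_d$) is immediate: the defining minors of $J_{d,\text{TT}}$ are in particular order-two minors of matricizations of $\mbf{X}$, and, as noted before Theorem~\ref{thmmadth}, every such minor belongs to $J_d$. So the work is contained in the other two inclusions.

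The key preparatory step is to record a dictionary between order-two minors of a matricization and ``coordinate swaps'' of pairs of multi-indices. For multi-indices $\bm{\alpha},\bm{\beta}$ and a set $\mathcal{M}\subseteq\uglate{d}$, write $\mathrm{sw}_{\mathcal{M}}\zag{\bm{\alpha},\bm{\beta}}=\zag{\bm{\gamma},\bm{\delta}}$ for the pair obtained by exchanging the entries in the coordinates indexed by $\mathcal{M}$, so $\gamma_k=\beta_k,\delta_k=\alpha_k$ for $k\in\mathcal{M}$ and $\gamma_k=\alpha_k,\delta_k=\beta_k$ otherwise. Unwinding the definition of $\mbf{X}^{\mathcal{M}}$ shows that the order-two minors of $\mbf{X}^{\mathcal{M}}$ are precisely the polynomials $x_{\bm{\alpha}}x_{\bm{\beta}}-x_{\bm{\gamma}}x_{\bm{\delta}}$ with $\zag{\bm{\gamma},\bm{\delta}}=\mathrm{sw}_{\mathcal{M}}\zag{\bm{\alpha},\bm{\beta}}$ (degenerate choices of the two rows or the two columns producing the zero polynomial), and that $\mathrm{sw}_{\mathcal{B}}\circ\mathrm{sw}_{\mathcal{A}}=\mathrm{sw}_{\mathcal{A}\triangle\mathcal{B}}$. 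Note also that for $\zag{\bm{\alpha},\bm{\beta}}\in\mathcal{T}_d^{\mathcal{S},\mathcal{M}}$ one has $\mathcal{M}=\skup{k:\alpha_k>\beta_k}$ and $\zag{\bm{\alpha}\wedge\bm{\beta},\bm{\alpha}\vee\bm{\beta}}=\mathrm{sw}_{\mathcal{M}}\zag{\bm{\alpha},\bm{\beta}}$, so each generator $f_d^{\zag{\bm{\alpha},\bm{\beta}}}$ of $J_d$ is itself a polynomial of this minor form for the set $\mathcal{M}$. I expect setting up this correspondence carefully to be the main obstacle; once it is in place, everything afterwards is telescoping bookkeeping.

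For $J_d\subseteq J_{d,\text{HOSVD}}$ I would telescope a multi-coordinate swap into single-coordinate swaps. Fix a generator $f_d^{\zag{\bm{\alpha},\bm{\beta}}}$, let $\mathcal{M}=\skup{k:\alpha_k>\beta_k}=\skup{m_1,\ldots,m_s}$, set $\zag{\bm{\alpha}^{(0)},\bm{\beta}^{(0)}}=\zag{\bm{\alpha},\bm{\beta}}$ and $\zag{\bm{\alpha}^{(j)},\bm{\beta}^{(j)}}=\mathrm{sw}_{\skup{m_j}}\zag{\bm{\alpha}^{(j-1)},\bm{\beta}^{(j-1)}}$. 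Toggling each coordinate of $\mathcal{M}$ exactly once yields $\zag{\bm{\alpha}^{(s)},\bm{\beta}^{(s)}}=\zag{\bm{\alpha}\wedge\bm{\beta},\bm{\alpha}\vee\bm{\beta}}$, hence
\[
f_d^{\zag{\bm{\alpha},\bm{\beta}}}=\sum_{j=1}^{s}\bigl(x_{\bm{\alpha}^{(j-1)}}x_{\bm{\beta}^{(j-1)}}-x_{\bm{\alpha}^{(j)}}x_{\bm{\beta}^{(j)}}\bigr),
\]
and by the dictionary each summand is zero or an order-two minor of the single unfolding $\mbf{X}^{\skup{m_j}}$, hence lies in $J_{d,\text{HOSVD}}$. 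Together with $J_{d,\text{HOSVD}}\subseteq J_d$ this gives $J_d=J_{d,\text{HOSVD}}$.

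Finally, for $J_{d,\text{HOSVD}}\subseteq J_{d,\text{TT}}$ I would show that every order-two minor of every unfolding $\mbf{X}^{\skup{k}}$ lies in $J_{d,\text{TT}}$. Such a minor equals $x_{\bm{\alpha}}x_{\bm{\beta}}-x_{\bm{\gamma}}x_{\bm{\delta}}$ with $\zag{\bm{\gamma},\bm{\delta}}=\mathrm{sw}_{\skup{k}}\zag{\bm{\alpha},\bm{\beta}}$, and since $\skup{k}=\skup{1,\ldots,k-1}\triangle\skup{1,\ldots,k}$, setting $\zag{\bm{\alpha}',\bm{\beta}'}=\mathrm{sw}_{\skup{1,\ldots,k-1}}\zag{\bm{\alpha},\bm{\beta}}$ gives $\mathrm{sw}_{\skup{1,\ldots,k}}\zag{\bm{\alpha}',\bm{\beta}'}=\zag{\bm{\gamma},\bm{\delta}}$, so
\[
x_{\bm{\alpha}}x_{\bm{\beta}}-x_{\bm{\gamma}}x_{\bm{\delta}}=\bigl(x_{\bm{\alpha}}x_{\bm{\beta}}-x_{\bm{\alpha}'}x_{\bm{\beta}'}\bigr)+\bigl(x_{\bm{\alpha}'}x_{\bm{\beta}'}-x_{\bm{\gamma}}x_{\bm{\delta}}\bigr),
\]
where by the dictionary the first term is zero or an order-two minor of $\mbf{X}^{\skup{1,\ldots,k-1}}$ and the second is zero or an order-two minor of $\mbf{X}^{\skup{1,\ldots,k}}$ (the first term vanishing when $k=1$, the second when $k=d$ since $\mbf{X}^{\skup{1,\ldots,d}}$ has no nontrivial order-two minor). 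Both $\mbf{X}^{\skup{1,\ldots,k-1}}$ and $\mbf{X}^{\skup{1,\ldots,k}}$ are among the matricizations defining $J_{d,\text{TT}}$, so the minor lies in $J_{d,\text{TT}}$. This closes the cycle $J_d=J_{d,\text{HOSVD}}\subseteq J_{d,\text{TT}}\subseteq J_d$, proving that the three ideals are equal. (The same two decompositions, specialized to $d=3$, reproduce the already-established equality $J_3=J_{3,\text{TT}}$.)
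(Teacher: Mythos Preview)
Your proof is correct and follows essentially the same approach as the paper: the telescoping of a multi-coordinate swap into single-coordinate swaps is exactly the paper's Algorithm~\ref{JdTT}, and your symmetric-difference identity $\skup{k}=\skup{1,\ldots,k-1}\triangle\skup{1,\ldots,k}$ is precisely the decomposition the paper uses to express a $\skup{k}$-minor as a sum of two TT-minors. Your swap-operator notation packages the bookkeeping a bit more cleanly, but the underlying arguments are the same.
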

\begin{proof}
Let $\bm{\tau} \subset \uglate{d}$ represent a matricization. Similarly to the case of order-three tensors, for  $\zag{\bm{\alpha},\bm{\beta}} \in \N^{2d}$, $x_{\bm{\alpha}^{\bm{\tau}}}x_{\bm{\beta}^{\bm{\tau}}}$ denotes the  monomial where $\alpha_k^{\bm{\tau}}=\alpha_k$, $\beta_k^{\bm{\tau}}=\beta_k$  for all $k \in \bm{\tau}$ and 
$\alpha_{\ell}^{\bm{\tau}}=\beta_{\ell}$, $\beta_{\ell}^{\bm{\tau}}=\alpha_{\ell}$  for all $\ell \in \bm{\tau}^c=\uglate{d}\backslash \bm{\tau}$. Moreover, $x_{\bm{\alpha}^{\bm{\tau},\mbf{0}}}x_{\bm{\beta}^{\bm{\tau},\mbf{0}}}$ denotes the  monomial where $\alpha_k^{\bm{\tau},\mbf{0}}=\alpha_k$, $\beta_k^{\bm{\tau},\mbf{0}}=\beta_k$  for all $k \in \bm{\tau}$ and 
$\alpha_{\ell}^{\bm{\tau},\mbf{0}}=\beta_{\ell}^{\bm{\tau},\mbf{0}}=0$  for all $\ell \in \bm{\tau}^c=\uglate{d}\backslash \bm{\tau}$.
The corresponding order-two minors are defined as
\begin{equation*}
f_{(\bm{\alpha},\bm{\beta})}^{\bm{\tau}}(\mbf{x})=-x_{\bm{\alpha}} x_{\bm{\beta}} + x_{\bm{\alpha}^{\bm{\tau}}} x_{\bm{\beta}^{\bm{\tau}}}, \quad  (\bm{\alpha},\bm{\beta}) \in\bm{\mathcal{T}}^{\bm{\tau}}.
\end{equation*}
We define the set $\bm{\mathcal{T}}^{\bm{\tau}}$ as
$$\bm{\mathcal{T}}^{\bm{\tau}}= \skup{\zag{\bm{\alpha},\bm{\beta}}: \bm{\alpha}^{\bm{\tau},\mbf{0}}\neq\bm{\beta}^{\bm{\tau},\mbf{0}}, \, \bm{\alpha}^{\bm{\tau}^c,\mbf{0}}\neq\bm{\beta}^{\bm{\tau}^c,\mbf{0}}}.$$
Similarly as in the case of order-three tensors, notice that $f_{(\bm{\alpha},\bm{\beta})}^{\bm{\tau}}(\mbf{x})= f_{(\bm{\beta},\bm{\alpha})}^{\bm{\tau}}(\mbf{x})=-f_{(\bm{\alpha}^{\bm{\tau}},\bm{\beta}^{\bm{\tau}})}^{\bm{\tau}}(\mbf{x})=-f_{(\bm{\beta}^{\bm{\tau}},\bm{\alpha}^{\bm{\tau}})}^{\bm{\tau}}(\mbf{x})$, for all $\zag{\bm{\alpha},\bm{\beta}} \in \bm{\mathcal{T}}^{\bm{\tau}}$.
First, we show that $J_d=J_{d,\text{HOSVD}}$ by showing that $f_{(\bm{\alpha},\bm{\beta})}^{\bm{\tau}}(\mbf{x}) \in J_{d,\text{HOSVD}}$, for all $(\bm{\alpha},\bm{\beta}) \in \bm{\mathcal{T}}^{\bm{\tau}}$ and all $\aps{\bm{\tau}}\geq 2$. Without loss of generality, we can assume that $\alpha_i \neq \beta_i$, for all $i \in \bm{\tau}$ since otherwise we can consider the matricization $\bm{\tau}\backslash\skup{i: \alpha_i = \beta_i}$. Additionally, by definition of $\bm{\mathcal{T}}^{\bm{\tau}}$, there exists at least one $\ell \in \bm{\tau}^c$ such that $\alpha_{\ell} \neq \beta_{\ell}$.
Let $\bm{\tau}=\{t_1,t_2,\ldots,t_k\}$ with $t_i<t_{i+1}$, for all $i \in \uglate{k-1}$ and $k\geq 2$. Next, fix $\zag{\bm{\alpha},\bm{\beta}} \in \mathcal{T}^{\bm{\tau}}$ and define $\bm{\alpha}^0=\bm{\alpha}$ and $\bm{\beta^0}=\bm{\beta}$. Algorithm \ref{JdTT} results in polynomials $g_k \in J_{3,\text{TT}}$ such that $f_{(\bm{\alpha},\bm{\beta})}^{\bm{\tau}}(\mbf{x})=\sum_{i=1}^k g_i(\mbf{x})$. This follows from
$$\sum_{i=1}^k g_i = \sum_{i=1}^k \zag{-x_{\bm{\alpha}^{i-1}} x_{\bm{\beta}^{i-1}}+x_{\bm{\alpha}^{i}} x_{\bm{\beta}^{i}}}= -x_{\bm{\alpha}^{0}} x_{\bm{\beta}^{0}} + x_{\bm{\alpha}^{k}} x_{\bm{\beta}^{k}}=f_{(\bm{\alpha},\bm{\beta})}^{\bm{\tau}}(\mbf{x}).  $$
By the definition of polynomials $g_k$ it is obvious that  $$g_i \in \skup{f_{(\bm{\alpha},\bm{\beta})}^{\{i\}}(\mbf{x}): \zag{\bm{\alpha},\bm{\beta}} \in \bm{\mathcal{T}}^{\{i\}}}, \text{ for all } i \in\uglate{k}.$$
\begin{algorithm}
\caption{\normalsize Algorithm for proving that $J_d=J_{d,\text{TT}}$}
\label{JdTT}
\begin{algorithmic}
\REQUIRE  An ideal $J_{d,\text{TT}} \in \R\uglate{\mbf{x}}$, polynomial $f_{(\bm{\alpha},\bm{\beta})}^{\bm{\tau}}(\mbf{x})$ with $\bm{\alpha}^0=\bm{\alpha}, \bm{\beta}^0=\bm{\beta},\bm{\tau}=\{t_1,t_2,\ldots,t_k\}$, where $k\geq 2$ 
\FOR{$i=1,\ldots,k$} 
\STATE \textbf{Define} $\bm{\alpha}^i$ \textbf{and} $\bm{\beta}^i$ \textbf{as} $$\alpha_j^i:=\begin{cases} 
\beta_j^{i-1} & \mbox{if } j = t_i, \\
\alpha_j^{i-1} & \text{otherwise} 
\end{cases} \quad  \text{and} \quad \beta_j^i:=\begin{cases} 
\alpha_j^{i-1} & \mbox{if } j = t_i, \\
\beta_j^{i-1} & \text{otherwise}.
\end{cases}
$$
\STATE \textbf{Define polynomial} $g_i(\mbf{x}):=-x_{\bm{\alpha}^{i-1}} x_{\bm{\beta}^{i-1}} + x_{\bm{\alpha}^{i}}x_{\bm{\beta}^{i}}$ . 
\ENDFOR
\ENSURE Polynomials $g_1,g_2,\ldots,g_k$.
\medskip
\end{algorithmic}
\end{algorithm}
Next, we show that $J_d=J_{d,\text{TT}}$. Since $J_d=J_{d,\text{HOSVD}}$, it is enough to show that $f_{(\bm{\alpha},\bm{\beta})}^{\{k\}} \in J_{d,\text{TT}}$, for all  $(\bm{\alpha},\bm{\beta}) \in \bm{\mathcal{T}}^{\{k\}}$ and all $k \in \uglate{d}$. By definition of $J_{d,\text{TT}}$ this is true for $k=1$.  Fix $k \in \{2,3,\ldots,d\}$, $(\bm{\alpha},\bm{\beta}) \in \bm{\mathcal{T}}^{\{k\}}$ and consider a polynomial  $f(\mbf{x})=f_{(\bm{\alpha},\bm{\beta})}^{\{k\}}(\mbf{x})$ corresponding to the second order minor of the matricization $\mbf{X}^{\{k\}}$. By definition of $\bm{\mathcal{T}}^{\{k\}}$, $\alpha_k\neq \beta_k$  and there exists an index $i \in \uglate{d}\backslash\{k\}$ such that $\alpha_i \neq \beta_i$. Assume that $i>k$.
 Define the polynomials
$g(\mbf{x}) \in \bm{\mathcal{R}}^{\{1,2,\ldots,k\}}:=\skup{f_{(\bm{\alpha},\bm{\beta})}^{\{1,2,\ldots,k\}}(\mbf{x}): \zag{\bm{\alpha},\bm{\beta}} \in \bm{\mathcal{T}}^{\{1,2,\ldots,k\}}} $ and $h(\mbf{x}) \in \bm{\mathcal{R}}^{\{1,2,\ldots,k-1\}}:=\skup{f_{(\bm{\alpha},\bm{\beta})}^{\{1,2,\ldots,k-1\}}(\mbf{x}): \zag{\bm{\alpha},\bm{\beta}} \in \bm{\mathcal{T}}^{\{1,2,\ldots,k-1\}}}$  as
\begin{align*}
g(\mbf{x})& = -x_{\bm{\alpha}} x_{\bm{\beta}} + x_{\bm{\alpha}^{\{1,2,\ldots,k\}}} x_{\bm{\beta}^{\{1,2,\ldots,k\}}} \\
h(\mbf{x})& = - x_{\bm{\alpha}^{\{1,2,\ldots,k\}}} x_{\bm{\beta}^{\{1,2,\ldots,k\}}} +  x_{{\bm{\alpha}^{\{1,2,\ldots,k\}}}^{\{1,2,\ldots,k-1\}}} x_{{\bm{\beta}^{\{1,2,\ldots,k\}}}^{\{1,2,\ldots,k-1\}}} 
\end{align*}
Since  $ x_{{\bm{\alpha}^{\{1,2,\ldots,k\}}}^{\{1,2,\ldots,k-1\}}} x_{{\bm{\beta}^{\{1,2,\ldots,k\}}}^{\{1,2,\ldots,k-1\}}}=x_{\bm{\alpha}^{\{k\}}} x_{\bm{\beta}^{\{k\}}}$,  we have $f(\mbf{x})=g(\mbf{x})+h(\mbf{x})$ and thus $f \in J_{d,\text{TT}}$. If $i<k$ notice that $f(\mbf{x})=g_1(\mbf{x})+h_1(\mbf{x})$, where
\begin{align*}
g_1(\mbf{x})& = -x_{\bm{\alpha}} x_{\bm{\beta}} + x_{\bm{\alpha}^{\{1,2,\ldots,k-1\}}} x_{\bm{\beta}^{\{1,2,\ldots,k-1\}}} \in \bm{\mathcal{R}}^{\{1,2,\ldots,k-1\}}\\
h_1(\mbf{x})& = - x_{\bm{\alpha}^{\{1,2,\ldots,k-1\}}} x_{\bm{\beta}^{\{1,2,\ldots,k-1\}}} +  x_{{\bm{\alpha}^{\{1,2,\ldots,k-1\}}}^{\{1,2,\ldots,k\}}} x_{{\bm{\beta}^{\{1,2,\ldots,k-1\}}}^{\{1,2,\ldots,k\}}}\\
 &=-x_{\bm{\alpha}^{\{1,2,\ldots,k\}}} x_{\bm{\beta}^{\{1,2,\ldots,k\}}} + x_{\bm{\alpha}^{\{k\}}} x_{\bm{\beta}^{\{k\}}} \in \bm{\mathcal{R}}^{\{1,2,\ldots,k\}}. 
\end{align*}
\end{proof}
\begin{remark}
Fix a decomposition tree $T_I$ which generates a particular HT-decomposition and consider the ideal $J_{d,\text{HT},T_I}$ generated by all second order minors corresponding to the matricizations induced by the tree $T_I$. In a similar way as above, one can obtain that  $J_{d,\text{HT},T_I}$ equals to $J_{d}$.
\end{remark}

\section{Convergence of the unit-$\theta_k$-norm balls} \label{Section:Convergence}
In this section we show the following result on the convergence of the unit $\theta_k$-balls . 
\begin{theorem}\label{THconvergence}
The theta body sequence of $J_d$ converges asymptotically to the $\conv\zag{\nu_{\R}(J)}$, i.e., 
$$ \bigcap_{k=1}^{\infty} \TB_k(J_d)=\conv\zag{\nu_{\R}(J_d)}.$$
\end{theorem}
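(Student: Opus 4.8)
The plan is to combine the outer-approximation property \eqref{seqtheta} of theta bodies with a Putinar-type Positivstellensatz argument exploiting that the sphere equation $g_d$ lies in the ideal. First the easy inclusion: the variety $\nu_\R(J_d)$ is the set of rank-one tensors of unit Frobenius norm, i.e.\ $\skup{\pm\,\mbf{u}^1\otimes\cdots\otimes\mbf{u}^d : \norma{\mbf{u}^i}_{\ell_2}=1}$, which is a continuous image of a product of Euclidean spheres and hence compact; by Carath\'eodory's theorem $\conv(\nu_\R(J_d))$ is compact, in particular closed. By \eqref{seqtheta} every $\TB_k(J_d)$ is a closed convex set containing $\conv(\nu_\R(J_d))$, so $\bigcap_{k\ge1}\TB_k(J_d)\supseteq\conv(\nu_\R(J_d))$, and it remains to prove the reverse inclusion.

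For that, I would show that every linear $f$ with $f\ge0$ on $\nu_\R(J_d)$ becomes $k$-sos mod $J_d$ after an arbitrarily small positive shift. The crucial observation is that the quadratic module $\Sigma^2+J_d$ (sums of squares plus the ideal) is Archimedean: since $g_d(\mbf{x})=\sum_{\bm{\alpha}}x_{\bm{\alpha}}^2-1\in J_d$, already $1-\sum_{\bm{\alpha}}x_{\bm{\alpha}}^2=-g_d\in J_d\subseteq\Sigma^2+J_d$, which is the Archimedean condition. As $J_d$ is an ideal, the basic closed semialgebraic set associated to this quadratic module (its generators taken with both signs) is exactly $\nu_\R(J_d)$, so Putinar's Positivstellensatz applies: any polynomial strictly positive on $\nu_\R(J_d)$ lies in $\Sigma^2+J_d$, i.e.\ is sos mod $J_d$. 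Hence if $f$ is linear with $f\ge0$ on $\nu_\R(J_d)$ and $\eps>0$, then $f+\eps>0$ on $\nu_\R(J_d)$, so $f+\eps\equiv\sum_{j=1}^{t}h_j^2\pmod{J_d}$ for finitely many $h_j\in\R\uglate{\mbf{x}}_{k(\eps)}$; that is, the linear polynomial $f+\eps$ is $k(\eps)$-sos mod $J_d$.

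To conclude, let $\mbf{x}\in\bigcap_{k\ge1}\TB_k(J_d)$. Then $\mbf{x}\in\TB_{k(\eps)}(J_d)$, so by definition of the theta body $(f+\eps)(\mbf{x})\ge0$, i.e.\ $f(\mbf{x})\ge-\eps$; letting $\eps\downarrow0$ yields $f(\mbf{x})\ge0$. Since this holds for every linear $f$ nonnegative on $\nu_\R(J_d)$, and the closed convex set $\conv(\nu_\R(J_d))$ is the intersection of all half-spaces $\skup{\mbf{z}:f(\mbf{z})\ge0}$ determined by such $f$ (separation theorem for convex sets), we get $\mbf{x}\in\conv(\nu_\R(J_d))$, completing the proof. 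The main obstacle is recognizing that the theta-body definition reduces matters to a Putinar certificate and verifying its hypotheses; once one spots that $g_d$ makes $\Sigma^2+J_d$ Archimedean, the rest is bookkeeping with the $\eps$-shift and with closures — with the caveat that the sos degree $k(\eps)$ may blow up as $\eps\downarrow0$, which is exactly why the convergence is only asymptotic and cannot in general be finite.
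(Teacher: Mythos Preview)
Your argument is correct, and at heart it is the same Positivstellensatz route the paper takes; the difference is only in packaging. The paper invokes a black-box statement (its Theorem~\ref{ThInfConverg}, attributed to Schm\"udgen) that compactness of $\nu_\R(J_d)$ alone forces $\bigcap_k\TB_k(J_d)=\conv(\nu_\R(J_d))$, and then checks compactness by writing the variety as the intersection of the closed set of rank-one tensors with the compact Frobenius sphere. You instead unpack that black box: you observe directly that $-g_d=1-\sum_{\bm\alpha}x_{\bm\alpha}^2\in J_d$ makes $\Sigma^2+J_d$ Archimedean, apply Putinar to $f+\eps$, and pass to the limit with the separation theorem. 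Both routes use that the sphere constraint sits inside the ideal; yours makes the mechanism explicit (and uses Putinar rather than Schm\"udgen, which is legitimate here precisely because the Archimedean certificate is immediate), while the paper's is a two-line citation. Your remark that $k(\eps)$ may blow up is exactly the reason only asymptotic convergence is claimed.
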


To prove Theorem \ref{THconvergence} we use the following result presented in \cite{blpath13} which is a consequence of Schm{\"u}dgen's Positivstellensatz.
\begin{theorem}\label{ThInfConverg}
Let $J$ be an ideal such that $\nu_{\R}(J)$ is compact. Then the theta body sequence of $J$ converges to the convex hull of the variety $\nu_{\R}(J)$, in the sense that
$$ \bigcap_{k=1}^{\infty} \TB_k(J)=\conv\zag{\nu_{\R}(J)}.$$
\end{theorem}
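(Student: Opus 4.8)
The plan is to use the nested inclusions \eqref{seqtheta}, which already give one direction, $\conv\zag{\nu_{\R}(J)}\subseteq\bigcap_{k\ge 1}\TB_k(J)$, and moreover --- since $\nu_{\R}(J)$ is compact, so is $\conv\zag{\nu_{\R}(J)}$ by Carath\'eodory's theorem, hence it is closed and equals $\overline{\conv\zag{\nu_{\R}(J)}}$. Thus the entire content is the reverse inclusion, and I would prove it by contraposition: given $\mbf{x}_0\notin\conv\zag{\nu_{\R}(J)}$, I would exhibit a \emph{linear} polynomial $\ell$ which is $k$-sos mod $J$ for some finite $k$ and satisfies $\ell(\mbf{x}_0)<0$; by the definition of $\TB_k(J)$ this forces $\mbf{x}_0\notin\TB_k(J)$, hence $\mbf{x}_0\notin\bigcap_{k\ge 1}\TB_k(J)$. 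The trivial case $\nu_{\R}(J)=\eset$ I would dispose of first: there the real Nullstellensatz gives $-1\equiv\sigma$ mod $J$ with $\sigma$ a sum of squares, so the constant (hence linear) polynomial $-1$ is $k$-sos mod $J$ for $k=\lceil\tfrac12\deg\sigma\rceil$, whence $\TB_k(J)=\eset=\conv\zag{\nu_{\R}(J)}$.

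For the main case, first I would separate: since $\conv\zag{\nu_{\R}(J)}$ is a nonempty compact convex set not containing $\mbf{x}_0$, strict separation yields $\mbf{a}\in\R^n$ and $\beta\in\R$ with $\langle\mbf{a},\mbf{x}_0\rangle<\beta<\langle\mbf{a},\mbf{x}\rangle$ for all $\mbf{x}\in\conv\zag{\nu_{\R}(J)}$, so that $\ell(\mbf{x}):=\langle\mbf{a},\mbf{x}\rangle-\beta$ is a degree-one polynomial with $\ell(\mbf{x}_0)<0$ and $\ell>0$ on $\nu_{\R}(J)$. Next, by Hilbert's basis theorem I would fix generators $J=\interval{h_1,\dots,h_s}$, so that $\nu_{\R}(J)=\skup{\mbf{x}:h_i(\mbf{x})\ge 0,\,-h_i(\mbf{x})\ge 0,\ i\in\uglate{s}}$ is a compact basic closed semialgebraic set. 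The key step is to identify the finitely generated preordering $T$ of $\skup{h_1,\dots,h_s,-h_1,\dots,-h_s}$ with $\Sigma+J$, where $\Sigma$ is the cone of sums of squares: ``$\subseteq$'' holds because any generator $\sigma\prod_i(\pm h_i)^{e_i}$ with some $e_i\ge 1$ is divisible by an $h_i$ and hence lies in $J$, the remaining generators lying in $\Sigma$; for ``$\supseteq$'' it suffices to see $J\subseteq T$, which follows by writing each summand $p_i h_i$ of an ideal element as $(\tfrac{p_i+1}{2})^2h_i+(\tfrac{p_i-1}{2})^2(-h_i)\in T$. Then Schm\"udgen's Positivstellensatz applies (the set $\nu_{\R}(J)$ is compact and $\ell>0$ on it) and gives $\ell\in T=\Sigma+J$, i.e.\ $\ell\equiv\sum_{j=1}^t(h_j')^2$ mod $J$ for finitely many polynomials $h_j'$. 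Setting $k:=\max_j\deg h_j'$ shows $\ell$ is $k$-sos mod $J$, which completes the argument.

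The step I expect to be the main obstacle --- or rather, the only step of real substance --- is the middle one: recognizing that for a variety Schm\"udgen's preordering collapses to $\Sigma+J$, and invoking Schm\"udgen's theorem with the correct hypothesis (compactness of $\nu_{\R}(J)$, which is exactly what is assumed). The separating-hyperplane argument and the passage from an SOS representation to a finite degree bound are routine. One small point worth double-checking is that ``linear'' in the definition of $\TB_k(J)$ means ``of degree at most one'', so that both the separating $\ell$ and the constant $-1$ genuinely qualify as test functions.
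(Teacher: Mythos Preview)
Your proof is correct and is precisely the standard argument via Schm\"udgen's Positivstellensatz. Note, however, that the paper does not actually supply its own proof of this theorem: it is quoted from \cite{blpath13} with the remark that it ``is a consequence of Schm\"udgen's Positivstellensatz,'' and is then used as a black box to derive Theorem~\ref{THconvergence}. Your write-up therefore fills in exactly the details the paper only alludes to, and the key identification $T=\Sigma+J$ for the preordering generated by $\{\pm h_i\}$ is the crux that makes Schm\"udgen applicable in the form needed---this is correct and cleanly argued.
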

\begin{proof}[Proof of Theorem \ref{THconvergence}]
The set $\nu_{\R}(J_d)$ is the set of rank-one tensors with unit Frobenius norm which can be written as $\nu_{\R}(J_d)=\mathcal{A}_1 \bigcap \mathcal{A}_2$ where
\begin{align*}
\mathcal{A}_1&=\skup{\mbf{X} \in \Rd: \rank(\mbf{X})=1}, \\
\text{and }\mathcal{A}_2&=\skup{\mbf{X} \in \Rd: \norma{\mbf{X}}_F=1}.
\end{align*}
It is well-known that $\mathcal{A}_1$ is closed \cite[discussion before Definition 2.2]{cartwright2012secant} and since $\mathcal{A}_2$ is clearly compact, $\nu_{\R}(J_d)$ is compact. Therefore, the result follows from Theorem \ref{ThInfConverg}.
\end{proof}

\section{Computational Complexity}
\label{sec:complexity}

The computational complexity of the semidefinite programs for computing the $\theta_1$-norm of a tensor or for minimizing the $\theta_1$-norm
subject to a linear constraint depends polynomially on the number of variables, i.e., on the size of ${\mathcal B}_{2k}$, and on
the dimension of the moment matrix $\mbf{M}$. We claim that the overall complexity scales polynomially in $n$, 
where for simplicity we consider $d$th-order tensors in $\R^{n \times n \times \cdots \times n}$. Therefore, in contrast to tensor nuclear norm minimization which is NP-hard for $d \geq 3$, tensor recovery via $\theta_1$-norm minimization is tractable.

Indeed, the moment matrix $\mbf{M}$ is of dimension $(1+n^d) \times (1+n^d)$ (see also \eqref{MatrixM1} for matrices in $\R^{2 \times 2}$) and
if $a = n^d$ denotes the total number of entries of a tensor $\mbf{X} \in \R^{n \times \cdots \times n}$, 
then the number of the variables is at most $\frac{a\cdot (a+1)}{2} \sim \mathcal{O}(a^2)$ which is polynomial in $a$.
(A more precise counting does not give a substantially better estimate.)

\ifnocut

\fi

\section{Numerical experiments}
\label{sec:numerics}

Let us now empirically study the performance of low rank tensor recovery via $\theta_1$-norm minimization via numerical experiments, where
we concentrate on third-order tensors. Given measurements $\mathbf{b} = \mathbf{\Phi}(\mathbf{X})$ of a low rank tensor 
$\mbf{X} \in \R^{n_1 \times n_2 \times n_3}$,
where $\mathbf{\Phi} : \R^{n_1 \times n_2 \times n_3} \to \R^m$ is a linear measurement map, we aim at reconstructing
$\mathbf{X}$ as the solution of the minimization program
\begin{equation}\label{theta1:min:prog}
\min \| \mathbf{Z} \|_{\theta_1} \quad \mbox{ subject to } \mathbf{\Phi}(\mathbf{Z}) = \mathbf{b}.
\end{equation}
As outlined in Section~\ref{IntrodTheta}, the $\theta_1$-norm of a tensor $\mathbf{Z}$ can be computed as the minimizer of
the semidefinite program
\[
\min_{t, \mathbf{y}} t \quad \mbox{ subject to }  \quad \mathbf{M}(t,\mathbf{y}, \mathbf{Z}) \succcurlyeq 0,
\]
where $\mathbf{M}(t,\mathbf{y},\mathbf{X}) = \mathbf{M}_{\mathcal{B}_1}(t, \mathbf{X}, \mathbf{y})$ is the moment matrix of order $1$ 
associated to the ideal $J_3$, see Theorem~\ref{th3rd}. This moment matrix for $J_3$ is explicitly given by 
\begin{equation*}
\mbf{M}\zag{t,\mbf{y},\mbf{X}}=t\mbf{M}_0+\sum_{i=1}^{n_1}\sum_{j=1}^{n_2}\sum_{k=1}^{n_3} X_{ijk}\mbf{M}_{ijk}+\sum_{p=2}^9\sum_{q=1}^{\aps{\mbf{M}^p}}y_{\ell}\mbf{M}_{h_p(q)}^p,
\end{equation*}
where $\ell=\sum_{r=2}^{p-1}\aps{\mbf{M}^{r}}+q$, $\mbf{M}^{p}=\{\mbf{M}_{\widetilde{I}}^{p}\}$, and 
the matrices $\mbf{M}_0, \mbf{M}_{ijk}$ and $\mbf{M}_{\widetilde{I}}^p$ are provided in Table~\ref{TableM}. 
For $p \in \{2,3,\ldots,9\}$,  the function $h_p$ denotes an arbitrary but fixed bijection $
\skup{1,2,\ldots,\aps{\mbf{M}^{p}}} \mapsto \{(i,\hat{i},j,\hat{j},k,\hat{k})\}$, where  $\widetilde{I}=(i,\hat{i},j,\hat{j},k,\hat{k})$ is in the range of the last column of Table~\ref{TableM}. 
As discussed in Section~\ref{IntrodTheta} for the general case, 
the $\theta_1$-norm minimization problem \eqref{theta1:min:prog} is then equivalent to the semidefinite program  \begin{equation}\label{theta1:semidefinite}
\min_{t, \mbf{y}, \mbf{Z}} t \quad \mbox{ subject to }\quad \mbf{M}\zag{t, \mbf{y}, \mbf{Z}}\succeq 0 \quad \mbox{ and  } \quad
\mathbf{\Phi}(\mbf{Z}) = \mbf{b}.
\end{equation}

\begin{table}
\caption{The matrices involved in the definition of the moment matrix $\mbf{M}\zag{t,\mbf{y},\mbf{X}}$. Due to the symmetry only the upper triangle part of the matrices is specified. The other non-specified entries of the matrices $\mbf{M}\in \R^{(n_1n_2n_3+1) \times (n_1n_2n_3+1)}$ from the first column are equal to zero.  
The matrix $\mbf{M}$ corresponds to the element $g+{J}_3$ of the $\theta$-basis  specified in the second column. The index $\widetilde{I}=(i,\hat{i},j,\hat{j},k,\hat{k})$ is in the range of the last column.
The function $f:\N^3\rightarrow \N$  is defined as $f\zag{i,j,k}=(i-1)n_2n_3+(j-1)n_3+k+1$. }
\label{TableM}
\begin{tabular}{l l  l  r  l }
\hline\noalign{\smallskip}
&$\theta$-basis & position $(p,q)$ in the matrix & ${M}_{pq}$ & Range of $i,\hat{i},j,\hat{j},k,\hat{k}$ \\
\noalign{\smallskip}\hline\noalign{\smallskip}
$\mbf{M}_0$& $1$ &  $\zag{1,1},\zag{2,2}$ & $1$ & \\
$\mbf{M}_{ijk}$& $x_{ijk}$&  $\zag{1,f(i,j,k)}$ & $1$ & $i \in \uglate{n_1}, j \in \uglate{n_2}, k \in \uglate{n_3}$ \\
$\mbf{M}_{f_2}^2$&${x}_{ijk}^2$ & $\zag{2,2}$ &  $-1$  &\\
&                         & $\zag{f(i,j,k),f(i,j,k)}$ &  $1$ & $\skup{i \in \uglate{n_1}, j \in \uglate{n_2}, k \in \uglate{n_3}}$ \\  & & & & \hspace{2em} $\backslash \skup{i=j=k=1}$ \\
$\mbf{M}_{f_3}^3$&${x}_{i\hat{j}k}x_{ij\hat{k}}$ & $(f(i,j,k),f(i,\hat{j}, \hat{k})),$   &$1$ & \\
& & $(f(i,j,\hat{k}),f(i,\hat{j}, k))$ & $1$ & $i \in \uglate{n_1}, j < \hat{j}, k < \hat{k}$\\
$\mbf{M}_{f_4}^4$&$x_{ijk}{x}_{\hat{i}\hat{j}\hat{k}}$ & $(f(i,j,k),f(\hat{i},\hat{j}, \hat{k}))$   &$1$ &\\
& & $(f(i,\hat{j},k),f(\hat{i},j,\hat{k}))$   &$1$ &\\
&								   & $(f(i,\hat{j},\hat{k}), f(\hat{i},j,k)),$	& $1$	&\\
&								   & $(f(i,j,\hat{k}),f( \hat{i},\hat{j},k))$	& $1$					& $i < \hat{i}, j < \hat{j}, k < \hat{k}$\\
$\mbf{M}_{f_5}^5$&$x_{ijk}{x}_{\hat{i}j\hat{k}}$ & $(f(i,j,k),f(\hat{i}, j, \hat{k})),$   &$1$ &\\		   
& & $(f(i,j,\hat{k}),f(\hat{i},j, k))$   &$1$ & $i < \hat{i}, j \in \uglate{n_2}, k < \hat{k}$\\	
$\mbf{M}_{f_6}^6$&$x_{ijk}{x}_{\hat{i}\hat{j}k}$ & $(f(i,j,k),f(\hat{i},\hat{j},k))$   &$1$ & \\
& & $ (f(i,\hat{j},k),f(\hat{i},j, k))$   &$1$ & $i < \hat{i}, j < \hat{j}, k \in\uglate{n_3}$\\
$\mbf{M}_{f_7}^7$&${x}_{\hat{i}jk}x_{ijk}$ & $(f(i,j,k),f(\hat{i},j,k))$   &$1$ & $i  < \hat{i}, j \in \uglate{n_2}, k \in \uglate{n_3}$\\
$\mbf{M}_{f_8}^8$&$x_{i\hat{j}k}x_{ijk}$ & $(f(i,j,k),f(i,\hat{j},k))$   &$1$ & $i  \in \uglate{n_1}, j < \hat{j}, k \in \uglate{n_3}$\\
$\mbf{M}_{f_9}^9$&$x_{ij\hat{k}}x_{ijk}$ & $(f(i,j,k),f(i,j,\hat{k}))$   &$1$ & $i  \in \uglate{n_1}, j \in \uglate{n_2}, k < \hat{k}$\\
\noalign{\smallskip}\hline
\end{tabular}
\vspace{0.3cm}
\end{table}

For our experiments, the linear mapping is defined as $(\mathbf{\Phi}\zag{\mathbf{X}})_k=\interval{\mathbf{X},\mathbf{\Phi}_k}$, $k \in [m]$, 
with independent Gaussian random tensors $\mathbf{\Phi}_k \in \R^{n_1 \times n_2 \times n_3}$, i.e., all entries of $\mathbf{\Phi}_k$ are independent
$\mathcal{N}\zag{0,\frac{1}{m}}$ random variables. We choose tensors $\mathbf{X} \in \R^{n_1 \times n_2 \times n_3}$ of rank one
as $\mathbf{X}=\mathbf{u} \otimes \mathbf{v} \otimes \mathbf{w}$, where each entry of the vectors $\mathbf{u}$, $\mathbf{v}$, and $\mathbf{w}$ is taken independently from the normal distribution $\mathcal{N}\zag{0,1}$. Tensors $\mathbf{X} \in \R^{n_1 \times n_2 \times n_3}$ of rank two
are generated as the sum of two random rank-one tensors. With $\mathbf{\Phi}$ and $\mathbf{X}$ given, we compute $\mathbf{b} = \mathbf{\Phi}(\mathbf{X})$,
run the semidefinite program \eqref{theta1:semidefinite} and compare its minimizer with the original low rank tensor $\mathbf{X}$.
For a given set of parameters, i.e., dimensions $n_1,n_2,n_3$, number of measurements $m$ and rank $r$, we repeat this experiment $200$ times
and record the empirical success rate of recovering the original tensor, where we say that recovery is successful if the elementwise
reconstruction error is at most $10^{-6}$. We use MATLAB (R2008b) for these numerical experiments, including SeDuMi\_1.3 for solving the semidefinite programs. 

Table~\ref{RecNumRes} summarizes the results of our numerical tests
for cubic and non-cubic tensors of rank one and two and several choices of the dimensions. Here, the number $m_0$ denotes the maximal number of measurements for which not even one out of $200$ generated tensors is recovered and $m_1$ denotes the minimal number of measurements for which all $200$ tensors are recovered. The fifth column in Table~\ref{RecNumRes} represents the number of independent measurements which are always sufficient for the recovery of a tensor of an arbitrary rank. For illustration, we present the average cpu time (in seconds) for solving the semidefinite programs
via SeDuMi\_1.3 in the last column. 
Alternatively, the SDPNAL+ Matlab toolbox (version 0.5 beta) for semidefinite programming \cite{sdpnal1,sdpnal2} allows to perform low rank tensor recovery via $\theta_1$-norm minimization 
for even higher-dimensional tensors. For example, with $m=95$ measurement we managed to recover all rank-one $9\times 9 \times 9$ tensors 
out of $200$ (each simulation taking about $5$min). Similarly, rank-one $11 \times 11 \times 11$ tensors are recovered from $m=125$ measurements
with one simulation lasting about $50$min. Due to these large computation times, more elaborate numerical experiments have not been conducted in these scenarios.
We remark that no attempt of accelerating the optimization algorithm has been made. This task is left for future research.

\begin{table}[!h]
\caption{Numerical results for low rank tensor recovery in $\R^{n_1 \times n_2 \times n_3}$.}
\label{RecNumRes}
	\begin{tabular}{c c c c c r}
	\hline\noalign{\smallskip}
	$n_1 \times n_2 \times n_3$ & rank & $m_0$ & $m_1$ & $n_1 n_2 n_3$ & cpu (sec)\\ 
	\noalign{\smallskip}\hline\noalign{\smallskip}
	$2 \times 2 \times 3$  & $1$ & $4$ & $12$ & $12$ & $0.2$\\
	$ 3 \times 3 \times 3$ & $1$ & $ 6$ & $19 $ & $27$ & $0.37$\\
	$3 \times 4 \times 5$  & $1$ & $11$ & $30$ & $60$ & $6.66$\\
	$4 \times 4 \times 4$  & $1$ & $11$ & $32$ & $64$ & $7.28$\\
	$4 \times 5 \times 6$  & $1$ & $18$ & $42$ & $120$ & $129.48$\\
	$5 \times 5 \times 5$  & $1$ & $18$ & $43$ & $125$ & $138.90$\\
	$ 3 \times 4 \times 5$ & $2$ & $ 27$ & $ 56$ & $60$ & $7.55$\\
	$ 4 \times 4 \times 4$ & $2$ & $ 26$ & $ 56$ & $64$ & $8.65$\\
	$ 4 \times 5 \times 6$ & $2$ & $ 41$ & $ 85$ & $120$ & $192.58$\\
	\noalign{\smallskip}\hline
	\end{tabular}
\end{table}

Except for very small tensor dimensions, we can always recover tensors of rank-one or two from a number of measurements
which is significantly smaller than the dimension of the corresponding tensor space. Therefore, low rank tensor recovery via $\theta_1$-minimization
seems to be a promising approach. Of course, it remains to investigate the recovery performance theoretically.

Figures \ref{Figure1} and \ref{Figure2} present the numerical results for low rank tensor recovery via $\theta_1$-norm minimization for Gaussian measurement maps, conducted with the SDPNAL+ toolbox. 
For fixed tensor dimensions $n \times n \times n$, fixed tensor rank $r$, and fixed number $m$ of measurements $50$ simulations are performed. 
We say that recovery is successful if the element-wise reconstruction error is smaller than $10^{-3}$. Figures \ref{1a}, \ref{2a}, \ref{3a} and \ref{1b}, \ref{2b}, \ref{3b} present experiments for rank-one and rank-two tensors, respectively.  The vertical axis in all three figures represents the empirical success rate.
In Figure \ref{Figure1} the horizontal axis represents the relative number of measurements, to be more precise, for a tensor of size $n \times n \times n$, the number $\bar{n}$ on the horizontal axis represents $m=\bar{n} \frac{n^3}{100}$ measurements.
In Figure \ref{Figure2} for a rank-$r$ tensor of size $n\times n \times n$ and the number of measurements $m$, the horizontal axis represents  the number $m/(3nr)$. Notice that $3nr$ represents the degrees of freedom in the corresponding CP-decomposition. In particular, if the number of measurements necessary for tensor recovery is $m \geq 3Crn$, for an universal constant $C$, Figure \ref{Figure2} suggests that the constant C depends on the size of the tensor. In particular, it seems to grow slightly with $n$ (although it is still possible that there exists $C>0$ such that $m\geq 3Crn$ would always be enough for the recovery). With $C=3.3$ we would always be able to recover a low rank tensor of size $n \times n \times n$ with $n\leq 7$. The horizontal axis in Figure \ref{Figure3} represents the number $m/\zag{3nr\cdot \log (n)}$. The figure suggests that with the number of measurements $m \geq 6rn\cdot\log(n)$ we would always be able to recover a low rank tensor and therefore it may be possible that a logarithmic factor is necessary.
The computation is implemented in MATLAB R2016a, on an Acer Laptop with CPU@1.90GHz and RAM 4GB.

\begin{figure}
\subfloat[Recovery of rank-$1$ tensors \label{1a}]{\includegraphics[scale=0.42,trim={110 240 100 235},clip]{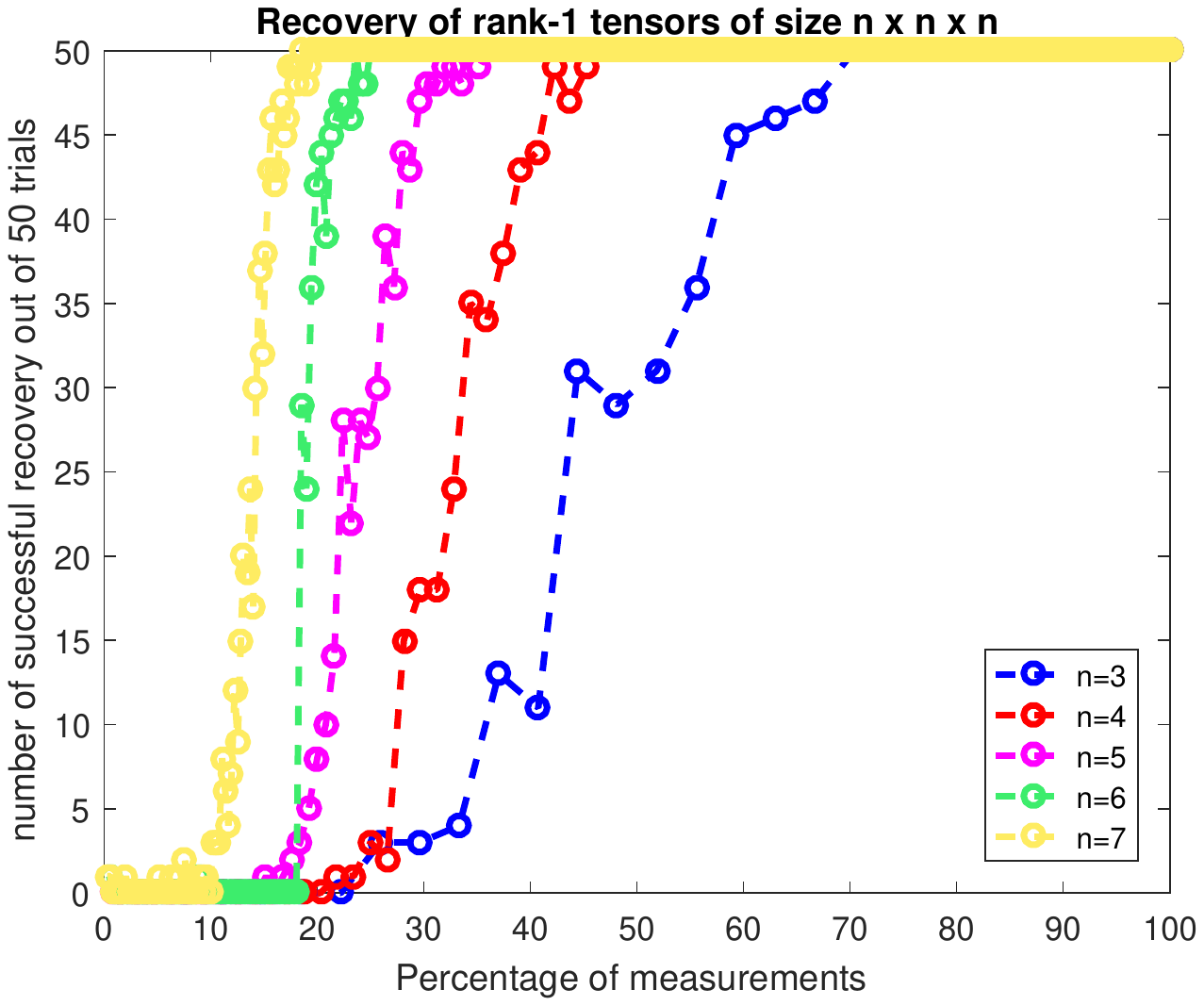}}
\subfloat[Recovery of rank-$2$ tensors \label{1b}]{\includegraphics[scale=0.42,trim={110 240 120 235},clip]{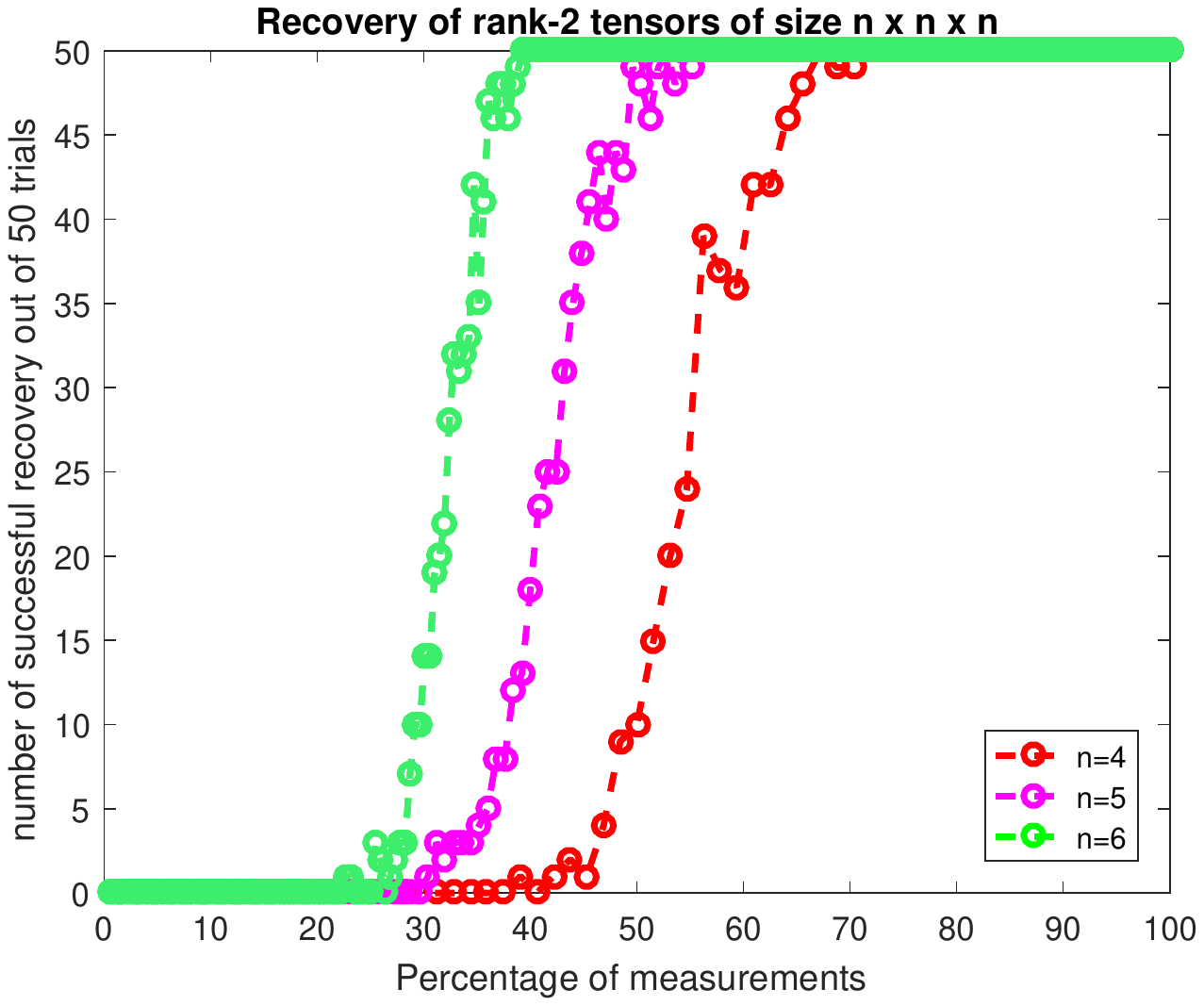}}
   \caption{Recovery of rank-$1$ and rank-$2$ tensors via $\theta_1$-norm minimization.}
  \label{Figure1}
\end{figure}

\begin{figure}
\subfloat[Recovery of rank-$1$ tensors \label{2a}]{\includegraphics[scale=0.42,trim={110 240 100 235},clip]{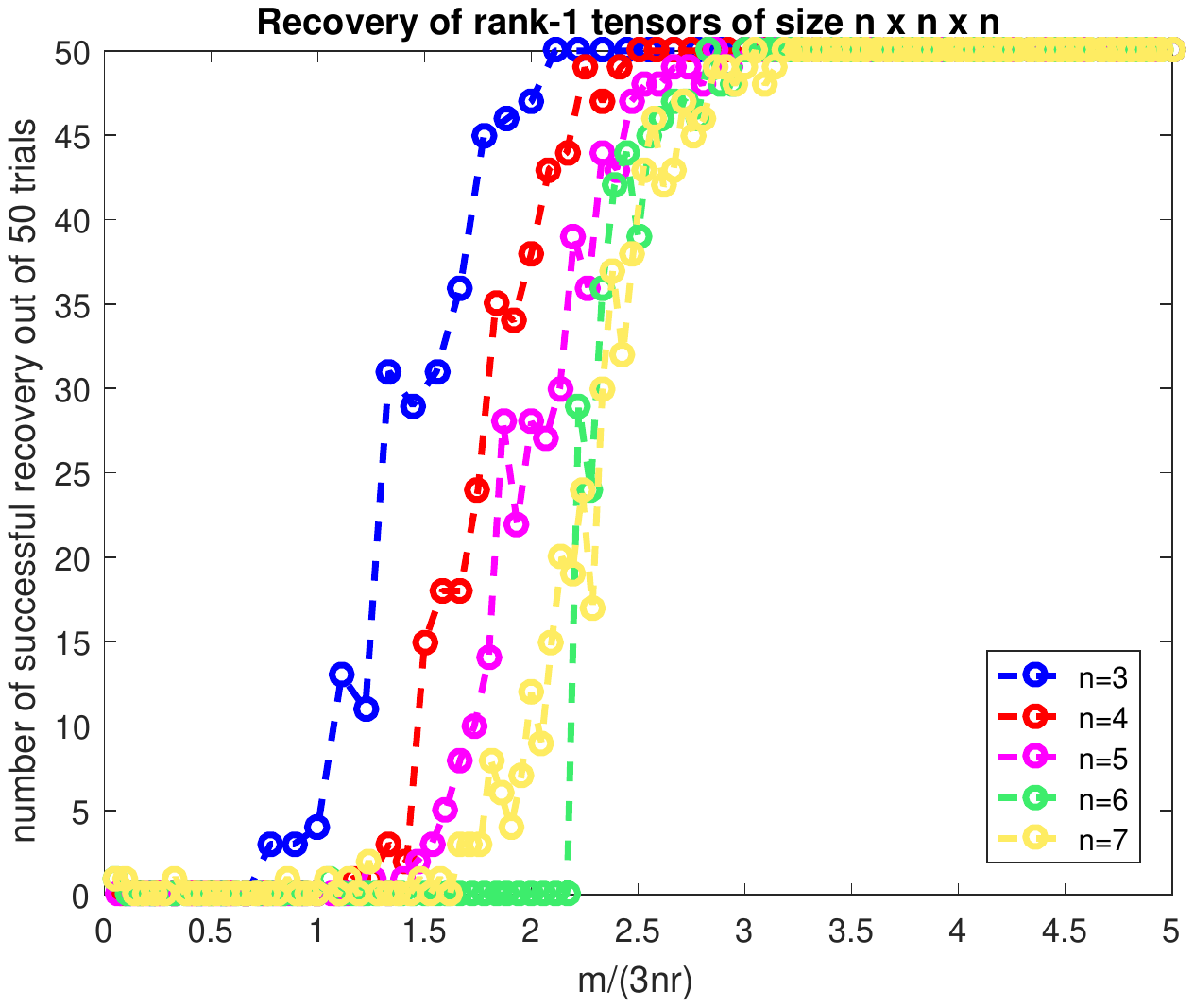}}
\subfloat[Recovery of rank-$2$ tensors \label{2b}]{\includegraphics[scale=0.42,trim={110 240 120 235},clip]{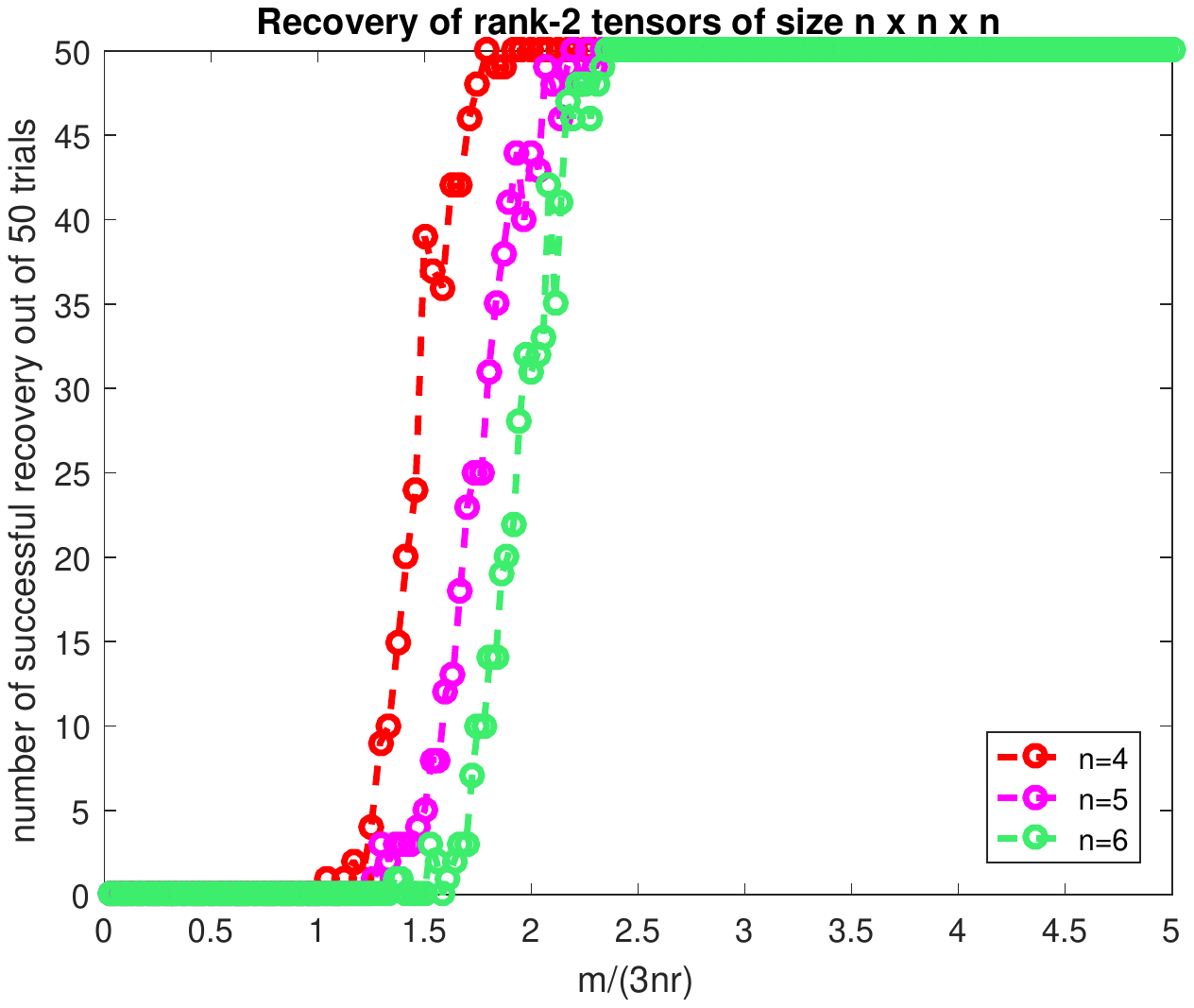}}
   \caption{Recovery of rank-$1$ and rank-$2$ tensors via $\theta_1$-norm minimization.}
  \label{Figure2}
\end{figure}

\begin{figure}
\subfloat[Recovery of rank-$1$ tensors \label{3a}]{\includegraphics[scale=0.42,trim={110 240 100 235},clip]{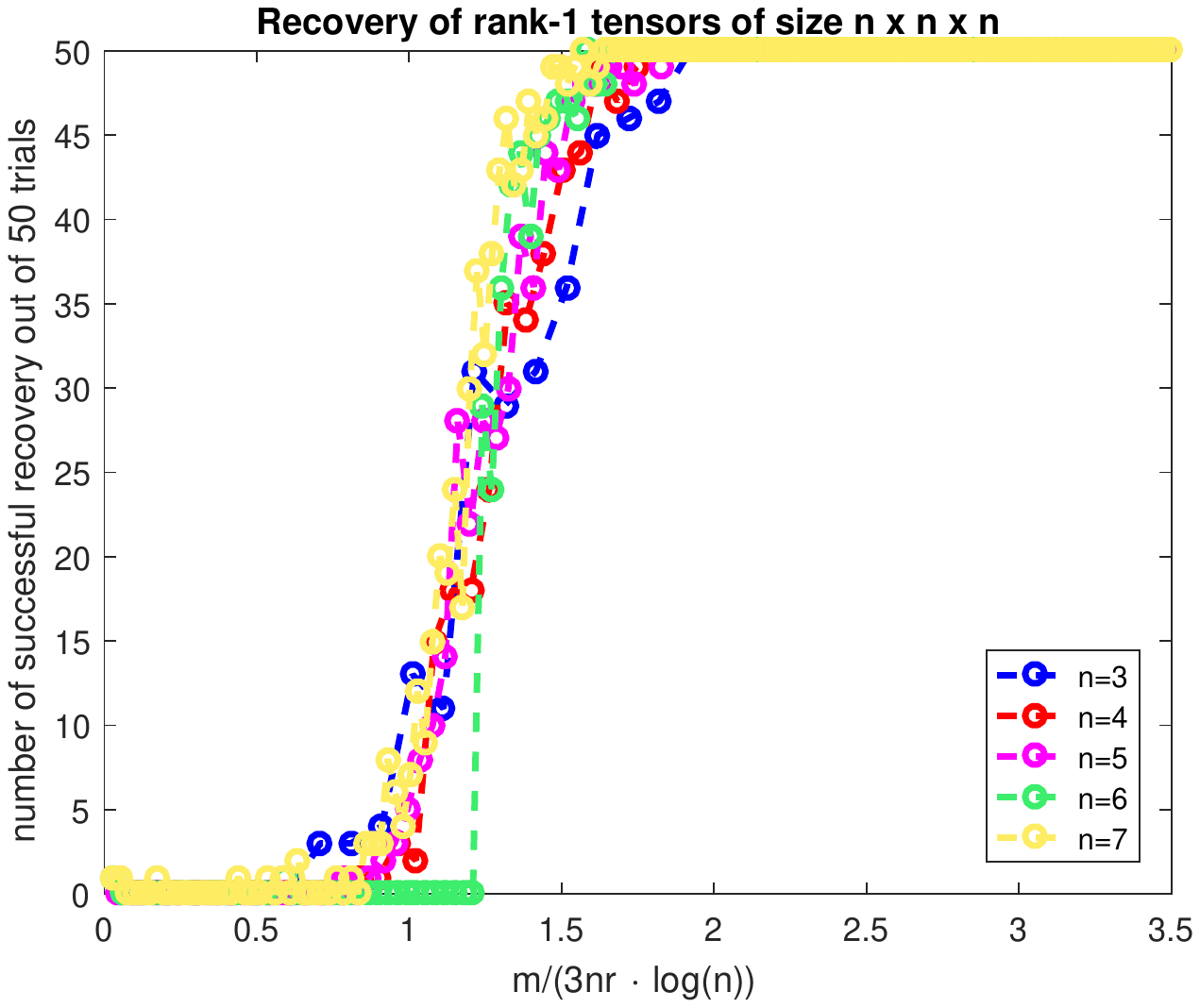}}
\subfloat[Recovery of rank-$2$ tensors \label{3b}]{\includegraphics[scale=0.42,trim={110 240 120 235},clip]{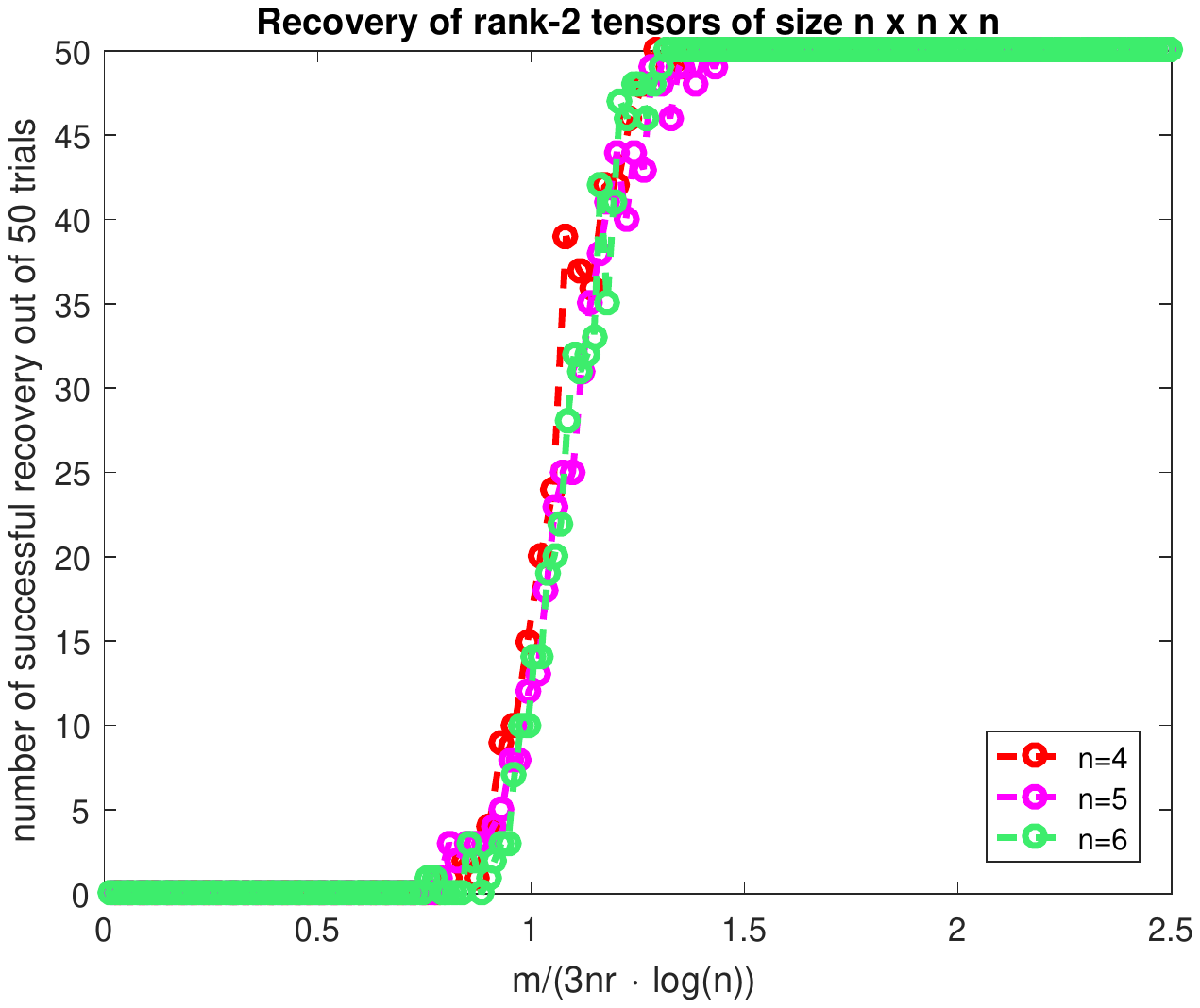}}
   \caption{Recovery of rank-$1$ and rank-$2$ tensors via $\theta_1$-norm minimization.}
  \label{Figure3}
\end{figure}

We remark that we have used standard MATLAB packages for convex optimization to perform the numerical experiments. To obtain better performance, new optimization methods  should be developed specifically to solve our optimization problem, or more generally, to solve the sum-of-squares polynomial problems. We expect this to be possible and the resulting algorithms to give much better performance results since we have shown that in the matrix scenario all theta norms correspond to the matrix nuclear norm. The state-of-the-art algorithms developed for the matrix scenario can compute the matrix nuclear norm and can solve the matrix nuclear norm minimization problem for matrices of large dimensions. The theory developed in this paper together with the first numerical results should encourage the development into this direction.

\appendix
\section{Monomial orderings and Gr{\"o}bner bases}
\label{sec:appA}

An ordering on the set of monomials $\mbf{x}^\alpha \in \R[\mbf{x}]$, $\mbf{x}^{\alpha} = x_1^{\alpha_1}\cdot x_2^{\alpha_2} \cdots x_n^{\alpha_n}$, is essential for dealing with polynomial ideals. For instance, it determines an order in a multivariate polynomial division algorithm. Of particular interest is the {\em graded reverse lexicographic (grevlex) ordering}. 
 
\begin{definition}\label{def:grevlex}
For $\bm{\alpha}=\zag{\alpha_1,\alpha_2,\ldots,\alpha_n}$, $\bm{\beta}=\zag{\beta_1,\beta_2, \ldots, \beta_n} \in\Z_{\geq 0}^n$,  
we write
$\mbf{x}^{\bm{\alpha}}>_{grevlex}\mbf{x}^{\bm{\beta}}$ (or $\bm{\alpha} >_{grevlex} \bm{\beta}$) if $\aps{\bm{\alpha}}> \aps{\bm{\beta}}$ or $\aps{\bm{\alpha}}= \aps{\bm{\beta}}$ and the rightmost nonzero entry of $\bm{\alpha}-\bm{\beta}$ is negative.
\end{definition}
Once a monomial ordering is fixed, the meaning of leading monomial, leading term and leading coefficient of a polynomial (see Section~\ref{IntrodTheta}) is
well-defined. For more information on monomial orderings, we refer the interested reader to \cite{colios07,colios05}.

A Gr{\"o}bner basis is a particular kind of generating set of a polynomial ideal. It was first introduced in 1965 in the Phd thesis of Buchberger \cite{bu06-2}. 

\begin{definition}[Gr{\"o}bner basis]\label{groebner}
For a fixed monomial order, a basis $\mathcal{G}=\{g_1, \ldots, g_s\}$ of a polynomial ideal $J \subset \R\uglate{\mbf{x}}$ is a \textit{Gr{\"o}bner basis} (or standard basis) if for all $f \in \R\uglate{\mbf{x}}$ there exist a \textbf{unique} $r \in \R\uglate{\mbf{x}}$ and $g \in J$ such that $$f=g+r $$ and no monomial of $r$ is divisible by any of the leading monomials in $\mathcal{G}$, i.e., by any of the monomials $\LM\zag{g_1}, \LM\zag{g_2}, \ldots, \LM\zag{g_s}$.
\end{definition} 
A Gr{\"o}bner basis is not unique, but the reduced version defined next is.
\begin{definition}\label{reducedGr}
The \textit{reduced Gr{\"o}bner basis} for a polynomial ideal $J \in \R\uglate{\mbf{x}}$ is a Gr{\"o}bner basis $\mathcal{G}=\skup{g_1,g_2,\ldots,g_s}$ for $J$ such that
\begin{enumerate}
\item[1)] $\LC(g_i)=1$, for all $i \in \uglate{s} $.
\item[2)] $g_i$ does not belong to $\interval{\LT(\mathcal{G}\backslash\{g_i\})}$ for all $i \in \uglate{s}$.
\end{enumerate}
\end{definition}
In other words, a Gr{\"o}bner basis $\mathcal{G}$ is the reduced Gr{\"o}bner basis if for all $i \in \uglate{s}$ the polynomial $g_i \in \mathcal{G}$ 
is monic (i.e., $\LC(g_i)=1$) and the leading monomial $\LM(g_i)$ does not divide any monomial of $g_j$, $j\neq i$.
  
Many important properties of the ideal and the corresponding algebraic variety can be deduced via its (reduced) Gr{\"o}bner basis. 
For example, a polynomial belongs to a given ideal if and only if the unique $r$ from the Definition~\ref{groebner} equals zero.
Gr{\"o}bner bases are also one of the main computational tools in solving systems of polynomial equations \cite{colios07}.

With $\overline{f}^F$ we denote the remainder on division of $f$ by the ordered $k$-tuple $F=\zag{f_1,f_2,\ldots,f_k}$. If $F$ is a Gr{\"o}bner basis for an ideal $\interval{f_1,f_2,\ldots,f_k}$, then we can regard $F$ as a set without any particular order by Definition \ref{groebner}, or in other words, 
the result of the division algorithm does not depend on the order of the polynomials. Therefore, $\overline{f}^{\mathcal{G}}=r$ in Definition~\ref{groebner}.

The following result follows directly from Definition \ref{groebner} and the polynomial division algorithm \cite{colios07}.
\begin{corollary}
Fix a monomial ordering and let $\mathcal{G}=\{g_1,g_2,\ldots,g_s\} \subset \R\uglate{\mbf{x}}$ be a Gr{\"o}bner basis of a polynomial ideal $J$. A polynomial $f \in \R\uglate{\mbf{x}}$ is in the ideal $J$ if it can be written in the form $f=a_1 g_1+ a_2 g_2 +\ldots + a_s g_s$, where $a_i \in \R\uglate{\mbf{x}}$, for all $i \in \uglate{s}$, s.t.\ whenever $a_i g_i \neq 0$ we have $$\multideg\zag{f} \geq \multideg\zag{a_ig_i}.$$
\end{corollary}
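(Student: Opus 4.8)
The plan is to derive the asserted implication directly from the defining properties of an ideal; the Gr\"obner basis hypothesis and the multidegree bound are not needed for this direction. Concretely, the statement to be proved is: if $f$ can be written as $f=a_1g_1+\cdots+a_sg_s$ with $a_i\in\R\uglate{\mbf{x}}$ and $\multideg(f)\geq\multideg(a_ig_i)$ whenever $a_ig_i\neq 0$, then $f\in J$. I would prove it by simply observing that the hypothesis already exhibits $f$ as a finite $\R\uglate{\mbf{x}}$-linear combination of elements of $J$.

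First I would note that, since $\mathcal{G}=\{g_1,\ldots,g_s\}$ is a generating set of $J$, each $g_i$ lies in $J$. An ideal $J\subseteq\R\uglate{\mbf{x}}$ is closed under multiplication by arbitrary polynomials and under addition. Hence, from any representation $f=a_1g_1+\cdots+a_sg_s$ with $a_i\in\R\uglate{\mbf{x}}$, one gets $a_ig_i\in J$ for every $i$, and therefore the finite sum $f=\sum_{i=1}^s a_ig_i$ lies in $J$. The degree constraint $\multideg(f)\geq\multideg(a_ig_i)$ is automatically satisfied in the situations where the corollary gets applied, but it is never invoked in concluding $f\in J$.

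The only point I would bother to spell out, as a brief remark, is why the multidegree condition appears at all: it records the special form of the output of the multivariate polynomial division algorithm of \cite{colios07}, which, when $\mathcal{G}$ is a Gr\"obner basis, returns an expression $f=\sum_{i=1}^s a_ig_i+r$ with $\multideg(f)\geq\multideg(a_ig_i)$ for every nonzero $a_ig_i$, and with $f\in J$ precisely when the remainder $r$ vanishes. Thus the corollary should be read as saying that producing a ``standard'' (multidegree-bounded) representation of $f$ over $\mathcal{G}$ \emph{certifies} membership in $J$ --- which is exactly the way it is used in the Gr\"obner basis proofs above when checking that $S(p,q)\rightarrow_{\mathcal{G}}0$. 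I do not expect any genuine obstacle here: the entire content of the implication is the ideal axiom that $a\in\R\uglate{\mbf{x}}$ and $g\in J$ imply $ag\in J$, together with additive closure of $J$.
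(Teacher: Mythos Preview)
Your proposal is correct and aligns with the paper's own treatment: the paper does not give a proof but merely states that the corollary ``follows directly from Definition~\ref{groebner} and the polynomial division algorithm,'' which is precisely the observation you make. Your additional remark explaining why the multidegree bound appears (as the shape of the division algorithm's output) is accurate and useful, since as you note the ``if'' direction as literally stated needs only the ideal axioms and not the Gr\"obner or multidegree hypotheses.
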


\begin{definition}\label{def:reducezero}
Fix a monomial order and let $\mathcal{G}=\skup{g_1,g_2,\ldots,g_s} \subset \R\uglate{\mbf{x}}$. Given $f \in \R\uglate{\mbf{x}}$, we say that $f$ reduces to zero modulo $\mathcal{G}$ and write $$ f \rightarrow_{\mathcal{G}} 0$$ if it can be written in the form $f=a_1 g_1+ a_2 g_2 +\ldots + a_k g_k$ with $a_i \in \R\uglate{\mbf{x}}$ for all $i \in \uglate{k}$ s.t.\ whenever $a_i g_i \neq 0$ we have $\multideg\zag{f} \geq \multideg\zag{a_ig_i}$.
\end{definition}
Assume that $\mathcal{G}$ in the above definition is a Gr{\"o}bner basis of a given ideal $J$. Then a polynomial $f$ is in the ideal $J$ if and only if $f$ reduces to zero modulo $\mathcal{G}$. In other words, for a Gr{\"o}bner basis $\mathcal{G}$, 
$$f \rightarrow_{\mathcal{G}} 0 \hspace{0.5cm}\text{  if and only if  } \hspace{0.5cm}\overline{f}^{\mathcal{G}}=0.$$

The Gr{\"o}bner basis of a polynomial ideal always exists and can be computed in a finite number of steps via Buchberger's  algorithm \cite{bu06-2,colios07,colios05}.

Next we define the $S$-polynomial of given polynomials $f$ and $g$ which is important 
for checking whether a given basis of the ideal is a Gr{\"o}bner basis.

\begin{definition}\label{Spoly}
Let $f,g \in \R\uglate{\mbf{x}}$ be a non-zero polynomials.
\begin{enumerate}
\item If $\multideg\zag{f}=\bm{\alpha}$ and $\multideg\zag{g}=\bm{\beta}$, then let $\bm{\gamma}=\zag{\gamma_1,\gamma_2,\ldots, \gamma_n}$, where $\gamma_i=\max\skup{\alpha_i,\beta_i}$, for every $i$. We call $\mbf{x}^{\bm{\gamma}}$ the least common multiple of $\LM\zag{f}$ and $\LM\zag{g}$ written $\mbf{x}^{\bm{\gamma}}=\LCM\zag{\LM\zag{f},\LM\zag{g}}$.
\item The $S$-polynomial of $f$ and $g$ is the combination
\begin{equation*}
S\zag{f,g}=\frac{\mbf{x}^{\bm{\gamma}}}{\LT\zag{f}}f - \frac{\mbf{x}^{\bm{\gamma}}}{\LT\zag{g}}g.
\end{equation*}
\end{enumerate}
\end{definition}

The following theorem gives a criterion for checking whether a given basis of a polynomial ideal is a Gr{\"o}bner basis.

\begin{theorem}[Buchberger's criterion]\label{GrS}
A basis $\mathcal{G}=\skup{g_1,g_2,\ldots,g_s}$ for a polynomial ideal $J \subset \R\uglate{\mbf{x}}$ is a Gr{\"o}bner basis if and only if $S\zag{g_i,g_j} \rightarrow_{\mathcal{G}} 0$ for all $i \neq j$.
\end{theorem}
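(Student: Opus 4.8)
The statement is the classical Buchberger criterion, so the plan is to reproduce the standard argument (see, e.g., \cite{colios07}) in the notation used above. The easy direction is immediate: if $\mathcal{G}$ is a Gr\"obner basis of $J$, then for every $i\neq j$ the $S$-polynomial $S(g_i,g_j)$ lies in $J$, being a polynomial combination of $g_i$ and $g_j$; and, as recorded right after Definition~\ref{def:reducezero}, a polynomial lies in $J$ if and only if it reduces to zero modulo a Gr\"obner basis of $J$. Hence $S(g_i,g_j)\rightarrow_{\mathcal{G}}0$ for all $i\neq j$.

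For the converse I would first invoke the standard equivalence of Definition~\ref{groebner} with the leading-term characterisation: $\mathcal{G}$ is a Gr\"obner basis of $J$ if and only if $\interval{\LM(g_1),\ldots,\LM(g_s)}=\interval{\LM(f):f\in J,\ f\neq 0}$. Thus, assuming $S(g_i,g_j)\rightarrow_{\mathcal{G}}0$ for all $i\neq j$, it suffices to show that every nonzero $f\in J$ has $\LM(g_i)\mid\LM(f)$ for some $i$. Fix such an $f$ and write $f=\sum_{i=1}^{s}h_ig_i$; among all such representations choose one minimising $\delta:=\max_i\multideg(h_ig_i)$ in the fixed monomial order, and set $\mathcal{S}=\skup{i:\multideg(h_ig_i)=\delta}$. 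Since $\multideg(f)\leq\delta$ always, the case $\multideg(f)=\delta$ is settled at once: then $\LM(f)=\mbf{x}^{\delta}=\LM(h_i)\LM(g_i)$ for any $i\in\mathcal{S}$, so $\LM(g_i)\mid\LM(f)$.

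The main obstacle is the remaining case $\multideg(f)<\delta$, where I must contradict the minimality of $\delta$. The plan is the usual telescoping argument. Split $f=\sum_{i\in\mathcal{S}}\LT(h_i)g_i+\sum_{i\in\mathcal{S}}(h_i-\LT(h_i))g_i+\sum_{i\notin\mathcal{S}}h_ig_i$; the last two sums consist of terms of multidegree strictly below $\delta$, and $\sum_{i\in\mathcal{S}}\LT(h_i)g_i$ also has multidegree below $\delta$ because its degree-$\delta$ homogeneous part coincides with that of $f$, which vanishes. A short computation (telescoping with the partial sums of the scalars $\LC(h_i)\LC(g_i)$, whose total is zero) rewrites $\sum_{i\in\mathcal{S}}\LT(h_i)g_i$ as a constant-coefficient combination of $S$-polynomials $S(\LT(h_j)g_j,\LT(h_k)g_k)$, each of multidegree below $\delta$. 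Then I would use the identity $S(\LT(h_j)g_j,\LT(h_k)g_k)=\tfrac{\mbf{x}^{\delta}}{\mbf{x}^{\gamma_{jk}}}S(g_j,g_k)$, where $\mbf{x}^{\gamma_{jk}}=\LCM(\LM(g_j),\LM(g_k))$ divides $\mbf{x}^{\delta}$, and substitute the hypothesised reductions $S(g_j,g_k)=\sum_i a_i^{jk}g_i$ with $\multideg(a_i^{jk}g_i)\leq\multideg(S(g_j,g_k))$ for every nonzero term, obtaining a new representation $f=\sum_i H_ig_i$ with $\max_i\multideg(H_ig_i)<\delta$ --- contradicting minimality. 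Hence $\multideg(f)=\delta$, which completes the converse.

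The delicate points I expect to spend the most care on are the multidegree bookkeeping at each substitution (making sure every intermediate summand really has multidegree $<\delta$, so that after regrouping the new representation strictly improves on $\delta$) and the verification that $\mbf{x}^{\gamma_{jk}}\mid\mbf{x}^{\delta}$, which holds because $\LM(g_j)$ and $\LM(g_k)$ both divide $\mbf{x}^{\delta}=\LM(h_j)\LM(g_j)=\LM(h_k)\LM(g_k)$; everything else is routine.
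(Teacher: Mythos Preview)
The paper does not actually prove Theorem~\ref{GrS}; it is stated in the appendix as a standard background result (with implicit reference to \cite{colios07,colios05}) and is used as a black box throughout. Your proposal reproduces the classical Cox--Little--O'Shea argument correctly: the forward direction is immediate, and for the converse you choose a representation $f=\sum_i h_i g_i$ minimising $\delta=\max_i\multideg(h_ig_i)$, then in the case $\multideg(f)<\delta$ rewrite the top-degree part via the telescoping identity as a combination of monomial multiples of $S$-polynomials, substitute the assumed reductions, and obtain a representation with strictly smaller $\delta$, contradicting minimality. The bookkeeping you flag (that $\mbf{x}^{\gamma_{jk}}\mid\mbf{x}^{\delta}$ and that each substituted summand has multidegree strictly below $\delta$, the latter relying on $\multideg\bigl(S(g_j,g_k)\bigr)<\gamma_{jk}$ together with the compatibility of the monomial order with addition) is exactly where the care is needed, and your outline handles it correctly. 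So there is nothing to compare against in the paper itself; your argument is the standard one and is sound.
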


Computing whether $S\zag{g_i,g_j} \rightarrow_{\mathcal{G}} 0$ for all possible pairs of polynomials in the basis $\mathcal{G}$ can be a tedious task. The following proposition tells us for which pairs of polynomials this is not needed.

\begin{proposition}\label{relprime}
Given a finite set $\mathcal{G} \subset \R\uglate{\mbf{x}}$, suppose that the leading monomials of $f,g \in \mathcal{G}$ 
are relatively prime, i.e.,
\begin{equation*}
\LCM\zag{\LM\zag{f}, \LM\zag{g}}=\LM\zag{f}\LM\zag{g},
\end{equation*}
then $S\zag{f,g}\rightarrow_{\mathcal{G}} 0$.
\end{proposition}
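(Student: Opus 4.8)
\noindent
The plan is to prove the sharper statement that $S(f,g)$ already reduces to zero using only $f$ and $g$, by writing $S(f,g)$ explicitly as a combination $a_f f + a_g g$ and then verifying the multidegree bookkeeping required by Definition~\ref{def:reducezero}. Since the $S$-polynomial is unchanged under rescaling $f$ or $g$, I would first assume without loss of generality that $f$ and $g$ are monic, and set $\LM(f)=\mbf{x}^{\bm{\alpha}}$, $\LM(g)=\mbf{x}^{\bm{\beta}}$, so that $f=\mbf{x}^{\bm{\alpha}}+p$ and $g=\mbf{x}^{\bm{\beta}}+q$ with $\multideg(p)<\bm{\alpha}$ (or $p=0$) and $\multideg(q)<\bm{\beta}$ (or $q=0$). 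The relative-primeness hypothesis says exactly that $\bm{\alpha}$ and $\bm{\beta}$ have disjoint supports, hence $\LCM(\LM(f),\LM(g))=\mbf{x}^{\bm{\alpha}+\bm{\beta}}$ and the top terms cancel:
\[
S(f,g)=\mbf{x}^{\bm{\beta}}f-\mbf{x}^{\bm{\alpha}}g=\mbf{x}^{\bm{\beta}}p-\mbf{x}^{\bm{\alpha}}q.
\]

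\noindent
Next I would substitute $\mbf{x}^{\bm{\beta}}=g-q$ and $\mbf{x}^{\bm{\alpha}}=f-p$ into this expression and use commutativity $pq=qp$ to obtain the identity $S(f,g)=p\,g-q\,f$. Taking $a_g=p$, $a_f=-q$, and all remaining coefficients zero then exhibits $S(f,g)$ as a combination of elements of $\mathcal{G}$ (recall $f,g\in\mathcal{G}$). It remains to check $\multideg(S(f,g))\geq\multideg(p\,g)$ and $\multideg(S(f,g))\geq\multideg(q\,f)$ whenever the respective term is nonzero; here $\multideg(p\,g)=\multideg(p)+\bm{\beta}$ and $\multideg(q\,f)=\multideg(q)+\bm{\alpha}$, which are precisely the multidegrees of the two monomials $\mbf{x}^{\bm{\beta}}p$ and $\mbf{x}^{\bm{\alpha}}q$ whose difference is $S(f,g)$.

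\noindent
The crux — and the only genuinely non-mechanical point — is to show these two leading monomials do not coincide, i.e.\ $\bm{\beta}+\multideg(p)\neq\bm{\alpha}+\multideg(q)$; granting this, no cancellation occurs and $\multideg(S(f,g))=\max\{\multideg(p)+\bm{\beta},\,\multideg(q)+\bm{\alpha}\}$, which dominates both $\multideg(p\,g)$ and $\multideg(q\,f)$, concluding via Definition~\ref{def:reducezero}. For the non-cancellation claim I would argue by contradiction: if $\bm{\beta}+\multideg(p)=\bm{\alpha}+\multideg(q)=:\bm{\delta}$, then for each index $i$ in the support of $\bm{\beta}$ we have $\alpha_i=0$ (disjoint supports), so $\delta_i=(\multideg q)_i$ and simultaneously $\delta_i=\beta_i+(\multideg p)_i\geq\beta_i$, while for $i$ outside the support of $\bm{\beta}$ the bound $\beta_i=0\leq(\multideg q)_i$ is trivial; hence $\mbf{x}^{\bm{\beta}}$ divides $\mbf{x}^{\multideg(q)}$, and since any monomial order satisfies $\mbf{x}^{\bm{u}}\mid\mbf{x}^{\bm{v}}\Rightarrow\mbf{x}^{\bm{u}}\leq\mbf{x}^{\bm{v}}$, this forces $\mbf{x}^{\bm{\beta}}\leq\mbf{x}^{\multideg(q)}$, contradicting $\multideg(q)<\bm{\beta}$. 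Finally I would dispatch the degenerate cases $p=0$, $q=0$, or $S(f,g)=0$ separately, where the conclusion is immediate. The bulk of the argument is pure bookkeeping; the disjoint-support/divisibility step is the only place any thought is required, and I expect it to be the main (mild) obstacle.
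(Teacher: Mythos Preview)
Your proof is correct. Note, however, that the paper does not actually supply a proof of this proposition: it is stated in the appendix as a standard background result (with implicit reference to the texts \cite{colios05,colios07}) and then used without further justification. Your argument is precisely the classical one --- write $S(f,g)=pg-qf$ via the substitution $\mbf{x}^{\bm{\alpha}}=f-p$, $\mbf{x}^{\bm{\beta}}=g-q$, and then use the disjoint-support condition to rule out cancellation of the leading terms --- as found, e.g., in Cox--Little--O'Shea. The divisibility step you flag as the ``only genuinely non-mechanical point'' is indeed the heart of the matter and is handled correctly; the appeal to the well-ordering property of monomial orders (so that $\mbf{x}^{\bm{u}}\mid\mbf{x}^{\bm{v}}$ implies $\mbf{x}^{\bm{u}}\leq\mbf{x}^{\bm{v}}$) is exactly what is needed there.
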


Therefore, to prove that the set $\mathcal{G} \subset \R\uglate{\mbf{x}} $ is a Gr{\"o}bner basis, it is enough to show that $S\zag{g_i,g_j}\rightarrow_{\mathcal{G}} 0$ for those $i <j$ where $\LM\zag{g_i}$ and $\LM\zag{g_j}$ are not relatively prime. 

\section*{Acknowledgements} 
We would like to thank Bernd Sturmfels, Daniel Plaumann, and Shmuel Friedland for helpful discussions and useful inputs to this paper. We would also like to thank James Saunderson and Hamza Fawzi for the deeper insights for the matrix case scenario. We acknowledge funding by the European Research Council through the grant StG 258926 and support by the Hausdorff Research Institute for Mathematics, Bonn through the trimester program Mathematics of Signal Processing. Most of the research was done while \v{Z}.\ S.\ was a PhD student at  the University of Bonn and employed at RWTH Aachen University.

\bibliographystyle{abbrv}
\bibliography{TensorBib}

\end{document}